\newsavebox{\algbox}
\definecolor{LinkColor}{rgb}{0.30,0.30,0.30}
\definecolor{CiteColor}{rgb}{0.25,0.40,0.6}
\definecolor{URLColor}{rgb}{0.3,0.5,0.7}
\def\ldelim{\ast}
\def\rdelim{\ast}
\newcommand{\TODO}[1][]{%
  \ifthenelse{\boolean{showcomments}}%
  {{\textcolor{red}{$\textcolor{red}{\ldelim}$\,\textbf{To-Do}\IfNE{#1}{\textbf{:} \emph{#1}}\,$\textcolor{red}{\rdelim}$}}}%
  {}%
}%
\newcommand{\?}[1]{%
  \ifthenelse{\boolean{showcomments}}%
  {\textcolor{blue}{\emph{$\textcolor{blue}{\ldelim}$\,#1\,$\textcolor{blue}{\rdelim}$}}}%
  {}%
}%
\newcommand{\IfNE}[2]{\ifthenelse{\isempty{#1}}{}{#2}}
\renewcommand\paragraph{\@startsection{paragraph}{8}{\z@}%
  {1ex \@plus1ex \@minus.2ex}%
  {-.1em}%
  {\normalfont\normalsize\itshape}}
\NewDocumentCommand{\InlineClaim}{+m !O{}}{\emph{#1}#2}
\newtheorem{theorem}{Theorem}[section]
\newtheorem{corollary}[theorem]{Corollary}
\newtheorem{lemma}[theorem]{Lemma}
\newtheorem{prop}[theorem]{Prop}
\newcommand{\NumberEqn}{\stepcounter{equation}\tag{\theequation}}
\newcommand{\TextEnum}[1]{(#1)}
\newlist{EnumerateInline}{enumerate*}{3}
\setlist[EnumerateInline,1]{label={(\roman*)}}
\setlist[EnumerateInline,2]{label={(\alph*)}}
\setlist[EnumerateInline,3]{label={(arabic*)}}
\newcommand{\R}{\mathbb{R}}
\newcommand{\Q}{\mathbb{Q}}
\newcommand{\Z}{\mathbb{Z}}
\let\emptyset\phi
\newcommand{\OperatorName}[1]{\operatorname{\mathsf{#1}}}
\newcommand{\OperatorNameWithLimits}[1]{\operatornamewithlimits{\mathsf{#1}}}
\newcommand{\Event}[1]{\mathcal{#1}}
\newcommand{\Set}[1]{\mathcal{#1}}
\NewDocumentCommand{\BigO}{t'}{\IfBooleanTF{#1}{\tilde{O}}{O}}
\NewDocumentCommand{\BigTheta}{t'}{\IfBooleanTF{#1}{\tilde{$\Theta$}}{$\Theta$}}
\NewDocumentCommand{\BigOmega}{t'}{\IfBooleanTF{#1}{\tilde{$\Omega$}}{$\Omega$}}
\newcommand{\Sign}{\OperatorName{sign}}
\newcommand{\Supp}{\OperatorName{supp}}
\newcommand{\Ker}{\OperatorName{ker}}
\newcommand{\Span}{\OperatorName{span}}
\RenewDocumentCommand{\Vec}{s +m}{\bm{\mathrm{#2}}}
\NewDocumentCommand{\Mat}{s +m}{\bm{\mathrm{#2}}}
\newcommand{\Bernoulli}{\OperatorNameWithLimits{Bernoulli}}
\newcommand{\Sphere}[2][]{\ifthenelse{\isempty{#1}}{\mathcal{S}^{#2-1}}{\mathcal{S}^{#2}}}
\NewDocumentCommand{\Th}{t'}{\IfBooleanTF{#1}{^{\prime\mathrm{th}}}{^{\mathrm{th}}}}
\newcommand{\defeq}{\triangleq}
\newcommand{\dStep}{\longrightarrow~~}
\newcommand{\dCmt}[1][\qquad]{#1\blacktriangleright}
\newcommand{\cIf}{\text{if}\ }
\newcommand{\Text}[1]{\ \text{#1}\ }
\newcommand{\THEOREM}{Theorem\xspace}
\newcommand{\COROLLARY}{Corollary\xspace}
\newcommand{\LEMMA}{Lemma\xspace}
\newcommand{\EQUATION}{Equation\xspace}
\newcommand{\ALGORITHM}{Algorithm\xspace}
\newcommand{\LINE}{Line\xspace}
\newcommand{\SECTION}{Section\xspace}
\newcommand{\LINES}{Lines\xspace}
\newcommand{\tab}{\ensuremath{~~~~}}
\newcommand{\Tab}[1][1]{%
    \ifthenelse{#1>0}{\tab}{}%
    \ifthenelse{#1>1}{\tab}{}%
    \ifthenelse{#1>2}{\tab}{}%
    \ifthenelse{#1>3}{\tab}{}%
    \ifthenelse{#1>4}{\tab}{}%
    \ifthenelse{#1>5}{\tab}{}%
    \ifthenelse{#1>6}{\tab}{}%
    \ifthenelse{#1>7}{\tab}{}%
    \ifthenelse{#1>8}{\tab}{}%
    \ifthenelse{#1>9}{\tab}{}%
    \ifthenelse{#1>10}{\tab}{}%
    \ifthenelse{#1>11}{\tab}{}%
    \ifthenelse{#1>12}{\tab}{}%
    \ifthenelse{#1>13}{\tab}{}%
    \ifthenelse{#1>14}{\tab}{}%
    \ifthenelse{#1>15}{\tab}{}%
    \ifthenelse{#1>16}{\tab}{}%
    \ifthenelse{#1>17}{\tab}{}%
    \ifthenelse{#1>18}{\tab}{}%
    \ifthenelse{#1>19}{\tab}{}%
    \ifthenelse{#1>20}{\tab}{}%
}
\newcommand{\Variable}[1]{#1}
\newcommand{\Function}[1]{#1}
\newcommand{\Ix}[1]{#1}
\newcommand{\Iter}[1]{#1}
\newcommand{\Polynomial}[1]{#1}
\newcommand{\PrimeSet}{\Set{Q}}
\newcommand{\PrimeConst}{q}
\newcommand{\AlgIndSet}{\Set{Q}}
\newcommand{\AlgIndConst}{q}
\NewDocumentCommand{\MeasLDMat}{s}{\IfBooleanTF{#1}{\Mat*{B}}{\Mat{B}}}
\NewDocumentCommand{\MeasLUMat}{s}{\IfBooleanTF{#1}{\Mat*{B}}{\Mat{B}}}
\NewDocumentCommand{\MeasMat}{s}{\IfBooleanTF{#1}{\Mat*{A}}{\Mat{A}}}
\NewDocumentCommand{\MeasRow}{s}{\IfBooleanTF{#1}{\Vec*{A}}{\Vec{A}}}
\NewDocumentCommand{\MeasCol}{s}{\IfBooleanTF{#1}{\Vec*{A}}{\Vec{A}}}
\NewDocumentCommand{\SubMeasMat}{s}{\IfBooleanTF{#1}{\Mat*{\tilde{A}}}{\Mat{\tilde{A}}}}
\NewDocumentCommand{\SubMeasRow}{s}{\IfBooleanTF{#1}{\Vec*{\tilde{A}}}{\Vec{\tilde{A}}}}
\NewDocumentCommand{\SubMeasCol}{s}{\IfBooleanTF{#1}{\Vec*{\tilde{A}}}{\Vec{\tilde{A}}}}
\NewDocumentCommand{\Response}{s}{\IfBooleanTF{#1}{\Vec*{y}}{\Vec{y}}}
\NewDocumentCommand{\Signal}{s}{\IfBooleanTF{#1}{\Vec*{x}}{\Vec{x}}}
\NewDocumentCommand{\SubSignal}{s}{\IfBooleanTF{#1}{\Vec*{\tilde{x}}}{\Vec{\tilde{x}}}}
\newcommand{\LDName}{list disjunct\xspace}
\newcommand{\MeasLUParams}{(n, m, k, d, k, \Function{g},\frac{1}{2})}
\newcommand{\LUMName}{list union-free matrix\xspace}
\newcommand{\LUdelta}{\delta}
\newcommand{\CCg}{g}
\newcommand{\CCdelta}{\delta}
\newcommand{\CCk}{k}
\newcommand{\CCn}{n}
\newcommand{\CCm}{m}
\NewDocumentCommand{\CCB}{s}{\IfBooleanTF{#1}{\Mat*{M}}{\Mat{M}}}
\NewDocumentCommand{\CCBCol}{s}{\IfBooleanTF{#1}{\Vec*{M}}{\Vec{M}}}
\newcommand{\CCt}{t}
\newcommand{\CCp}{p}
\newcommand{\SignalConstN}{u}
\newcommand{\SignalConstD}{v}
\newcommand{\SignalConstLCM}{w}
\newcommand{\SignalConst}{z}
\newcommand{\SolutionSet}{\Set{C}}
\newcommand{\ForLoop}{\textsf{for}-loop\xspace}
\newcommand{\ForLoops}{\textsf{for}-loops\xspace}
\newtheorem{conj}{Conjecture}
\newtheorem{thm}{Theorem}
\newtheorem{defn}[conj]{Definition}
\newtheorem{coro}{Corollary}
\newcommand{\f}[1]{\boldsymbol{#1}}
\newcommand{\bb}[1]{\mathbb{#1}}
\newcommand{\fl}[1]{\mathbf{#1}}
\newcommand{\ca}[1]{\mathcal{#1}}
\newcommand{\s}[1]{\mathsf{#1}}
\newcommand{\lr}[1]{\left|\left|{#1}\right|\right|}
\newcommand{\remove}[1]{}
\begin{document}

\title{Improved Support Recovery in Universal One-bit \\Compressed Sensing}

\author{
  Namiko Matsumoto$^{1}$ \quad Arya Mazumdar$^{1}$  \quad Soumyabrata Pal$^{2}$\\
 $ ^{1}$ University of California San Diego\\
 $ ^{2}$ Google Research, India
 \thanks{ \texttt{\{nmatsumo,arya\}@ucsd.edu, soumyabrata@google.com}. This research is supported by NSF  awards  2217058, 2133484, and 2127929. Some results of this paper were presented  at the the 13th Innovations in Theoretical Computer Science (ITCS) conference, 2022 \cite{mazumdar2022support}.}
}

\maketitle
\allowdisplaybreaks
\begin{abstract}
  One-bit compressed sensing (1bCS) is an extremely quantized signal acquisition method that has been proposed and studied rigorously in the past decade. In  1bCS, linear samples of a high dimensional signal are quantized to only one bit per sample (sign of the measurement). The extreme quantization makes it an interesting case study of the more general single-index or generalized linear models. At the same time it can also be thought of as a `design' version of learning a binary linear classifier or halfspace-learning.
  
  Assuming the original signal vector to be sparse, existing results in 1bCS either aim to find the support of the vector, or approximate the signal allowing a small error. 
  The focus of this paper is support recovery, which often also computationally facilitate approximate signal recovery.    A {\em universal} measurement matrix for 1bCS  refers to one set of measurements that work {\em for all} sparse signals. With universality, it is known that $\tilde{\Theta}(k^2)$ 1bCS measurements are necessary and sufficient for support recovery (where $k$ denotes the sparsity). To improve the dependence on sparsity from quadratic to linear, in this work we propose approximate support recovery (allowing $\epsilon>0$ proportion of errors), and superset recovery (allowing $\epsilon$ proportion of false positives). 
  We show that the first type of recovery is possible with $\tilde{O}(k/\epsilon)$ measurements, while the later type of recovery, more challenging, is possible with  $\tilde{O}(\max\{k/\epsilon,k^{3/2}\})$ measurements. We also show that in both cases $\Omega(k/\epsilon)$ measurements would be necessary for universal recovery. 

  Improved results are possible if we consider universal recovery within a restricted class of signals, such as rational signals, or signals with bounded dynamic range. In both cases superset recovery is possible with only $\tilde{O}(k/\epsilon)$ measurements.  Other results on universal but approximate support recovery are also provided in this paper. All of our main recovery algorithms are simple and polynomial-time.
  
  \vspace{0.1in}
  
  {\bf keywords}: sparsity, support recovery, compressed sensing, disjunct Matrices, designs
\end{abstract}



\section{Introduction}
One-bit compressed sensing (1bCS) is a sampling mechanism for high-dimensional sparse signals, introduced first by Boufounos and Baraniuk \cite{DBLP:conf/ciss/BoufounosB08}. 
The method of obtaining signals by taking few linear projections is known as compressed sensing~\cite{DBLP:journals/tit/Donoho06,candes2006robust}. Given the success of compressed sensing, two points can be noted. First, it is impossible to record real numbers in digital systems without quantization; second, sampling with nonlinear operators can potentially be useful. One-bit compressed sensing is a case-study in both of these fronts.
In terms of quantization, this is the extreme setting where only one bit per sample is acquired. In terms of nonlinearity, this is the one of the simplest example of a single-index model~\cite{plan2016generalized}: $y_i = f(\langle \fl{a}_i, \fl{x} \rangle), i =1, \dots, m$, where $f$ is a coordinate-wise nonlinear operation. In the particular case of $f$ being the $\mathrm{sign}$ function, the model is also the same as that of a simple binary hyperplane classifier. 
For these reasons, 1bCS is also studied with some interest in the last few years, for example, in \cite{DBLP:conf/ciss/HauptB11,GNJN13,ABK17,DBLP:journals/tit/PlanV13,DBLP:conf/aistats/Li16,matsumoto2022binary}. 

Most of the existing results either aim for support recovery, or approximate vector recovery for the signal from nonadaptive measurements. It is assumed that the original signal $\fl{x} \in \bb{R}^n$ is $k$-sparse, or has at most $k$ nonzero entries (also written as $\|\fl{x}\|_0 \le k).$ The support recovery results aim to recover the coordinates that have nonzero values; whereas the approximate vector recovery results aim to reconstruct the vector up to some Euclidean distance. It is known that recovering the support can be useful in terms of making the approximate recovery part computationally fast~\cite{GNJN13}.  In this paper we  restrict ourselves to support recovery.

A notion that is going to be important moving forward in this paper is that of {\em universality}. A set of measurements (can be stacked in form of a matrix) is called {\em universal} if a recovery guarantee can be given {\em for all} sparse signals. Universal measurements are desirable in any practical application, including hardware design, since one does not have to change the measurement vectors every time for a new signal. Note that, in the canonical works of compressed sensing, the measurement matrices are almost always shown to be universal (i.e., Gaussian or Bernoulli matrices are universal reconstruction matrices with high probability).  

This brings the natural question, how many measurements are necessary and sufficient for support recovery in universal 1bCS? A simple counting bound shows that $\Omega(k \log (n/k))$ measurements are required, where $k$ and $n$ refers to the sparsity and dimension of the signal respectively. This naive bound has been improved recently, and it was shown that, in fact $\Omega(k^2 \log n / \log k)$ measurements are required for universal support recovery~\cite{ABK17}.
What about sufficient number of measurements? Using measurements given by some combinatorial designs, it was shown that $O(k^2 \log n)$ measurements are enough for support recovery~\cite{ABK17}, thereby leaving only a gap of $O(\log k)$-factor between upper and lower bounds.

The price of universality on the other hand is quite steep. Without the requirement that the measurement matrix work for all signals, it turns out that the number of sufficient measurements for support recovery is $O(k \log n)$~\cite{DBLP:conf/ciss/HauptB11}. Therefore, to impose universality, the number of measurements must grow by a factor of $\tilde{\Omega}(k)$. In this work we show that by allowing a few false positives it is possible to substantially bring down this gap. In fact, it is possible to recover entirety of the support with at most $\epsilon k$ false positives, $\epsilon >0$, with only $O(\max\{k/\epsilon,k^{3/2}\} \log (n/k))$ universal measurements. This result can be improved to $O(k/\epsilon \log (n/k))$ when either a) we allow a few false negatives, or b) we have knowledge about the dynamic range of the signal. This practically cancels the the penalty that one has to pay for universality.

Note that, while allowing few false positives were considered in \cite{flodin2019superset}, their results were only restricted to positive signal vectors, and therefore not truly universal.

\subsection{Key difference from group testing, binary matrices, and technical motivation}\label{sec:differences}
Support recovery in the 1bCS problem has some similarity/connection with the combinatorial {\em group testing} problem~\cite{du2000combinatorial}. In group testing, the original signal $\fl{x}$ is binary (has only $0$s and $1$s), and the measurement matrix has to be binary as well. While in the original compressed sensing problem the main tools are linear algebraic and relate to isometric embeddings, in  group testing most tools are combinatorial and relate to a variety of set systems.

As noted in \cite{ABK17}, group testing and 1bCS have many parallels. Indeed, for universal support recovery, measurement matrices were constructed using union-free set systems, similar to group testing. The upper and lower bound on the number of measurements required for support recovery in 1bCS is also same as group testing (i.e., $O(k^2 \log n)$ and $\Omega(k^2 \log n/ \log k)$). It is therefore believable that by relaxing the recovery condition to allow some false positives, one will obtain an improvement in terms of number of measurements in 1bCS, as in the case of group testing \cite{ngo2011efficiently}. What is more, perhaps support recovery in 1bCS can be performed with a {\em binary} matrix, as in the case of group testing.

Indeed, using a modification of the standard matrices for group testing, as well as using a modified recovery algorithm, Acharya et al.~\cite{ABK17} were able to use $O(k^2 \log n)$ measurements for exact recovery of the support. This is within a $\log k$ factor of the lower bound and achieved with a binary measurement matrix. However, when subsequently  recovery with some false positives were tried~\cite{flodin2019superset}, the group testing performance could not be replicated. In fact, it turned out there were no improvement from the $O(k^2 \log n)$ upper bound in 1bCS if universality is to be preserved.

The main reason why this happens is the following. When a vector $\fl{x}$ is measured with a measurement vector $\fl{a}$ in group testing, an output of $0$ implies that the supports of $\fl{x}$ and $\fl{a}$ do not intersect. Whereas, in 1bCS, it can simply mean that $\fl{x}$ and $\fl{a}$ are orthogonal. To be sure of what the measurement outcome of $0$ implies in 1bCS, one need to increase the number of measurements by a factor of $k$ - which leads to a much suboptimal result in recovery with false positives in 1bCS compared to group testing. In the case of exact recovery, this does not  affect much because of the nature of a measurement matrix and decoding algorithm~\cite{ABK17}; but that technique does not extend to recovery with false positives.

This leads us to believe that a binary measurement matrix may not be optimal in all settings of support recovery in 1bCS, although for support recovery using binary matrices is the standard~\cite{GNJN13,ABK17}. 
Indeed, using a carefully designed non-binary matrix we can perform recovery with only small number of false positives using $O(\max\{k/\epsilon,k^{3/2}\} \log n)$ measurements. In this setting anything $o(k^2)$ was  elusive. On the other hand,
we show that using a binary matrix it is possible to do approximate recovery using $O(k\log n)$ measurements (a recovery that contains a small proportion of false positives and false negatives).  For precise results, see Table \ref{table:support}.

\subsection{Notations} 

We write $[n]$ to denote the set $\{1,2,\dots,n\}$. We use $\bb{R}$ to denote the set of reals and $\bb{Q}$ to denote the set of rational numbers. We use $a\bigvee b$ as shorthand to denote the quantity $\max(a,b)$.
For any  $\fl{v} \in \bb{R}^n$, we use $\fl{v}_i$ to denote the $i^{\s{th}}$ coordinate of $\fl{v}$ and for any ordered set $\ca{S} \subseteq [n]$, we will use the notation $\fl{v}_{\mid \ca{S}} \in \bb{R}^{|\ca{S}|}$ to denote the vector $\fl{v}$ restricted to the indices in $\ca{S}$. Furthermore, we will use $\s{supp}(\fl{v}) \triangleq \{i \in [n]:\fl{v}_i \neq 0\}$ 
to denote the support of $\fl{v}$ and $\left|\left|\fl{v}\right|\right|_0 \triangleq \left|\s{supp}(\fl{v})\right|$ to denote the size of the support. We define the {\em dynamic range} $\kappa(\fl{v})$ of the vector $\fl{v}$ to be the ratio of the magnitudes of maximum and minimum non-zero entries of $\fl{v}$ i.e.
\begin{align*}
  \kappa(\fl{v}) \triangleq \frac{\max_{i \in [n]: \fl{v}_i \neq 0} \left|\fl{v}_i\right|}{\min_{i \in [n]: \fl{v}_i \neq 0} \left|\fl{v}_i\right|}.
\end{align*}
For a  vector  $\fl{v}\in \bb{R}^n$, let us denote by $\rho(\fl{v}) \triangleq \min(\left|\{i \in [n]:\fl{v}_i>0\}\right|,\left|\{i \in [n]:\fl{v}_i<0\}\right|)$,
the minimum number of non-zero entries of the same sign in $\fl{v}$. Finally, let $\s{sign}:\bb{R}\rightarrow \{-1,0,+1\}$ be a function that returns the sign of a real number i.e. for any input $x \in \bb{R}$, 
\begin{align*}
    \s{sign}(x) = 
\begin{cases}
 1 \quad \text{if $x > 0$}\\
 0 \quad \text{if $x = 0$}\\
 -1 \quad \text{if $x < 0$}
\end{cases}.
\end{align*}
Note that the range of the sign function has size $3,$ therefore using this at the output of a measurement will not technically be a $1$-bit information. Consider the true 1-bit sign function $\s{sign^\ast}:\bb{R}\rightarrow \{-1,+1\}$, where  
\begin{align*}
    \s{sign^\ast}(x) = 
\begin{cases}
 1 \quad \text{if $x \ge 0$}\\
 -1 \quad \text{if $x < 0$}
\end{cases}.
\end{align*}
It is possible to evaluate $\s{sign}(x)$ from $\s{sign^\ast}(x)$ and $\s{sign^\ast}(-x)$ for any $x \in \bb{R}$. Therefore all the results related to the $\s{sign}$ function holds for the $\s{sign^\ast}$ function with the number of measurements being within a factor of $2.$

Extending this notation for a vector $\fl{v}\in \bb{R}^n$, let $\s{sign}(\fl{v})\in \{-1,0,1\}^n$ be a vector comprising the signs of coordinates of $\fl{v}$. More formally, we have $\s{sign}(\fl{v})_i = \s{sign}(\fl{v}_i) $ for all $i \in [n]$. For any matrix $\fl{M}\in \bb{R}^{m \times n}$ 
and any set $\ca{S}\subseteq[n]$, we will write $\fl{M}_{\ca{S}}\in \bb{R}^{m \times \left|\ca{S}\right|}$ to denote the sub-matrix formed by the columns constrained to the indices in $\ca{S}$. We will write $\fl{M}_{ij}$ to denote the entry in the $i^{\s{th}}$ row and $j^{\s{th}}$ column of $\fl{M}$.
 We denote the $i^{\s{th}}$ row and $j^{\s{th}}$ column of $\fl{M}$ by $\fl{M}^i$ and the $\fl{M}_j$ respectively.
Finally, we will use $\s{col}(\fl{M})$ to denote the set of columns of the matrix $\fl{M}$. 

\subsection{Formal Problem Statement}\label{sec:prob_stat} 

 Consider an unknown sparse signal  $\fl{x} \in \bb{R}^n$ with $\left|\left|\fl{x}\right|\right|_0 \le k$. In the 1bCS framework, we design a sensing matrix $\fl{A}\in \bb{R}^{m \times n}$ to obtain the measurements of $\fl{x}$ as 
 \begin{align*}
     \fl{y} = \s{sign}(\fl{Ax}).
 \end{align*}

  In this work, we primarily consider the problem of \textit{support recovery} where our goal is to design the sensing matrix $\fl{A}$ with minimum number of measurements (rows of $\fl{A}$) so that we can recover the support of $\fl{x}$ from $\s{sign}(\fl{Ax})$. Our goal is to design \textit{universal} sensing matrices which fulfil a given objective \textit{for all} unknown $k$-sparse signal vectors. We look at three different notions of universal support recovery as defined below: 
  
  \begin{defn}[universal exact support recovery]
   A measurement matrix $\fl{A}\in \bb{R}^{m \times n}$ is called a \emph{universal exact support recovery} scheme if  there exists a recovery algorithm that,  for all $\fl{x}\in \bb{R}^n, \|\fl{x}\|_0 \le k$, returns $\s{supp}(\fl{x})$ on being provided $\s{sign}(\fl{Ax})$ as input. 
  \end{defn}
 

  \begin{defn}[universal $\epsilon$-approximate  support recovery]\label{def:approx}
    Fix any $0<\epsilon<1$. A measurement matrix $\fl{A}\in \bb{R}^{m \times n}$ is called a \emph{universal $\epsilon$-approximate support recovery} scheme if  there exists a recovery algorithm that, for all $\fl{x}\in \bb{R}^n, \|\fl{x}\|_0 \le k$, returns a set $\ca{S}\subseteq[n], \left|\ca{S}\right| \le \lr{\fl{x}}_0,$ satisfying $\left|\ca{S} \cap\s{supp}(\fl{x})\right| \ge \left|\left|\fl{x}\right|\right|_0(1-\epsilon)$ and $\left|\ca{S}\setminus\s{supp}(\fl{x})\right|\le \epsilon \lr{\fl{x}}_0$ on being provided $\s{sign}(\fl{Ax})$ as input. 
  \end{defn}
Evidently, the $\epsilon$-approximate  support recovery schemes allow for recovery with a small ($2\epsilon k$) number of errors (which may include $\epsilon k$ false positives and $\epsilon k$ false negatives).

  \begin{defn}[universal $\epsilon$-superset recovery]\label{def:superset}
   Fix any $0<\epsilon<1$. A measurement matrix $\fl{A}\in \bb{R}^{m \times n}$is called a \emph{universal $\epsilon$-superset recovery} scheme if  there exists a recovery algorithm that,  for all $\fl{x}\in \bb{R}^n, \|\fl{x}\|_0 \le k$, returns a set $\ca{S}\subseteq[n], \left|\ca{S}\right| \le \|\fl{x}\|_0(1+\epsilon)$ satisfying $\s{supp}(\fl{x}) \subseteq \ca{S}$ on being provided $\s{sign}(\fl{Ax})$ as input. 
  \end{defn}
  
  \begin{prop}\label{prop:compare}
  Any measurement matrix $\fl{A}\in \bb{R}^{m \times n}$ that is a universal $\epsilon$-superset recovery scheme is also a universal $\epsilon$-approximate recovery scheme.
  \end{prop}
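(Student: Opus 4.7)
The plan is to build an $\epsilon$-approximate recovery algorithm on top of the given $\epsilon$-superset recovery algorithm, so that the very same measurement matrix $\fl{A}$ serves both guarantees. First, I would run the superset decoder on $\s{sign}(\fl{A}\fl{x})$ to obtain a set $\ca{T}\subseteq[n]$ with $\s{supp}(\fl{x})\subseteq\ca{T}$ and $|\ca{T}|\le(1+\epsilon)\lr{\fl{x}}_0$. Then I would output any subset $\ca{S}\subseteq\ca{T}$ of size exactly $\lceil |\ca{T}|/(1+\epsilon)\rceil$, dropping the remaining $|\ca{T}|-|\ca{S}|$ indices arbitrarily; which indices are dropped will turn out to be immaterial.

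The core observation is the elementary integer inequality $\lceil |\ca{T}|/(1+\epsilon)\rceil\le\lr{\fl{x}}_0$, which holds because $|\ca{T}|/(1+\epsilon)\le\lr{\fl{x}}_0$ and $\lr{\fl{x}}_0\in\bb{Z}$. This yields the size bound $|\ca{S}|\le\lr{\fl{x}}_0$ directly. For the intersection bound, the number of indices dropped from $\ca{T}$ satisfies $|\ca{T}|-\lceil |\ca{T}|/(1+\epsilon)\rceil \le |\ca{T}|-|\ca{T}|/(1+\epsilon)=\epsilon|\ca{T}|/(1+\epsilon)\le\epsilon\lr{\fl{x}}_0$, and since $\s{supp}(\fl{x})\subseteq\ca{T}$, at most this many true support elements can possibly be lost, giving $|\ca{S}\cap\s{supp}(\fl{x})|\ge(1-\epsilon)\lr{\fl{x}}_0$. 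Finally, $|\ca{S}\setminus\s{supp}(\fl{x})|\le|\ca{T}\setminus\s{supp}(\fl{x})|\le|\ca{T}|-\lr{\fl{x}}_0\le\epsilon\lr{\fl{x}}_0$.

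No step here is technically difficult; the only minor subtlety I would highlight is that the decoder is blind to the true value $\lr{\fl{x}}_0$, so the target size of $\ca{S}$ must be computable from the visible quantity $|\ca{T}|$ alone. Using $\lceil|\ca{T}|/(1+\epsilon)\rceil$ as a decoder-visible lower bound on $\lr{\fl{x}}_0$ is exactly what lets one choose a safe size without knowing $\lr{\fl{x}}_0$, and the slack $\epsilon|\ca{T}|/(1+\epsilon)$ is simultaneously small enough to preserve the $(1-\epsilon)$ fraction of the true support.
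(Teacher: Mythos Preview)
Your proof is correct and follows essentially the same approach as the paper: both arguments retain a $1/(1+\epsilon)$ fraction of the superset $\ca{T}$ (the paper phrases this as deleting a $\tau=\epsilon/(1+\epsilon)$ fraction of $\ca{T}$) and then verify the three required bounds by the same elementary counting. Your version is slightly more careful about integer rounding via the ceiling, and your explicit remark that the target size must be computable from $|\ca{T}|$ alone (since $\lr{\fl{x}}_0$ is unknown to the decoder) is a point the paper leaves implicit.
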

  
  \begin{proof}
  Consider a measurement matrix $\fl{A}\in \bb{R}^{m \times n}$ that is a universal $\epsilon$-superset recovery scheme. This implies that there exists a recovery algorithm $\ca{A}$ that,  for all $\fl{x}\in \bb{R}^n, \|\fl{x}\|_0 \le k$, returns a set $\ca{S}\subseteq[n], \left|\ca{S}\right| \le \|\fl{x}\|_0(1+\epsilon)$ satisfying $\s{supp}(\fl{x}) \subseteq \ca{S}$ on being provided $\s{sign}(\fl{Ax})$ as input.  For a fixed $\fl{x}\in \bb{R}^n, \|\fl{x}\|_0 \le k$, we can compute a set $\ca{S}'$ by deleting any $\tau \left|\ca{S}\right|$ (with $\tau=\epsilon/(1+\epsilon)$) indices from the set $\ca{S}$ returned by Algorithm $\ca{A}$. Clearly, the set $\ca{S}'$ has a size of at most $\lr{\fl{x}}_0(1+\epsilon)(1-\tau) \le \lr{\fl{x}}_0$ and furthermore, $\left|\ca{S}'\cap \s{supp}(\fl{x})\right| \ge \left|\left|\fl{x}\right|\right|_0(1-\tau(1+\epsilon))=\left|\left|\fl{x}\right|\right|_0(1-\epsilon)$ implying that $\left|\ca{S}'\setminus\s{supp}(\fl{x})\right| \le \epsilon \lr{\fl{x}}_0$. Hence $\fl{A}$ is a universal $\epsilon$-approximate recovery scheme.
  \end{proof}
  
  For all sparse vectors $\fl{x}\in \bb{R}^n,\lr{\fl{x}}_0 \le k$, the $\epsilon$-superset recovery schemes allow for support recovery with only a small ($\epsilon \lr{\fl{x}}_0$) number of false positives and 0 false negative. As mentioned in \cite{flodin2019superset},  an $\epsilon$-superset recovery scheme makes subsequent approximate vector recovery computationally and statistically efficient, as instead of focusing on all $n$ coordinates, one can focus on only $O(k)$ coordinates.
  Furthermore, notice that Definition \ref{def:superset} poses a stricter recovery requirement than Definition \ref{def:approx}, and therefore should require more measurements.

We study measurement complexity (number of required measurements) of the three aforementioned notions of support recovery for  general $k$-sparse input signals, as well as for the setting where additional side information on the input vector $\fl{x}$ is known. 
In the later case, the following two scenarios were considered: 1) $\fl{x}$ has   dynamic range bounded by a known number 2) The minimum number of non-zero entries of $\fl{x}$ having the same sign is known to be bounded from above. The reason for considering these two scenarios is the following. The first generalizes the result for binary vectors (studied in \cite{ABK17}), and the second generalizes the result for positive vectors (studied in \cite{flodin2019superset}).


\remove{\begin{defn}[$(\epsilon, \beta)-$approximate vector recovery]\label{def:approx_vector}
    Fix any $0<\epsilon,\beta<1$. A random matrix $\fl{A}\in \bb{R}^{m \times n}$ can be used for \textit{$(\epsilon, \beta)-$approximate vector recovery} of $k$-sparse vectors (in $m$ measurements) if, with probability at least $1-\beta$, for any $\fl{x}\in \bb{R}^n$ satisfying $\left|\left|\fl{x}\right|\right|_0 \le k$,  there exists a recovery algorithm that returns $\fl{\hat{x}}$  satisfying $\left|\left|\frac{\fl{x}}{\left|\left|\fl{x}\right|\right|_2}- \frac{\fl{\hat{x}}}{\left|\left|\fl{\hat{x}}\right|\right|_2}\right|\right|_2 \le \epsilon $ on being provided $\s{sign}(\fl{Ax})$ as input. 
  \end{defn}

 Notice that in Definition \ref{def:approx_vector}, the sensing matrix $\fl{A}$ need not work \textit{for all} unknown $k$-sparse signal vectors but \textit{for any} unknown $k$-sparse signal vector, it should work with high probability.  As a corollary, we show that our results on universal support recovery lead to improved bounds on $(\epsilon,\beta)- $approximate recovery.   
}

\subsection{Our Results}

Our main contribution is to provide algorithms and upper bounds on the measurement complexity for the three distinct notions of support recovery.
Our results (summarized in Table \ref{table:support}) resolve a number of open questions raised in \cite{flodin2019superset} and improves upon previously known bounds. 
Our main techniques involve utilizing novel modifications or generalization of well-known combinatorial structures such as Disjunct matrices and Cover-free families used primarily in group testing  literature~\cite{du2000combinatorial,ngo2011efficiently,barg2017group}. 

First, note that with $n$ measurements, it is always possible to recover the support trivially. For universal exact support recovery, the state of the art scheme with $O(k^2\log n)$ number of measurements is given by \cite{ABK17}. The construction is based on Robust Union-Free Families (RUFF), a set system with some combinatorial property that will be discussed later.
When it is known that the signal $x$ is binary (alternatively, a set of measurements that work for all binary vector $x \in \{0,1\}^n$), there exist a exact recovery scheme with $O(k^{3/2}\log(n/k))$ measurements~\cite{ABK17,JLBB13}. For this purpose, a set of Gaussian measurements are capable of universal recovery with high probability.

\paragraph{Universal $\epsilon$-superset recovery.} To reduce the number of measurements from the order of $k^2$ to $k$, recovering a superset is proposed in \cite{flodin2019superset}. However, the technique therein does not work for all signals, but only vectors with nonnegative coordinates. As pointed out in \cite{flodin2019superset}, universal $\epsilon$-superset recovery still takes $O(k^2 \log n)$ measurements.
In this paper, our main contribution is to  use
combinatorial designs to show a measurement complexity of $O\Big(\frac{k}{\epsilon}\log \frac{n}{k} \bigvee k^{\frac{3}{2}}\log \frac{n}{k} \Big)$ for universal $\epsilon$-superset recovery\footnote{In an earlier version of this paper \cite{mazumdar2022support}, presented in a conference, we proposed an algorithm for universal $\epsilon$-superset recovery with $O(k^{3/2}\epsilon^{-1/2}\log (n/k))$ measurements, which is strictly improved in this version.}. 
We also prove that $\Omega\Big(\frac{k}{\epsilon} \Big(\log \frac{k}{\epsilon}\Big)^{-1}\log \frac{n}{\epsilon k}\Big)$ measurements are necessary for $\epsilon$-superset recovery.
This is a significant reduction in the gap between the upper bound and the linear lower bound; the dependence on $k$ is reduced to only linear for a regime in the upper bound. 
Note that, when we substitute $\epsilon = 1/k$ in the above two results, we see that for exact recovery we need $O(k^2 \log (n/k)$ measurements, recovering prior result. Therefore, our results give a smooth degradation in measurement complexity, as we seek a more accurate recovery.

When an upper bound on the the dynamic range is known, or the minimum non-zero entries of the unknown signal vector having same sign is known to be a constant, we improve the measurement complexity to $O(k\epsilon^{-1}\log (n/k))$. 

Finally, we also show that $O(k\epsilon^{-1}\log (n/k))$ measurements are sufficient for superset recovery when the signal is known to be rational. 




\paragraph{Universal $\epsilon$-approximate support recovery.} For approximate recovery of support, no direct prior results exist, however any algorithm for  $\epsilon$-superset recovery provides $\epsilon$-approximate support recovery guarantee trivially. We introduce a generalization of the robust union free families,  namely List union-Free family and use its properties to show that  $O(k\epsilon^{-1}\log (n/k))$ measurements are sufficient in the general case, a strict improvement on the superset recovery. We also prove that this guarantee is tight up to logarithmic factors by showing that 
$\Omega\Big(\frac{k}{\epsilon} \Big(\log \frac{k}{\epsilon}\Big)^{-1}\log \frac{n}{\epsilon k}\Big)$ measurements are necessary for universal $\epsilon$-approximate recovery.

When the dynamic range of the unknown signal vector is  bounded from above by a known quantity $\eta$, we improve the measurement complexity to $O(k\eta\epsilon^{-1/2}\log (n\eta))$ (thus beating the lower bound above by a factor of $\frac{1}{\sqrt{\epsilon}}$). Note again that, if we substitute $\epsilon= 1/k$, we recover a generalization of the existing result on universal recovery for binary vectors, i.e., we recover the $k^{3/2}$ scaling.

Our results on sufficient number of measurements for universal support recovery are summarized in the table below.

\begin{table*}[htbp]
\centering
\scalebox{0.95}{
\begin{tabular}{ |c|c|c|c|c|c|c|} 
\hline
Problem &  $\fl{x} \in \bb{R}^n$ & $\fl{x} \in \bb{Q}^n$ &  $\fl{x} \in \bb{R}^n: \kappa(\fl{x})\le \eta$ & $\fl{x} \in \{0,1\}^n$ & $\fl{x} \in \bb{R}^n$ (lower bound)\\
\hline
\hline
 Exact  & $O(k^2\log n)$ \cite{ABK17} & $O(k^2\log n)$ \cite{ABK17} & $O(k^2\log \frac{n}{k})$  & $O(k^{3/2}\log \frac{n}{k})$ \cite{ABK17} & $\Omega\Big(k^2 \frac{\log n}{\log k}\Big)$ \cite{ABK17} \\
 $\epsilon$-Approximate  & $O(\frac{k}{\epsilon}\log \frac{n}{k})$  & $O(\frac{k}{\epsilon}\log \frac{n}{k})$  & $O(\frac{k\eta}{\epsilon^{1/2}}\log (n\eta))$  & $O(\frac{k}{\epsilon^{1/2}}\log n)$ & $\Omega\Big(\frac{k}{\epsilon} \Big(\log \frac{k}{\epsilon}\Big)^{-1}\log \frac{n}{\epsilon k}\Big)$ \\
 $\epsilon$-Superset  &  $O\Big(\frac{k}{\epsilon}\log \frac{n}{k} \bigvee k^{\frac{3}{2}}\log \frac{n}{k} \Big)$ & $O(\frac{k}{\epsilon}\log \frac{n}{k})$  & $O(\frac{k}{\epsilon}\log\frac{n}{k})$  & $O(\frac{k}{\epsilon}\log \frac{n}{k})$ & $\Omega\Big(\frac{k}{\epsilon} \Big(\log \frac{k}{\epsilon}\Big)^{-1}\log \frac{n}{\epsilon k}\Big)$ \\
\hline
\end{tabular}}
\vspace{0.1in}
\caption{\label{table:support}Our results for universal support recovery in $1$-bit Compressed Sensing for different settings and different class of signals. Rows 2 and 3 contain new results proved in this paper.}
\end{table*}

\subsection{Main Technical Contribution}

Our new technical contribution in the 1bCS support recovery problem 
is to use simple properties of (a) polynomial roots (b) prime numbers (c) algebraically independent numbers in conjunction with combinatorial designs for crafting measurements. Let us provide the main intuitions for the property of polynomial roots (part (a)) as they are similar for parts (b) and (c).
More precisely, we design a row (say $\fl{z}$) of the measurement matrix $\fl{A}$ such that the non-zero entries of $\fl{z}$ are integral powers of some  number $\alpha\in \bb{R}$. The important insight that we now use in our algorithms is that the inner product of the unknown sparse signal and the measurement vector (i.e. $\langle \fl{x}, \fl{z} \rangle$) can be described as the evaluation at $\alpha$ of a polynomial whose coefficients are entries of $\fl{x}$. Recall that in Section \ref{sec:differences}, we argued that the main hurdle in the 1bCS setting (as compared to the group testing setting) is that it is difficult to interpret the meaning of a \texttt{0} output. It can either mean the supports of the two vectors non-intersecting, but also mean that the two vectors in the inner product are orthogonal. From our construction of the measurement vector $\fl{z}$, the evaluation of a polynomial can be zero at  $\alpha$ if $\alpha$ is a root of the polynomial or the polynomial is everywhere \texttt{0}. 
Since the number of roots of a polynomial is finite, we can carefully design measurement vectors (with different $\alpha$'s) so that their inner product with $\fl{x}$ is the evaluation of the same polynomial but all of their output cannot be zero unless the polynomial is everywhere zero. This property allows us to precisely interpret what a \texttt{0} for all these group of measurements imply. We are left with bounding the number of roots of such polynomials. But the number of roots of a polynomial is at most the number of non-zero coefficients. Sharper bounds are possible under mild assumptions as described below.

Consider the problem of universal superset recovery. It turns out that under mild assumptions on the unknown sparse signal such as a known dynamic range $(\kappa(\fl{x})\le \eta)$ or a small number of non-zero entries of the same sign $(\rho(\fl{x}) \le \eta')$, we can leverage useful properties of the polynomial roots. In the former case, Cauchy's theorem says that the magnitude of the polynomial roots is bounded from below by $1+\eta$ while in the latter case, Descartes' rule of signs imply that the number of polynomial roots is bounded from above by $2\eta'$. In both cases, these properties allow us to prove nearly tight guarantees on the measurement complexity. 
Furthermore, when the unknown sparse signal vector is known to have rational entries (which is practical since  signal acquisition systems are finite precision), we show that $O(k \log n)$ measurements are sufficient, which is also necessary. We prove this using some simple properties of prime numbers in conjunction with combinatorial designs.
Finally, because of the combinatorial structure and ease of manipulating polynomials and prime numbers,  our overall algorithm with such measurements is also efficient. Now let us describe our main contribution where we characterize the measurement complexity for universal superset recovery without any assumption on the unknown signal vector.

Our key idea is to do design a measurement matrix for universal superset recovery  in two steps (note that, the eventual set of all measurements themselves are non-adaptive). First, we design a measurement matrix for universal approximate recovery (allows a few false positives and false negatives) by proposing a new combinatorial design (Definition \ref{defn:new} where the non-zero entries are replaced by algebraically independent numbers)  that generalizes well studied measurement matrices in the literature and incorporates many useful properties. 
In the next step, we 
seek to correct the false negatives. 
We show that when the proportion of allowed false positives $\epsilon$ is less than $\sqrt{k^{-1}\log (n/k)}$, we can correct the false negatives by using the properties of our new combinatorial design itself. Thus, in this regime,  we improve the dependence  of sufficient number of measurements  on the sparsity $k$ for universal superset recovery from $k^{2}$ to $k/\epsilon$, which is again optimal. When the proportion of allowed false positives is larger than $\sqrt{k^{-1}\log (n/k)}$, we can simply recover a superset whose size is a $\sqrt{k^{-1}\log (n/k)}$-fraction of the support - this leads to a $k^{3/2}$ dependence on the sparsity $k$ 
which is still a significant improvement on the sub-optimal $k^2$ dependence. Our algorithms are still efficient in this general setting as well.

In the earlier version of the paper \cite{mazumdar2022support}, we used properties of polynomial roots again for correcting the false negatives in the second step. Since the number of false negatives is significantly smaller than $k$ (the total sparsity), the number of roots of the designed polynomials is also accordingly small. By carefully optimizing the number of measurements used in the two steps, we obtained a $k^{3/2}/\epsilon$ scaling in number of measurements in \cite{mazumdar2022support}. 
However, we strictly improve this previous results and provide a tight measurement complexity guarantee for certain regime.

\paragraph{Organization.} The rest of the paper is organized as follows. In Sec.~\ref{design}, we define some set systems that will be used for constructing the universal measurement schemes. In particular, we show probabilistic existence of {\em list union-free} families. In Sec.~\ref{sec:main}, we provide our main results and detailed proof for approximate support recovery and superset recovery, in that order. Finally we conclude with a discussion on open problems in this area.

\section{Combinatorial Designs}\label{design}

In this section, we will start with a few definitions characterizing matrices with useful combinatorial properties.
 
\begin{defn}[List-disjunct matrix~\cite{dyachkov1983survey,ngo2011efficiently}] 
An $m \times n$ binary matrix $\fl{M}\in \{0,1\}^{m \times n}$ is a $(k,\ell)$-list disjunct matrix if for any two disjoint sets $S,T \subseteq \s{col}(M)$ such that $\left|S\right|=\ell,\left|T\right|=k$, there exists a row in $M$ in which some column from $S$ has a non-zero entry, but every column from $T$ has a zero.
\end{defn}
The following result characterizes the sufficient number of rows in list-disjunct matrices: 
\begin{lemma}[\cite{ngo2011efficiently}]\label{disjunct_exists}
 An $m \times n$ $(k,\ell)$-list disjunct matrix exists with 
 \begin{align*}
     m \le 2k\Big(\frac{k}{\ell}+1\Big)\Big(\log \frac{n}{k+\ell}+1\Big).
 \end{align*}
Moreover an $m \times n$ $(k,\ell)$-list disjunct matrix with $k\ge2\ell$  must satisfy, 
\begin{align*}
    m = \Omega\Big(\frac{k^2}{\ell} \Big(\log \frac{k^2}{\ell}\Big)^{-1}\log \frac{n-k}{\ell}\Big).
\end{align*}

\end{lemma}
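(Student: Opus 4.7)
The lemma has two independent statements---an existence upper bound and a matching lower bound---that I would handle by the probabilistic method and by a counting/entropy argument, respectively.

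\emph{Upper bound (probabilistic existence).} I would draw $\fl{M}\in\{0,1\}^{m\times n}$ with i.i.d.\ Bernoulli$(p)$ entries using $p=1/(k+1)$, so that $(1-p)^k=(k/(k+1))^k\ge 1/e$. For any fixed disjoint pair $(S,T)$ with $|S|=\ell$, $|T|=k$, the probability that a given row \emph{separates} the pair---it is $0$ on every column of $T$ and nonzero on at least one column of $S$---equals $q:=(1-p)^k\bigl(1-(1-p)^\ell\bigr)\ge e^{-1}\bigl(1-e^{-\ell/(k+1)}\bigr)=\Omega(\min\{1,\ell/k\})$, so $1/q=O(k/\ell+1)$. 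A union bound over the at most $\binom{n}{k+\ell}\binom{k+\ell}{k}$ disjoint pairs $(S,T)$ yields $\Pr[\fl{M}\text{ is not }(k,\ell)\text{-list disjunct}]\le \binom{n}{k+\ell}\binom{k+\ell}{k}(1-q)^m$. Forcing this to be strictly less than $1$, using $\log\binom{n}{k+\ell}\le(k+\ell)\log(en/(k+\ell))$ together with $\log\binom{k+\ell}{k}\le(k+\ell)\log 2$, and plugging in the bound on $1/q$, shows that $m=O\bigl(k(k/\ell+1)\log(n/(k+\ell))\bigr)$ suffices; the explicit leading constant $2$ and the additive $+1$ inside the logarithm absorb the absolute constants from the above estimates.

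\emph{Lower bound.} I would adapt the classical D'yachkov/Ruszinko counting argument for cover-free families to the list-disjunct regime. For a candidate matrix $\fl{M}$, let $\sigma(j)\subseteq[m]$ denote the set of rows in which column $j$ is $1$, and extend to $\sigma(T)=\bigcup_{j\in T}\sigma(j)$. The list-disjunct property is equivalent to: for every $k$-set $T\subseteq[n]$, the number of columns $j\notin T$ with $\sigma(j)\subseteq\sigma(T)$ is at most $\ell-1$. Under the hypothesis $k\ge 2\ell$ (which gives enough ``room'' inside every $T$ to force rich support structure), I would bucket columns by a suitable coarsening of their supports and double-count the pairs $(T,j)$ with $\sigma(j)\subseteq\sigma(T)$; a sphere-packing / $q$-ary code-style estimate then forces $m=\Omega\bigl(\tfrac{k^2}{\ell}(\log(k^2/\ell))^{-1}\log((n-k)/\ell)\bigr)$. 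Specializing $\ell=1$ recovers Ruszinko's $\Omega(k^2\log n/\log k)$ bound for cover-free families, as a consistency check.

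\emph{Main obstacle.} The upper bound is a routine application of the probabilistic method; the delicate part is the lower bound, where one must track both parameters $k$ and $\ell$ carefully through the counting and produce the $(\log(k^2/\ell))^{-1}$ loss factor cleanly. This factor is precisely the source of the remaining logarithmic gap between the upper and lower bound; tightening it is a well-known open question even in the classical $\ell=1$ cover-free case, so I would not try to improve on it here.
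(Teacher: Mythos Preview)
The paper does not supply its own proof of this lemma; it is quoted verbatim from \cite{ngo2011efficiently} and used as a black box (the paper's own probabilistic arguments, e.g.\ Lemma~\ref{lem:strong_disjunct_exists}, are for the \emph{strongly} list-disjunct variant). Your proposal sketches precisely the standard arguments that underlie the cited result: the upper bound via an i.i.d.\ Bernoulli$(1/(k{+}1))$ random matrix plus a union bound over the $\binom{n}{k+\ell}\binom{k+\ell}{\ell}$ pairs $(S,T)$, and the lower bound via the D'yachkov--Rykov/Ruszink\'o covering-count for superimposed codes extended to the list setting. Both outlines are correct in spirit; the only cosmetic slip is that the exact leading constant~$2$ in the upper bound does not fall out of the crude estimates you wrote (e.g.\ bounding $\binom{k+\ell}{k}\le 2^{k+\ell}$ loses a factor), so if you want the stated constant you should keep the sharper bound $\log\binom{k+\ell}{\ell}\le \ell\log(e(k+\ell)/\ell)$ and track constants through $1-(1-p)^\ell$ more carefully. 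For the lower bound your description is only a high-level pointer to the method rather than a proof, but since the paper itself defers to the reference, that is appropriate here.
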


Disjunct matrices ($\ell=1$) and list disjunct matrices have a rich history of being utilized in the group testing literature \cite{du2000combinatorial,dyachkov1983survey,ngo2011efficiently,PR11,Maz16}. We generalize the  notion of a $(k,\ell)$ list-disjunct matrix to that of a strongly list-disjunct matrix: informally speaking, a $(k,\ell/k)$- strongly list disjunct matrix is a $(t,\ell t/k)$-list disjunct matrix for all $t\le k$. 

\begin{defn}[Strongly List-disjunct matrix]\label{def:strong_dis} 
Fix any $0 \le \delta \le 1$. An $m \times n$ binary matrix $\fl{M}\in \{0,1\}^{m \times n}$ is a $(k,\delta)$-strongly list disjunct matrix if for every $t\le k$, $\fl{M}$ is a $(t,\delta t)$ list-disjunct matrix.
\end{defn}

Similar to Lemma \ref{disjunct_exists}, we characterize below the sufficient number of rows in strongly list-disjunct matrices using the probabilistic method. 

\begin{lemma}\label{lem:strong_disjunct_exists}
 An $m \times n$ $(k,\delta)$-strongly list-disjunct matrix exists with $m=O(k\delta^{-1}\log (nk^{-1}))$.
\end{lemma}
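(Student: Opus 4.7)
The plan is a dyadic decomposition that reduces the strong list-disjunct requirement to many instances of the ordinary list-disjunct case, each handled by Lemma~\ref{disjunct_exists}. Let $J := \lceil \log_2 k \rceil$. For each $j \in \{0, 1, \ldots, J\}$, set $t_j := \min(2^{j+1}, k)$ and $\ell_j := \max(1, \lceil \delta \cdot 2^j \rceil)$, and invoke Lemma~\ref{disjunct_exists} to obtain an $m_j \times n$ $(t_j, \ell_j)$-list-disjunct matrix $\fl{M}_j$ with
\[
m_j \;=\; O\Bigl(t_j\bigl(\tfrac{t_j}{\ell_j}+1\bigr)\log\tfrac{n}{t_j}\Bigr) \;=\; O\Bigl(\tfrac{t_j}{\delta}\log\tfrac{n}{t_j}\Bigr),
\]
where the last simplification uses $t_j/\ell_j = O(1/\delta)$ in the relevant range $2^j \gtrsim 1/\delta$ (for smaller $j$ the condition is vacuous, since $\delta t < 1$). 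The candidate matrix $\fl{M}$ is the vertical concatenation $[\fl{M}_0;\fl{M}_1;\cdots;\fl{M}_J]$.

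To verify that $\fl{M}$ is $(k,\delta)$-strongly list-disjunct, I will fix any $t \le k$ with $\delta t \ge 1$ and any disjoint $S,T \subseteq [n]$ with $|S|=\lceil \delta t \rceil$ and $|T|=t$, and produce a row of $\fl{M}$ that separates $S$ from $T$. Take the unique $j$ with $2^j < t \le 2^{j+1}$; then $t \le t_j$ and $|S| \ge \lceil \delta \cdot 2^j \rceil \ge \ell_j$. Pick any $S' \subseteq S$ with $|S'|=\ell_j$, and extend $T$ to $T' \supseteq T$ with $|T'|=t_j$ by adjoining arbitrary elements of $[n]\setminus(S\cup T)$. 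The $(t_j,\ell_j)$-list-disjunct property applied to the disjoint pair $(S',T')$ produces a row of $\fl{M}_j$ with a $1$ in some column of $S' \subseteq S$ and $0$s throughout $T' \supseteq T$; this very row separates the original $S$ from $T$ in $\fl{M}$.

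Finally, the total number of rows is
\[
m \;=\; \sum_{j=0}^J m_j \;=\; O\Bigl(\delta^{-1}\sum_{j=0}^J t_j \log(n/t_j)\Bigr).
\]
Since the $t_j$'s form a geometric progression bounded by $2k$, a routine calculation (pulling out $\log n$ and summing $\sum_j (j+1)2^{j+1} = O(k \log k)$) gives $\sum_j t_j \log(n/t_j) = O(k\log(n/k))$, hence $m = O(k\delta^{-1}\log(n/k))$, as desired. The only delicate point is the monotonicity argument used in the verification step: shrinking $S$ and enlarging $T$ preserves the existence of a separating row, which is what lets a single $\fl{M}_j$ handle an entire dyadic band of $t$ rather than only the pair $(t_j,\ell_j)$. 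The boundary case $\delta t < 1$ renders the condition vacuous, and edge effects at $j=J$ (where $t_J$ may be capped at $k$) are absorbed in the $O(\cdot)$ constants.
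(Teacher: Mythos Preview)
Your proof is correct and takes a genuinely different route from the paper. The paper proves the lemma by a single direct probabilistic construction: it samples one $m\times n$ Bernoulli$(1/(k+1))$ matrix and union-bounds, over all $t\le k$ simultaneously, the probability that the $(t,\delta t)$-list-disjunct property fails. Your approach instead black-boxes Lemma~\ref{disjunct_exists} at $O(\log k)$ dyadic scales and stacks the resulting matrices, using the monotonicity observation (shrink $S$, enlarge $T$) to cover every $t$ in a band with a single block.

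Both arguments land at the same row count because your geometric sum $\sum_j t_j\log(n/t_j)$ is dominated by its top term. Your approach is more modular---it avoids redoing any probabilistic analysis---while the paper's yields a single random matrix rather than a concatenation. One small wrinkle to tidy: your choice ``the unique $j$ with $2^j<t\le 2^{j+1}$'' is undefined at $t=1$; just note that $j=0$ still covers this case since $t_0\ge 1$ and $\ell_0=1$, so the extension argument goes through unchanged.
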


\begin{proof}
Fix $\delta>0$ arbitrarily and any $t\le k$.
Let \(  \CCB \in \{0,1\}^{\CCm \times \CCn}  \) be a binary matrix with i.i.d. \(  \Bernoulli(\CCp)  \) entries,
where \(  \CCp \in (0,1)  \) will be determined later.
For any choice of
\(  \CCt \in [\CCk]  \)
and
\(  \Set{S}, \Set{T} \subseteq [\CCn]  \),
\(  |\Set{S}| = \CCdelta \CCt  \),
\(  |\Set{T}| = \CCt  \),
\(  \Set{S} \cap \Set{T} = \phi  \),
consider the undesired event \(  \Event{E}^{\Set{S},\Set{T}}  \) defined as
\begin{gather*}
  \left| \Supp(\CCBCol_{j}) \setminus \Big(\bigcup_{j' \in \Set{T}} \Supp(\CCBCol_{j})\Big) \right| = 0
  ,\quad \forall j \in \Set{S}
.\end{gather*}
%
From the random construction of the matrix $\fl{M}$, we can conclude
\begin{gather}
\label{pf:lemma:combinatorial-constr:list-disjunct:eqn:1}
  \Pr(   \Event{E}^{\Set{S},\Set{T}}   )
  = \left( \left( 1 - \CCp (1-\CCp)^{\CCt} \right)^{\CCm} \right)^{\CCdelta \CCt}
  = \left( 1 - \CCp (1-\CCp)^{\CCt} \right)^{\CCdelta \CCt \CCm}
\end{gather}
where
\(  \CCp (1-\CCp)^{\CCt}  \)
is the probability that on a particular row \(  i \in [\CCm]  \) and for any fixed \(  j \in [n]  \) and
\(  \Set{T} \subseteq [n] \setminus \{j\}  \), \(  |\Set{T}| = \CCt  \), the matrix has \(  \CCB*_{ij} = 1  \) and \(  \CCB*_{ij'} = 0  \)
for all \(  j' \in \Set{T}  \).
Note that \(  \CCp (1-\CCp)^{\CCt}  \) decreases monotonically with \(  \CCt  \) and that for a fixed choice of \(  \CCt \in [\CCk]  \),
maximizing \(  \CCp (1-\CCp)^{\CCt}  \) w.r.t. \(  \CCp  \) minimizes \eqref{pf:lemma:combinatorial-constr:list-disjunct:eqn:1} at
this \(  \CCt  \), whence
\(  \arg \max_{\CCp \in (0,1)} \min_{\CCt \in [\CCk]} \CCp (1-\CCp)^{\CCt}
    = \arg \max_{\CCp \in (0,1)} \CCp (1-\CCp)^{\CCk}
    = \frac{1}{\CCk+1}  \).
%
Hence, we will set the parameter
\(  \CCp = \frac{1}{\CCk+1}  \)
and then obtain
\(  \CCp (1-\CCp)^{\CCt} \geq \CCp (1-\CCp)^{\CCk} = \frac{1}{\CCk+1} (\frac{\CCk}{\CCk+1})^{\CCk} \ge \frac{1}{e(\CCk+1)}  \).
%
\par 
%
For any particular \(  \CCt  \), write \(  \Event{F}^{\CCt}  \) for the event that there exists
\(  \Set{S}, \Set{T} \subseteq [\CCn]  \),
\(  |\Set{S}|= \CCdelta \CCt  \),
\(  |\Set{T}| = \CCt  \),
\(  \Set{S} \cap \Set{T} = \phi  \),
such that the event \(  \Event{E}^{\Set{S},\Set{T}}  \) occurs.
Taking a union bound over all \(  \Set{S}, \Set{T} \subseteq [\CCn]  \), subject to their constraints, the probability of the event
\(  \Event{F}^{\CCt}  \) is bounded from above by
\begin{align*}
  &\Pr( \Event{F}^{\CCt} )
  \leq \binom{\CCn}{\CCt+\CCdelta\CCt} \binom{\CCt+\CCdelta\CCt}{\CCt} \Pr( \Event{E}^{\Set{S},\Set{T}} )
  = \binom{\CCn}{(1+\CCdelta)\CCt} \binom{(1+\CCdelta)\CCt}{\CCt} \Pr( \Event{E}^{\Set{S},\Set{T}} ) \\
  &= \binom{\CCn}{(1+\CCdelta)\CCt} \binom{(1+\CCdelta)\CCt}{\CCt} \left( 1 - \CCp (1-\CCp)^{\CCt} \right)^{\CCdelta \CCt \CCm} 
  \\
  &\leq
     \left( \frac{e \CCn}{(1+\CCdelta)\CCt} \right)^{(1+\CCdelta)\CCt} \left( \frac{e (1+\CCdelta)\CCt}{\CCt} \right)^{\CCt}
     e^{-\CCdelta \CCt \CCm \CCp (1-\CCp)^{\CCt}}
  \leq
     \left( \frac{e \CCn}{(1+\CCdelta)\CCt} \right)^{(1+\CCdelta)\CCt} \left( \frac{e (1+\CCdelta)\CCt}{\CCt} \right)^{(1+\CCdelta)\CCt}
     e^{-\CCdelta \CCt \CCm \CCp (1-\CCp)^{\CCt}}
  \\
  &= \left( \frac{e^{2} \CCn}{\CCt} \right)^{(1+\CCdelta)\CCt} e^{-\CCdelta \CCt \CCm \CCp (1-\CCp)^{\CCt}}
  = \exp \left( (1+\CCdelta)\CCt \log \left( \frac{e^{2} \CCn}{\CCt} \right) - \CCdelta \CCt \CCm \CCp (1-\CCp)^{\CCt} \right)
  \\
  &= \exp \left( \CCk \left( (1+\CCdelta) \log \left( \frac{e^{2} \CCn}{\CCt} \right) - \frac{\CCdelta \CCm}{e(\CCk+1)} \right) \right) \le \exp \left( \CCt \left( (1+\CCdelta) \log \left( \frac{e^{2} \CCn}{\CCk} \right) - \frac{\CCdelta \CCm}{e(\CCk+1)} \right) \right)
\end{align*}
%
Denote the event that the matrix \(  \CCB  \) is not \(  (\CCk,\CCg)  \)-\LDName by \(  \Event{E''}  \),
whose probability is upper bounded by a union bounding over the events \(  \{ \Event{F}^{\CCt} \}_{\CCt \in [\CCk]}  \) as follows.
\begin{align*}
  \Pr \left( \Event{E''} \right)
  = \Pr \left( \bigcup_{\CCt \in [\CCk]} \ \Event{F}^{\CCt} \right)
  \leq \sum_{\CCt \in [\CCk]} \Pr( \Event{F}^{\CCt} )
  \le \exp \left( \CCk \left( (1+\CCdelta) \log \left( \frac{e^{2} \CCn}{\CCk} \right) - \frac{\CCdelta \CCm}{e(\CCk+1)} \right)  + \log(\CCk) \right)
\end{align*}
%
Therefore, by choosing $m \ge 20k\delta^{-1} \log (nk^{-1}e^2)$, we will have that $\Pr(\ca{E}'')\le 1$ implying that a $(k,\delta)$-strongly list disjunct matrix exists with $m=O(k\delta^{-1}\log (nk^{-1}))$.
\end{proof}

Note that the row complexity in a $(k,\delta)$-strongly list disjunct matrix and a $(k,\delta k)$-list disjunct matrix are equivalent up to constants. The premise in group testing is very similar to 1-bit compressed sensing: $\fl{y} = \s{sign}(\fl{A}\fl{x})$ except that 
both $\fl{x} \in \{0,1\}^n$ and $\fl{A} \in \{0,1\}^{m \times n}$ are binary (note that, therefore, $\fl{y}\in \{0,1\}^n$ as well).

Consider the a measurement $y=\s{sign}(\langle\fl{a},\fl{x}\rangle).$ In group testing, $y=0$ implies $\s{supp}(\fl{a})\cap \s{supp}(\fl{x}) =\phi.$ However, in 1bCS, $y$ can be zero even when  $\s{supp}(\fl{a})\cap \s{supp}(\fl{x}) \ne \phi.$ This creates the main difficulty in importing tools of group testing being used in 1bCS. 

To tackle this, a set system called robust union-free family was proposed in \cite{ABK17}. We generalize that notion to propose a List union-free family, List union-free matrix and a Strongly List Union-free matrix.  

\begin{defn}[List union-free family, List union-free matrix]\label{defn:new} 
Fix any $0 \le \alpha,\delta \le 1$. A family of sets $\ca{F} \equiv \{\ca{B}_1, \ca{B}_2,\dots, \ca{B}_n\}$ where each $\ca{B}_i \subset [m]$, $|\ca{B}_i|=d$ is an $(n,m,d,k,\ell,\alpha)$-list union-free family if for any pair of disjoint sets $S, T \subseteq [n]$ with $|S|= \ell,|T|=k$, there exists $j \in S$ such that $|\ca{B}_j \cap (\bigcup_{i \in (T\cup S)\setminus\{j\}} \ca{B}_i) | < \alpha d$. 

Suppose, $\ca{F} \equiv \{\ca{B}_1, \ca{B}_2,\dots, \ca{B}_n\}$ is an $(n,m,d,k,\ell,\alpha)$-list union-free family. An $m \times n$ binary matrix $\fl{M}\in \{0,1\}^{m \times n}$ is a $(n,m,d,k,\ell,\alpha)$-list union-free matrix if the entry in the $i^{\s{th}}$ row and $j^{\s{th}}$ column of $\fl{M}$ is \texttt{1} if $i \in \ca{B}_j$ and \texttt{0} otherwise.

Fix any $0 \le \alpha,\delta \le 1$. An $m \times n$ binary matrix $\fl{M}\in \{0,1\}^{m \times n}$ is a $(n,m,d,k,\delta,\alpha)$-strongly list union-free matrix if $\fl{M}$ is $(n,m,d,t,\delta t,\alpha)$-list union free for every $t\le k$.
\end{defn}

Special cases of List union-free families, such as union-free families or cover-free codes ($(n,m,d,k,1,1)$-list union-free families) are well-studied~\cite{erdos1985families,d2002families,frankl1986union,coppersmith1998new,furedi1996onr}  
 and has found applications in cryptography and experiment designs. 
An $(n,m,d,k,1,\alpha)$-list union-free family is called a robust union-free family, and  it has been recently used for support recovery in 1bCS in \cite{ABK17}. The List union-free family that we introduce above is a natural generalization and  has not been studied previously to the best of our knowledge. We will show that this family of sets is useful for  universal superset recovery of support. Below, we provide a result that gives the sufficient number of rows in a list union-free matrix and generalize it for strongly list union-free matrix in the subsequent corollary:
\begin{lemma}[Existence of list-union free matrices]\label{lem:list-union}
For a given $0<\alpha<1,n,k,\ell$, there exists a $(n,m,d,k,\ell,\alpha)$-list union-free matrix with number of rows
\begin{align*}
    &m = O\Big((k+\ell)\Big(\frac{e^2}{\alpha^3}\Big)\Big(\frac{k}{\ell}+1\Big)\Big(\log \frac{n}{k+\ell}+1\Big)\Big(\log \frac{e}{\alpha}\Big)^{-1}\Big)\Big) \\ 
    \text{and} \quad  
    &d = O\Big(\frac{1}{\alpha}\Big(\frac{k}{\ell}+1\Big)\Big(\log \frac{n}{k+\ell}+1\Big)\Big(\log \frac{e}{\alpha}\Big)^{-1}\Big)\Big).
\end{align*}

\end{lemma}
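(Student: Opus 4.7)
The plan is to use the probabilistic method. Sample $\fl{M} \in \{0,1\}^{m \times n}$ with i.i.d.\ entries $\fl{M}_{rj} \sim \text{Bernoulli}(p)$ for $p = \alpha^2/(e^2(k+\ell))$, set $\ca{B}_j := \s{supp}(\fl{M}_j)$, and take $d := mp$. A standard Chernoff-plus-union-bound argument over the $n$ columns will force all $|\ca{B}_j|$ to lie within a constant factor of $d$ with high probability, after which the column weights can be trimmed/padded to exactly $d$ without affecting the claimed orders of $m$ and $d$. The remaining task is to verify the list union-free property itself.

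First I would establish a single-event tail bound. For fixed disjoint $S, T \subseteq [n]$ with $|S|=\ell, |T|=k$ and a specific $j \in S$, the intersection count $|\ca{B}_j \cap \ca{U}_j|$, where $\ca{U}_j = \bigcup_{i \in (T\cup S)\setminus\{j\}} \ca{B}_i$, is $\text{Binomial}(m,q)$ with $q = p(1-(1-p)^{k+\ell-1}) \leq p^2(k+\ell)$, so its mean is at most $\alpha^2 d/e^2$---a factor $e^2/\alpha$ below the target threshold $\alpha d$. A multiplicative Chernoff bound then yields
\[
\Pr\bigl[|\ca{B}_j \cap \ca{U}_j| \geq \alpha d\bigr] \;\leq\; (\alpha/e)^{\alpha d} \;=\; \exp(-\alpha d \log(e/\alpha)),
\]
which already contains the $\log(e/\alpha)$ factor from the denominator of the claimed $d$.

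The delicate step is exploiting the list structure to gain an extra factor of $\ell$ in the exponent; a naive union bound over $(S,T)$ alone would force $m \asymp (k+\ell)^2$, whereas the claimed bound is $(k+\ell)^2/\ell$. Here I would decompose $\ca{U}_j = \ca{V} \cup \ca{W}_j$ with $\ca{V} = \bigcup_{i \in T}\ca{B}_i$ and $\ca{W}_j = \bigcup_{i \in S\setminus\{j\}}\ca{B}_i$. Conditioning on $\ca{V}$, the events $\{|\ca{B}_j \cap \ca{V}| \geq \alpha d/2\}_{j \in S}$ are conditionally i.i.d.\ (since $\ca{V}$ is independent of every column in $S$), yielding a clean product bound of the target form $(\alpha/e)^{\Omega(\ell \alpha d)}$ for this contribution. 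The $\ca{W}_j$ contributions are coupled across $j \in S$; to handle them I would pass from column-wise to row-wise independence via $\sum_{j \in S}|\ca{B}_j \cap \ca{W}_j| \leq 2\sum_r \binom{X_r^S}{2}$, where $X_r^S = \sum_{i \in S}\fl{M}_{ri}$, and apply a Chernoff/Bernstein bound to this sum of $m$ independent bounded summands. A final union bound over the at most $\binom{n}{k+\ell}\binom{k+\ell}{\ell}$ valid $(S,T)$ pairs would then pin down $d = \Theta((k/\ell + 1)\log(n/(k+\ell))/(\alpha \log(e/\alpha)))$ and $m = d/p$ matching the stated bounds. The main technical obstacle is the intra-$S$ correlation---the events $F_j := \{|\ca{B}_j \cap \ca{U}_j| \geq \alpha d\}$ for $j \in S$ are positively coupled through the shared random columns indexed by $S$, so neither a per-column product bound nor a simple per-pair union bound is tight; the $\ca{V}/\ca{W}_j$ decomposition is the escape hatch that localizes the hard correlations into a sum over rows where Bernoulli independence is preserved.
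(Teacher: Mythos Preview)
Your approach differs substantially from the paper's. The paper does \emph{not} use i.i.d.\ Bernoulli entries; it samples a random $q$-ary code $\fl{M}'\in\Sigma^{m'\times n}$ with each entry uniform over an alphabet of size $q\approx(k+\ell)(e/\alpha)^2$, and then converts to binary by replacing each symbol with a length-$q$ indicator vector. This gives exact column weights $d=m'$ for free (no trimming), and, crucially, for any column $i$ and any realization of the remaining columns, the per-row matching probability is uniformly at most $(k+\ell)/q$, independently across rows. The paper then argues the product bound $\Pr[(S,T)\text{ bad}]\le\bigl(\binom{m'}{\alpha m'}((k+\ell)/q)^{\alpha m'}\bigr)^\ell$ by sequential conditioning on the events $\ca{E}^{i_t}$, followed by a single union bound over $(S,T)$. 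There is no $\ca{V}/\ca{W}_j$ split.

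Your $\ca{W}_j$ step has a genuine gap. The passage from ``each $|\ca{B}_j\cap\ca{W}_j|\ge\alpha d/2$'' to the aggregate bound $\sum_{j\in S}|\ca{B}_j\cap\ca{W}_j|\le 2\sum_r\binom{X_r^S}{2}$ discards exactly the structure you need. The per-row summand $\binom{X_r^S}{2}$ ranges over $[0,\binom{\ell}{2}]$, so any Chernoff/Bernstein bound on the sum with target $t=\Theta(\ell\alpha d)$ yields an exponent of at most $O(t/\ell^2)=O(\alpha d/\ell)$. This is not a slackness of Bernstein: the sum can reach $t$ via only $\Theta(\alpha d/\ell)$ ``all-ones'' rows with $X_r^S=\ell$, each contributing $\Theta(\ell^2)$, and the probability of that configuration is $\binom{m}{\Theta(\alpha d/\ell)}p^{\Theta(\alpha d)}$, whose exponent is $\Theta(\alpha d\log(1/p))$ --- linear in $\alpha d$ but \emph{not} in $\ell\alpha d$. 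Meanwhile, the union bound over $\binom{n}{k+\ell}\binom{k+\ell}{\ell}$ pairs costs $\Theta((k+\ell)\log(n/(k+\ell)))=\Theta(\ell\alpha d\log(e/\alpha))$ with the stated $d$. The shortfall is a factor $\Theta(\ell^2)$ in the exponent (up to logs), which kills the argument except when $\ell=O(1)$. The $q$-ary construction in the paper sidesteps this entirely: because each column of $\fl{M}'$ places exactly one symbol per row, there is no ``concentrated row'' mode, and the $\ell$-fold product emerges directly from the conditioning without any sum relaxation. If you want to rescue the Bernoulli route, you would need an argument that retains the per-$j$ thresholds rather than collapsing them into a single sum; I do not see a short way to do that which recovers the claimed $m$.
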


\begin{proof}
Let us fix $m'=m/q$ where $m,q$ is to be decided later. Consider an alphabet $\f{\Sigma}$ of size $q$ and subsequently, we construct a random matrix $\fl{M}' \in \f{\Sigma}^{m' \times n}$ where each entry is sampled independently and uniformly from $\f{\Sigma}$. 
We will write the $i^{\s{th}}$ column of the matrix $\fl{M}'$ in the form of a set of tuples $\ca{B}'_i \equiv \bigcup_{r \in [m']}\{(\fl{M}'_{ri},r)\}$. In other words, the symbol $\fl{M}'_{ri}$ in the $r^{\s{th}}$ row and $i^{\s{th}}$ column of $\fl{M}'$ is mapped to the tuple $(\fl{M}'_{ri},r)$ in $\ca{B}'_i$; hence $|\ca{B}'_i|=m'$ for all $i\in [n]$. Now, consider two disjoint sets $\ca{S},\ca{T} \subseteq \s{col}(\fl{M}')$ such that $|\ca{S}|=\ell, |\ca{T}|=k$. We will call $\ca{S},\ca{T}$ bad if 
\begin{align*}
\Big|\ca{B}'_i \bigcap \Big(\bigcup_{j \in (\ca{T}\cup \ca{S})\setminus \{i\}} \ca{B}'_j\Big)\Big| \ge \alpha m' \quad \text{for all }i\in \ca{S}.
\end{align*}
For a fixed $i \in \ca{S}$ and fixed $\ca{T}$, let us define the event $\ca{E}^{i,\ca{T}}\triangleq \{\left|\ca{B}'_i \cap (\cup_{(\ca{T}\cup \ca{S})\setminus \{i\}} \ca{B}'_j) \right| \ge \alpha m'$\}. Hence $\ca{S},\ca{T}$ (as defined above) is bad if $\bigcap_{i \in \ca{S}}\ca{E}^{i,\ca{T}}$ is true. Again, for a fixed $i \in \ca{S}$, consider any subset $\ca{S}'\subseteq \ca{S}\setminus\{i\}$.
We will have
\begin{align*}
    &\Pr(\ca{E}^{i,\ca{T}} \mid \bigcap_{i' \in \ca{S}'}\ca{E}^{i',\ca{T}}) \\ 
    &=\Pr\Big(\big|\ca{B}'_i \bigcap \Big(\bigcup_{j \in (\ca{T}\cup \ca{S})\setminus\{i\}} \ca{B}'_j\Big)\big| \ge \alpha m' \mid \bigcap_{i' \in \ca{S}'}\ca{E}^{i',\ca{T}}\Big) \\
    &\stackrel{(a)}{=}
\sum_{\ca{R}\in \Omega}\Pr\Big(\bigcup_{j \in (\ca{T}\cup \ca{S})\setminus\{i\}} \ca{B}'_j=\ca{R} \mid \bigcap_{i' \in \ca{S}'}\ca{E}^{i',\ca{T}}\Big)\Pr\Big(\Big|\ca{B}'_i \bigcap (\bigcup_{j \in (\ca{T}\cup \ca{S})\setminus\{i\}} \ca{B}'_j)\Big| \ge \alpha m'\mid \bigcup_{j \in (\ca{T}\cup \ca{S})\setminus\{i\}} \ca{B}'_j=\ca{R}, \bigcap_{i' \in \ca{S}'}\ca{E}^{i',\ca{T}}\Big)\\
&\stackrel{(b)}{\le} \sum_{\ca{R}\in \Omega}\Pr\Big(\bigcup_{j \in (\ca{T}\cup \ca{S})\setminus\{i\}} \ca{B}'_j=\ca{R} \mid \bigcap_{i' \in \ca{S}'}\ca{E}^{i',\ca{T}}\Big){m' \choose \alpha m'} \Big(\frac{k+\ell}{q}\Big)^{\alpha m'} \\
&\stackrel{(c)}{\le} {m' \choose \alpha m'} \Big(\frac{k+\ell}{q}\Big)^{\alpha m'}
\end{align*}
where the summation in steps (a) and (b) is over all elements in the sample space $\Omega$ of the random variable $\bigcup_{j \in (\ca{T}\cup \ca{S})\setminus\{i\}} \ca{B}'_j$. Step (a) 
follows from the law of total probability where we further condition on each value $\ca{R}$ of the random set $\bigcup_{j \in (\ca{T}\cup \ca{S})\setminus\{i\}} \ca{B}'_j$. Step (b) follows from the fact that for any value $\ca{R}$ of the random variable $\bigcup_{j \in (\ca{T}\cup \ca{S})\setminus\{i\}} \ca{B}'_j$, any row of the matrix $\fl{M}'$ restricted to the columns in $(\ca{T}\cup \ca{S})\setminus\{i\}$ can contain at most $k+\ell$ distinct symbols. Hence the probability that for a fixed row of $\fl{M}'$ , the symbol in $i^{\s{th}}$ column is contained in the set of symbols present in the columns in $(\ca{T}\cup \ca{S})\setminus\{i\}$ is at most $(k+\ell)/q$; therefore the probability that there exists at least $\alpha m'$ such rows is bounded from above by ${m' \choose \alpha m'} \Big(\frac{k+\ell}{q}\Big)^{\alpha m'}$. Step (c) follows from the fact that the sum of probabilities of all values of the random set $\bigcup_{j \in (\ca{T}\cup \ca{S})\setminus\{i\}} \ca{B}'_j$ conditioned on $\cap_{i' \in \ca{S}'}\ca{E}^{i',\ca{T}}$ is 1.

Let us denote the the distinct columns in $\ca{S}$ by $i_1,i_2,\dots,i_{\ell}$. Subsequently, we have  
\begin{align*}
\Pr(\text{$\ca{S},\ca{T}$ is bad})&=\Pr\Big(\bigcap_{t \in [\ell]} \ca{E}^{i_t,\ca{T}}\Big)
= \prod_{t \in [\ell]}\Pr\Big( \ca{E}^{i_t,\ca{T}}\mid \bigcap_{f \in [t-1]} \ca{E}^{i_f,\ca{T}}\Big) \le \Big({m' \choose \alpha m'}\Big(\frac{k+\ell}{q}\Big)^{\alpha m'}\Big)^{\ell}. 
\end{align*}

Hence, we get that

\begin{align}\label{eq:ub_prob}
    &\Pr(\bigcup_{\ca{S},\ca{T}}\text{$\ca{S},\ca{T}$ is bad}) 
    \le \sum_{\ca{S},\ca{T}} \Pr(\text{$\ca{S},\ca{T}$ is bad}) \\
    &\le {n \choose k+\ell}{k+\ell \choose \ell} \Big({m' \choose \alpha m'} \Big(\frac{k+\ell}{q}\Big)^{\alpha m'}\Big)^{\ell} \\
    &\le \exp\Big((k+\ell)\log \frac{en}{k+\ell}+\ell\log \frac{e(k+\ell)}{\ell}+\ell m' \alpha\log \frac{e}{\alpha}-\alpha m' \ell \log \frac{q}{k+\ell}\Big).
\end{align}

Now, we choose
\begin{align*}
 q=\Big\lceil (k+\ell)\Big(\frac{e}{\alpha}\Big)^{2} \Big\rceil   \quad  \text{and} \quad m' = \frac{2}{\alpha}\Big(\frac{k}{\ell}+1\Big)\Big(\log \frac{n}{k+\ell}+e\Big) \Big(\log \frac{e}{\alpha}\Big)^{-1} 
\end{align*}
in which case we get that $\Pr(\bigcup_{\ca{S},\ca{T}}\text{$\ca{S},\ca{T}$ is bad}) <1$. This implies that there exists a matrix $\fl{M}'$ with $m'$ rows such that no pair of disjoint sets $\ca{S},\ca{T}$ with $|\ca{S}|=\ell,|\ca{T}|=k$ is bad.
Let us denote the standard basis vectors in $\bb{R}^q$ by $\fl{e}^1,\fl{e}^2,\dots,\fl{e}^q$; $\fl{e}^i$ represents the $q$-dimensional vector such that the $i^{\s{th}}$ entry is $1$ and all other entries are $0$. Consider any fixed ordering of the symbols in $\f{\Sigma}$; for the $i^{\s{th}}$ symbol in $\f{\Sigma}$, we will map it to the vector $\fl{e}^{i}$. We can now construct the matrix $\fl{M}\in \{0,1\}^{m \times n}$ from $\fl{M}'$ by replacing each symbol in $\f{\Sigma}$ with the corresponding vector in the standard basis of $\bb{R}^q$ based on the aforementioned mapping. Clearly, each column in this matrix has $d=m'$ \texttt{1}'s. Moreover, for any $i\in [m]$ and $j,v \in [n]$, we will have $\fl{M}_{ij}=\fl{M}_{iv}=1$ if and only if $\fl{M}'_{i'j}=\fl{M}'_{i'v}=s$ where $i'= \lceil i/q \rceil$ and $s$ is the $(i \pmod{q})^{\s{th}}$ symbol in $\f{\Sigma}$.
Let us denote by $\ca{B}_i \subseteq [m]$ the indices of the rows where $i^{\s{th}}$ column of $\fl{M}$ has non-zero entries. In that case, $\left|\ca{B}_i\right|=m'$ for all $i\in [n]$ and furthermore, for any pair of disjoint sets $S, T \subseteq [n]$ with $|S|= \ell,|T|=k$, there exists $j \in S$ such that $|\ca{B}_j \cap (\bigcup_{i \in {(T\cup S)\setminus\{i\}}} \ca{B}_i) | < \alpha |\ca{B}_j|$. 
Hence, the matrix $\fl{M}$ is also a $(n,m,d,k,\ell,\alpha)$-list union-Free matrix with 
\begin{align*}
    &m = O\Big((k+\ell)\Big(\frac{e^2}{\alpha^3}\Big)\Big(\frac{k}{\ell}+1\Big)\Big(\log \frac{n}{k+\ell}+1\Big)\Big(\log \frac{e}{\alpha}\Big)^{-1}\Big)\Big) \\ 
    \text{and} \quad  
    &d = O\Big(\frac1\alpha\Big(\frac{k}{\ell}+1\Big)\Big(\log \frac{n}{k+\ell}+1\Big)\Big(\log \frac{e}{\alpha}\Big)^{-1}\Big)\Big).
\end{align*}

\end{proof}

\begin{coro}[Existence of Strongly List-union free matrices]\label{coro:stronglist-union}
For a given $0<\alpha,\delta<1,n,k$, there exists a $(n,m,d,k,\delta,\alpha)$-strongly list union-free matrix with number of rows
\begin{align*}
  &m =
  \BigO \left(
    \frac{(1+\LUdelta) k e^{2}}{\alpha^{3} \LUdelta}
    \left( (1+\LUdelta) \log \frac{e^{2} n}{k} + \frac{1}{k} \log(k) \right)
    \left( \log \frac{e}{\alpha} \right)^{-1}
  \right)
  \\
  \text{and} \quad
  &d =
  \BigO \left(
    \frac{1}{\alpha \LUdelta}
    \left( (1+\LUdelta) \log \frac{e^{2} n}{k} + \frac{1}{k} \log(k) \right)
    \left( \log \frac{e}{\alpha} \right)^{-1}
  \right)
.\end{align*}
\end{coro}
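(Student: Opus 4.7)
The plan is to mirror the proof of Lemma \ref{lem:list-union} and add an outer union bound over $t \in [k]$, in the same spirit as how Lemma \ref{lem:strong_disjunct_exists} strengthens Lemma \ref{disjunct_exists}. I would reuse the random construction verbatim: sample an $m'\times n$ matrix $\fl{M}'$ with i.i.d.\ uniform entries from an alphabet $\f{\Sigma}$ of size $q$, associate to each column the set $\ca{B}'_i \subseteq [m']\times\f{\Sigma}$ of (row, symbol) pairs, then expand to a binary $m\times n$ matrix by encoding each symbol as a standard basis vector in $\bb{R}^q$, so the resulting matrix has $m = q m'$ rows and column weight $d = m'$.

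Next, for fixed $t \in [k]$ and disjoint $\ca{S},\ca{T}\subseteq[n]$ with $|\ca{S}|=\delta t$, $|\ca{T}|=t$, the same conditioning argument from Lemma \ref{lem:list-union} (with $k$ replaced by $t$ and $\ell$ by $\delta t$) gives that the probability $\ca{S},\ca{T}$ is bad is at most
\begin{equation*}
\left(\binom{m'}{\alpha m'}\Big(\tfrac{(1+\delta)t}{q}\Big)^{\alpha m'}\right)^{\delta t}.
\end{equation*}
Applying a union bound over the at most $\binom{n}{(1+\delta)t}\binom{(1+\delta)t}{t}$ such pairs at level $t$, and then over $t\in[k]$, the overall failure probability is bounded by $k$ times the largest summand of the form
\begin{equation*}
\exp\!\Big((1+\delta)t\log\tfrac{e^2 n}{(1+\delta)t}+\delta t m'\alpha\log\tfrac{e}{\alpha}-\alpha m'\delta t\log\tfrac{q}{(1+\delta)t}\Big).
\end{equation*}

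The key parameter choice is $q=\lceil(1+\delta)k(e/\alpha)^{2}\rceil$, so that $\log\!\tfrac{q}{(1+\delta)t}\ge \log\!\tfrac{k}{t}+2\log\!\tfrac{e}{\alpha}$ for every $t\le k$. Moving the $\delta t m'\alpha\log(e/\alpha)$ term to the negative side yields a surplus of $\alpha m'\delta t(\log(k/t)+\log(e/\alpha))$, and the extra $\log(k/t)$ absorbs the $t$-dependent piece $(1+\delta)t\log(k/t)$ of the first positive term as soon as $m'\gtrsim (1+\delta)/(\alpha\delta)$. The residual inequality, after division by $\delta t$ and accounting for the $\log k$ from the union bound over $t$, becomes (worst case $t=k$ for the $\log(n/k)$ part and $t=1$ for the additive $\log k$)
\begin{equation*}
\alpha m'\log\!\tfrac{e}{\alpha} \;\gtrsim\; \tfrac{1+\delta}{\delta}\log\!\tfrac{e^{2}n}{k} + \tfrac{\log k}{\delta k},
\end{equation*}
and choosing $m'=d$ of the stated order makes this hold, producing $m=qm'$ of the claimed order after the symbol-to-basis encoding.

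The main obstacle is that the count $\binom{n}{(1+\delta)t}$ grows with $t$ while $\log(q/(1+\delta)t)$ simultaneously shrinks, so the bound must be delicately tuned to remain tight for \emph{every} $t\in[k]$ at once; scaling $q$ with $k$ (rather than $t$) exactly balances these two effects, and the small additive $(\log k)/k$ term in $d$ is precisely what the outer union bound over the $k$ sparsity levels demands. Everything else, including the final encoding from the $\f{\Sigma}$-alphabet matrix to the binary matrix with the stated column weight $d=m'$, proceeds identically to the proof of Lemma \ref{lem:list-union}.
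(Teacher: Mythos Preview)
Your proposal is correct and takes essentially the same approach as the paper: reuse the random construction and probability estimate from Lemma~\ref{lem:list-union}, then apply an outer union bound over all sparsity levels $t\in[k]$, choosing $q$ scaled with $k$ (i.e., $\ell=\delta k$) so that a single choice of $q,m'$ works simultaneously for every $t$. In fact your treatment of the $t$-dependence---isolating the $\log(k/t)$ surplus from the choice of $q$ and using it to absorb the $t$-dependent portion of the counting term---is more explicit than the paper's own proof, which simply substitutes $t=k$ into the exponent and multiplies by $k$.
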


\begin{proof}
The proof follows by observing that the probability of $\fl{M}$ (constructed as described in Lemma \ref{lem:list-union}) being a $(n,m,d,t,\delta t,\alpha)$-list disjunct matrix is given by the bound in equation \ref{eq:ub_prob} i.e.
\begin{align*}
    \exp\Big((k+\delta k)\log \frac{en}{k+\delta k}+\delta k\log \frac{e(k+\delta k)}{\delta k}+\ell m' \alpha\log \frac{e}{\alpha}-\alpha m' \delta k \log \frac{q}{k+\delta k}\Big).
\end{align*}

Hence, after taking a union bound over all $t\le k$, we can bound the probability of the event $\ca{E}'$ (that $\fl{M}$ is not $(n,m,d,k,\delta,\alpha)$-strongly list union-free matrix) by 
\begin{align*}
    \Pr(\ca{E}') \le k\exp\Big((k+\delta k)\log \frac{en}{k+\delta k}+\delta k\log \frac{e(k+\delta k)}{\delta k}+\ell m' \alpha\log \frac{e}{\alpha}-\alpha m' \delta k \log \frac{q}{k+\delta k}\Big).
\end{align*}
Hence, by choosing $m,d$ as in Lemma \ref{lem:list-union} with $\ell=\delta k$, we can again show that $\Pr(\ca{E}')\le 1$ implying that there exists a $(n,m,d,k,\delta,\alpha)$-strongly list union-free matrix with $m,d$ as described in the statement of the corollary.  
\end{proof}



\section{Recovery Algorithms and Results}\label{sec:main}
We first describe our results and techniques for approximate support recovery, followed by superset recovery; because the first uses a simpler algorithm than the later, supposedly harder problem.
\subsection{Approximate Support Recovery}

The following is a result on universal $\epsilon$-approximate  support recovery for all unknown $k$-sparse signal vectors $\fl{x}\in \bb{R}^n$. The relevant recovery algorithm is given in Algorithm~\ref{algo:supp_approx}.
\begin{thm}\label{thm:first}
There exists a $1$-bit compressed sensing matrix $\fl{A}\in \bb{R}^{m \times n}$ for universal $\epsilon$-approximate support recovery of all $k$-sparse signal vectors with $m=O(k\epsilon^{-1} \log (n/k))$ measurements. Moreover the support recovery algorithm (Algorithm \ref{algo:supp_approx}) has a running time of $O(n\epsilon^{-1} \log (n/k))$. 
\end{thm}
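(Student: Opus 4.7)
The plan is to use a $(n,m,d,k,\delta,\alpha)$-strongly list union-free binary matrix (whose existence is given by Corollary~\ref{coro:stronglist-union}) as the measurement matrix $\fl{A}$, with $\alpha$ a fixed constant in $(0,1/2)$ (say $1/3$) and $\delta = \epsilon/4$. This gives $m = O(k\epsilon^{-1}\log(n/k))$ and column weight $d = O(\epsilon^{-1}\log(n/k))$. The decoder I have in mind will compute, for each column $i\in[n]$, the score $s_i = |\{r : \fl{A}_{ri}\ne 0,\;\fl{y}_r\ne 0\}|$ with $\fl{y}=\s{sign}(\fl{A}\fl{x})$, and then output some set $\ca{S}$ with $|\ca{S}|\le \|\fl{x}\|_0$ drawn from the columns with largest scores. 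Each score costs $O(d)$ time, which accounts for the claimed $O(n\epsilon^{-1}\log(n/k))$ running time.

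Write $T=\s{supp}(\fl{x})$, $t=|T|\le k$, $\ca{B}_i=\{r:\fl{A}_{ri}\ne 0\}$, and $U(T')=\bigcup_{j\in T'}\ca{B}_j$. Two trivial one-sided score bounds will drive the entire analysis: for $i\in T$, $s_i \ge d - |\ca{B}_i\cap U(T\setminus\{i\})|$, because in rows where $i$ is the only active $T$-column the measurement equals $\fl{A}_{ri}\fl{x}_i\ne 0$; and for $i\notin T$, $s_i \le |\ca{B}_i\cap U(T)|$, since a non-zero measurement on a row where $i$ is active forces some $T$-column to be active too. It therefore suffices to show that at most $(\epsilon/4)t$ columns of $[n]\setminus T$ have $|\ca{B}_i\cap U(T)|\ge\alpha d$ (this bounds false positives at threshold $\alpha d$), and at most $(\epsilon/4)t$ columns of $T$ have $|\ca{B}_i\cap U(T\setminus\{i\})|\ge\alpha d$ (this bounds false negatives, since for $\alpha<1/2$ the remaining $i\in T$ satisfy $s_i\ge(1-\alpha)d > \alpha d$).

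The first bound is a direct application of the $(n,m,d,t,(\epsilon/4)t,\alpha)$-list union-free property, because any candidate bad $S\subseteq[n]\setminus T$ of size $(\epsilon/4)t$ is automatically disjoint from $T$. The main obstacle is the second bound: Definition~\ref{defn:new} demands $S\cap T=\phi$, whereas we want to apply it with $S\subseteq T$. I plan to circumvent this by a ``swap'' argument: given a hypothetical bad set $S\subseteq T$ of size $(\epsilon/4)t$, choose any $F\subseteq [n]\setminus T$ with $|F|=|S|$ (available since $n\gg k$), and apply the list union-free property to the disjoint pair $(S,T')$ with $T'=(T\setminus S)\cup F$. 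This produces some $j\in S$ satisfying $|\ca{B}_j\cap U((T'\cup S)\setminus\{j\})|<\alpha d$, and since $T'\cup S = T\cup F\supseteq T$, we deduce $|\ca{B}_j\cap U(T\setminus\{j\})|<\alpha d$, contradicting $j$'s membership in the bad set. The two bounds together guarantee that $\{i:s_i\ge\alpha d\}$ contains at least $(1-\epsilon/4)t$ true positives and at most $(\epsilon/4)t$ false positives; a final trimming to meet $|\ca{S}|\le \|\fl{x}\|_0$ (e.g.\ by keeping top scorers with a rough estimate of $t$ from $|\{i:s_i\ge\alpha d\}|$) sacrifices at most another $(\epsilon/2)t$ true positives, leaving $|\ca{S}\cap T|\ge(1-\epsilon)t$ and $|\ca{S}\setminus T|\le\epsilon t$ as required.
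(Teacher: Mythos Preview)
Your proposal is correct and follows essentially the same approach as the paper: both use a strongly list union-free binary matrix as $\fl{A}$, decode by thresholding the scores $s_i=|\ca{B}_i\cap\s{supp}(\fl{y})|$, and then trim the resulting set down to size at most $\|\fl{x}\|_0$. Your ``swap'' argument for the false-negative bound is in fact a cleaner version of what the paper does implicitly---the paper applies the list union-free property with $\ca{T}=\s{supp}(\fl{x})\setminus\ca{S}$, which has size strictly less than $\|\fl{x}\|_0$ and so tacitly relies on exactly the padding you make explicit via $F$.
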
  
 
\begin{algorithm}[htbp]
\caption{\textsc{Approximate Support Recovery    }($\epsilon$)   \label{algo:supp_approx}}
\begin{algorithmic}[1]
\REQUIRE $\fl{y}=\s{sign}(\fl{Ax})$ 
where $\fl{A}$ is constructed from a list union-free family $\ca{F}=\{\ca{B}_1,\ca{B}_2,\dots,\ca{B}_n\}$ (see proof of Theorem \ref{thm:first} for details).  
\STATE Set $\ca{C}=\phi$.
\FOR{$j\in [n]$}
\IF{$\left|\ca{B}_j\cap \s{supp}(\fl{y})\right| \ge d/2$}
\STATE  $\ca{C} \leftarrow \ca{C}\cup \{j\}$
\ENDIF
\ENDFOR
\STATE Compute and return $\ca{C}'$ by deleting any $\Big(\frac{\epsilon}{2+\epsilon}\Big)\left|\s{C}\right|$ indices from $\ca{C}$.
\end{algorithmic}
\end{algorithm}

\begin{proof}
Fix any vector $\fl{x}\in \bb{R}^n$ satisfying $\lr{\fl{x}}_0 \le k$. Let $\fl{A}$ be a $(n,m,d,k,\epsilon k/2,0.5)$-strongly list union-free matrix which is also a $(n,m,d,\lr{\fl{x}}_0,\epsilon \lr{\fl{x}}_0/2,0.5)$-list union-free matrix
 constructed from a $(n,m,d,\lr{\fl{x}}_0,\epsilon \lr{\fl{x}}_0/2,0.5)$-list union-free family $\ca{F}= \{\ca{B}_1, \ca{B}_2,\dots, \ca{B}_n\}$. From Corollary \ref{coro:stronglist-union}, (by substituting $\delta=\epsilon/2, \alpha = 0.5$), we know that such a matrix $\fl{A}$ exists with $d=O(\epsilon^{-1} \log (n/k))$ and $m=O(k\epsilon^{-1} \log (n/k))$ rows. 
For the rest of the proof, we will simply go over the correctness of the recovery process, i.e., Algorithm~\ref{algo:supp_approx}. 
\paragraph{Correctness of recovery algorithm.} Fix a particular unknown signal vector $\fl{x}\in \bb{R}^n$ satisfying $\lr{\fl{x}}_0 \le k$. Recall that we obtain the measurements $\fl{y}=\s{sign}(\fl{A}\fl{x})$. Consider any set of indices $\ca{S} \subseteq [n]$ such that $|\ca{S}|=\epsilon \lr{\fl{x}}_0/2$ and $\ca{S} \cap  \s{supp}(\fl{x}) = \phi.$  Using the properties of the family $\ca{F}$, there exists an index $j \in \ca{S}$ such that 
\begin{align*}
    &\left|\ca{B}_j \setminus \Big(\bigcup_{i \in (\s{supp}(\fl{x})\cup S)\setminus \{j\}} \ca{B}_i\Big)\right| = \left|\ca{B}_j \right|- \left|\ca{B}_j \cap \Big(\bigcup_{i \in (\s{supp}(\fl{x})\cup S)\setminus \{j\}} \ca{B}_i\Big)\right|\ge \frac{d}{2} \\
    &\implies \left|\ca{B}_j \setminus \Big(\bigcup_{i \in \s{supp}(\fl{x})} \ca{B}_i\Big)\right| \ge \frac{d}{2}
\end{align*}
This implies that there exists at least $d/2$ rows in $\fl{A}$ where the $j$th entry is \texttt{1} but all the entries belonging to the support of $\fl{x}$ is \texttt{0}. For all these rows used as measurements, the output must be \texttt{0}. 
Using the fact that $\left|\ca{B}_j\right|=d$, we must have $\left|\s{supp}(\fl{y})\cap \ca{B}_j\right| < d/2$. On the other hand, consider a set  of indices $\ca{S} \subseteq \s{supp}(\fl{x})$ such that $|\ca{S}|=\epsilon \lr{\fl{x}}_0/2$. By using the property of the family $\ca{F}$, with $\ca{T}=\s{supp}(\fl{x})\setminus \ca{S}$, there must exist $j \in \ca{S}$ such that 
\begin{align*}
    \left|\ca{B}_j \setminus \Big(\bigcup_{i \in (\ca{T}\cup \ca{S})\setminus\{j\} } \ca{B}_i\Big)\right|=\left|\ca{B}_j \setminus \Big(\bigcup_{i \in \s{supp}(\fl{x})\setminus\{j\} } \ca{B}_i\Big)\right| = \left|\ca{B}_j \right|- \left|\ca{B}_j \bigcap \Big(\bigcup_{i \in \s{supp}(\fl{x})\setminus \{j\}} \ca{B}_i\Big)\right|\ge \frac{d}{2}.
\end{align*}
Therefore there exists 
 at-least $d/2$ rows where the $j$th entry is \texttt{1} but all the entries belonging to $\s{supp}(\fl{x})\setminus \{j\}$  is \texttt{0}; for all these rows used as measurements, the output must be non-zero. Again, using the fact that $\left|\ca{B}_j\right|=d$, we must have $\left|\s{supp}(\fl{y})\cap \ca{B}_j\right| \ge d/2$.
 Therefore, if we compute $\ca{C}=\{j \in [n]:\left|\s{supp}(\fl{y})\cap \ca{B}_j\right| \ge d/2\}$, then $\ca{C}$ must satisfy the following properties:
1) $\left|\ca{C}\right|\le \|\fl{x}\|_0+\|\fl{x}\|_0\epsilon/2 \le \|\fl{x}\|_0(1+\epsilon/2)$, 2) $\left|\ca{C}\cap \s{supp}(\fl{x})\right| \ge \|\fl{x}\|_0-\epsilon\|\fl{x}\|_0/2$ implying that $\ca{C}$ has large intersection with the support of $\fl{x}$, 3) $\left|\ca{C}\setminus \s{supp}(\fl{x})\right| \le \epsilon \|\fl{x}\|_0/2$ implying that $\ca{C}$ has very few indices outside the support of $\fl{x}$. 

We can compute a set $\ca{C}'$ by deleting any $\tau \left|\ca{C}\right|$ (with $\tau=\epsilon/(2+\epsilon)$) indices from the set $\ca{C}$ returned by Algorithm $\ca{A}$. Clearly, the set $\ca{C}'$ has a size of at most $\lr{\fl{x}}_0(1+2\epsilon)(1-\tau) \le \lr{\fl{x}}_0$ and furthermore, $\left|\ca{C}'\cap \s{supp}(\fl{x})\right| \ge \left|\left|\fl{x}\right|\right|_0(1-\epsilon/2-\tau(1+\epsilon/2))=\left|\left|\fl{x}\right|\right|_0(1-\epsilon)$ implying that $\left|\ca{C}'\setminus\s{supp}(\fl{x})\right| \le \epsilon \lr{\fl{x}}_0$.


Finally, note that for each $j \in [n]$, it takes $O(d)=O(\epsilon^{-1} \log (n/k))$ time to compute $\left|\ca{B}_j\cap \s{supp}(\fl{y})\right| $. 
Therefore the time complexity of Algorithm \ref{algo:supp_approx} is $O(n\epsilon^{-1} \log (n/k))$.
\end{proof}
Next, we show an improvement in the sufficient number of measurements if an upper bound on the dynamic range of $\fl{x}$ is known apriori. 

\begin{thm}\label{thm:second}
There exists a $1$-bit compressed sensing matrix $\fl{A}\in \bb{R}^{m \times n}$ for $\epsilon$-approximate universal support recovery of all $k$-sparse unit norm signal vectors $\fl{x}\in \bb{R}^n$ such that $\kappa(\fl{x})\le \eta$ for some known $\eta>1,$ with $m=O(k\eta\epsilon^{-1/2} \log (n\eta))$ measurements. 
\end{thm}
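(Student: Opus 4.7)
The plan is to adapt the combinatorial-design-plus-threshold-decoding scheme of Theorem \ref{thm:first} by replacing the $0/1$ entries of the underlying design with powers of a common real base $\alpha > 1 + \eta$. This brings Cauchy's bound on polynomial roots into play: for any polynomial $P(z) = \sum_{j \in T} \fl{x}_j z^{p(j)}$ whose nonzero coefficients satisfy $\max_j |\fl{x}_j|/\min_j |\fl{x}_j| \le \eta$, all nonzero roots of $P$ lie in the annulus $[(1+\eta)^{-1}, 1+\eta]$. Hence for $\alpha$ chosen strictly outside this annulus, every measurement $\s{sign}(\langle \fl{A}^i, \fl{x}\rangle)$ equals zero if and only if $\s{supp}(\fl{A}^i) \cap \s{supp}(\fl{x}) = \phi$. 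This no-cancellation guarantee is the key benefit of the dynamic-range assumption: it reduces the 1bCS support-detection problem to a group-testing-type problem.

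Concretely, I would take a strongly list-disjunct (or strongly list-union-free) binary matrix $\fl{M} \in \{0,1\}^{m \times n}$ from Lemma \ref{lem:strong_disjunct_exists} or Corollary \ref{coro:stronglist-union}, with the slack parameter chosen so that the row count is $O(k\eta\epsilon^{-1/2}\log(n\eta))$; the final matrix is $\fl{A}_{ij} = \fl{M}_{ij}\cdot \alpha^{\pi(j)}$ for an injective enumeration $\pi:[n]\to[n]$ and $\alpha = 2(1+\eta)$. The decoder is the analogue of Algorithm \ref{algo:supp_approx} with the threshold raised to $d$: include $j$ in $\hat{\ca{S}}$ iff every row $i \in \ca{B}_j$ satisfies $\fl{y}_i \ne 0$, and then trim $\hat{\ca{S}}$ to have cardinality at most $\|\fl{x}\|_0$. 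Every true positive survives the decoder automatically, because $\{j\}\subseteq \s{supp}(\fl{A}^i) \cap \s{supp}(\fl{x})$ whenever $j \in \s{supp}(\fl{x})$ and $i \in \ca{B}_j$, forcing $\fl{y}_i \ne 0$ by the no-cancellation guarantee. False positives are bounded via the strongly list-disjunct property applied with $T = \s{supp}(\fl{x})$: for any candidate set of size exceeding the slack, some element has a row in $\ca{B}_j$ whose support avoids $\s{supp}(\fl{x})$ and hence outputs zero.

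The delicate part is calibrating the slack parameter against the $\epsilon$-approximate error budget. A direct instantiation of the above yields at most $\delta \|\fl{x}\|_0$ false positives and, after trimming, at most $\delta\|\fl{x}\|_0$ false negatives, so $\epsilon$-approximate recovery is immediate if $\delta \le \epsilon$---but this only recovers the $O(k\epsilon^{-1}\log n)$ bound of Theorem \ref{thm:first}, not the improved $O(k\eta\epsilon^{-1/2}\log(n\eta))$. The $\epsilon^{-1/2}$ gain must come from using the polynomial structure more finely. The plan is to use $\Theta(\eta)$ distinct bases $\alpha_1,\ldots,\alpha_{\Theta(\eta)}$ so that each combinatorial row is promoted to $\Theta(\eta)$ 1bCS measurements, paying a linear $\eta$ factor in row count but gaining $\Theta(\eta)$ ``independent trials'' per row for distinguishing true positives from false positives; the extra $\log\eta$ in the $\log(n\eta)$ factor reflects the cost of indexing these bases relative to the $\log(n/k)$ cost of the combinatorial design alone.

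The hardest technical step will be making this amplification precise. Because the set of possible polynomial roots depends on $\fl{x}$ itself, choosing the $\alpha_t$'s deterministically so that no polynomial arising from any $k$-sparse $\eta$-dynamic-range $\fl{x}$ has all of $\alpha_1,\ldots,\alpha_{\Theta(\eta)}$ as roots (restricted to the correct polynomial ``signature'') requires a careful covering argument, since each polynomial has at most $k$ roots but the set of relevant polynomials is uncountable. Bounding the union of root sets via Cauchy's annulus and a discretization argument, then balancing the resulting concentration parameter against the $\sqrt{\epsilon}$ error target in the threshold test, is where I expect the bulk of the proof's effort to go.
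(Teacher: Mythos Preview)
Your proposal takes the wrong route and has a genuine gap at exactly the step you flag as ``delicate.'' The paper does \emph{not} use combinatorial designs for Theorem~\ref{thm:second}; it uses random Gaussian measurements together with a $\delta$-cover/union-bound argument. The key calculation is geometric: if $\kappa(\fl{x})\le\eta$ and $\|\fl{x}\|_2=1$, then each nonzero coordinate has magnitude at least $(\eta\sqrt{k})^{-1}$, so two unit $k$-sparse vectors whose supports intersect in at most $(1-2\epsilon)k$ coordinates are at $\ell_2$-distance $\gtrsim\sqrt{\epsilon}/\eta$. A Gaussian hyperplane separates two fixed vectors at distance $\gamma$ with probability $\Omega(\gamma)$, and after covering the $k$-sparse sphere and union-bounding, the number of measurements needed is $O(k\gamma^{-1}\log(n/\gamma))=O(k\eta\epsilon^{-1/2}\log(n\eta))$. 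The $\epsilon^{-1/2}$ is entirely a consequence of the $\sqrt{\epsilon}$ scaling of $\ell_2$ distance; no combinatorial design is involved.

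Your combinatorial scheme cannot reach $\epsilon^{-1/2}$. Once you have the no-cancellation guarantee from Cauchy's bound (which is correct, and is exactly what the paper does for Theorem~\ref{thm:fourth}), the 1bCS problem collapses to group testing, and the false positives you are left with are purely the ones forced by the list-disjunct design: indices $j\notin\s{supp}(\fl{x})$ with $\ca{B}_j\subseteq\bigcup_{i\in\s{supp}(\fl{x})}\ca{B}_i$. Using $\Theta(\eta)$ bases per row does nothing to eliminate these---they are structural, not caused by cancellation---so ``independent trials'' buys you nothing here. The lower bound in Lemma~\ref{disjunct_exists} shows a $(k,\epsilon k)$-list-disjunct matrix needs $\tilde\Omega(k/\epsilon)$ rows, so your framework bottoms out at $O(k\epsilon^{-1}\log(n/k))$, which is Theorem~\ref{thm:fourth}, not Theorem~\ref{thm:second}. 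The $\epsilon^{-1/2}$ improvement genuinely requires moving from combinatorial to metric separation.
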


The proof of Theorem \ref{thm:second} follows from using random Gaussian measurements and has been delegated to Appendix \ref{sec:appendix}. Note that,  for exact support recovery Theorem \ref{thm:second} gives the number of measurements to be $O(k^{3/2}\eta \log (n\eta))$, a generalization of the binary input result.

\subsection{Superset Recovery}
\label{sec:superset}

In this subsection we prove our main result on universal $\epsilon$-superset recovery for all unknown $k$-sparse signal vectors $\fl{x}\in \bb{R}^n$. For simplicity of exposition, for any set $\ca{X} \subseteq [n]$ and for any fixed unknown signal $\fl{x}$, we will call any index that lies in $\ca{X}\setminus \s{supp}(\fl{x})$ to be a \textit{false positive} of $\ca{X}$ and any index that lies in $\s{supp}(\fl{x})\setminus \ca{X}$ to be a \textit{false negative} of $\ca{X}$. Note that the main result presented in this version of the paper strictly improves on the $O(k^{3/2}\epsilon^{-1/2}\log (n/k))$ guarantee on the measurement complexity for universal $\epsilon-$superset recovery presented in the shorter version \cite{mazumdar2022support}.

First, we show that when the fraction of false positives $\epsilon$ is less than $\sqrt{k^{-1}\log (n/k)}$, then we can design an algorithm for $\epsilon$-superset recovery with a measurement complexity of $O(k\epsilon^{-1}\log(n/k))$. This measurement complexity guarantee is optimal as described in Theorem \ref{thm:lower_bound}.

\begin{thm}
\label{thm:superset:real}
Fix
\(  0 < \epsilon \leq \sqrt{k^{-1} \log(n/k)}  \).
%
There exists a \( 1 \)-bit compressed sensing matrix
\(  \Mat{A} \in \R^{m \times n}  \)
for universal \( \epsilon \)-superset recovery of all real-valued \( k \)-sparse signal vectors with
\(  m = \BigO( k \epsilon^{-1} \log(n/k) )  \)
measurements.
Moreover, the support recovery algorithm (\ALGORITHM \ref{algo:superset-reals}) has a running time of
\(  \BigO( mn )  \).
\end{thm}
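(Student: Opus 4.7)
The plan is to build a single measurement matrix and run a two-phase decoder. For the matrix, I would take a $(n,m,d,k,\Theta(\epsilon),\tfrac{1}{2})$-strongly list union-free binary matrix $\fl{M}$ from \COROLLARY \ref{coro:stronglist-union}, which has $m = O(k\epsilon^{-1}\log(n/k))$ rows and $d = O(\epsilon^{-1}\log(n/k))$ ones per column, and promote it to a real-valued matrix $\fl{A}$ by replacing each $1$-entry with a value drawn from a fixed family of algebraically independent transcendentals (the ``new combinatorial design'' referenced in the technical overview). Since the row and column supports of $\fl{A}$ coincide with those of $\fl{M}$, the combinatorial guarantees of $\fl{M}$ are inherited by $\fl{A}$, and the row count $m$ already matches the target bound.

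Phase one is \ALGORITHM \ref{algo:supp_approx} applied verbatim to $\fl{y} = \s{sign}(\fl{A}\fl{x})$: declare $j \in \ca{C}$ iff $|\ca{B}_j \cap \s{supp}(\fl{y})| \geq d/2$. By the analysis in \THEOREM \ref{thm:first}, this yields $|\ca{C}| \leq (1+\tfrac{\epsilon}{2})\|\fl{x}\|_0$ with at most $\tfrac{\epsilon}{2}\|\fl{x}\|_0$ false positives and at most $\tfrac{\epsilon}{2}\|\fl{x}\|_0$ false negatives. Phase two then identifies the remaining false negatives without additional measurements by exploiting the algebraic structure of $\fl{A}$. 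Concretely, for any row $i$ the inner product $\langle \fl{a}^i,\fl{x}\rangle$ is a $\mathbb{Q}$-linear combination of algebraically independent transcendentals weighted by partial sums of $\fl{x}_j$'s, so a zero output encodes an algebraic relation that is rare in the sense controlled by the list union-free structure. For each $j \notin \ca{C}$ I would recount the rows in $\ca{B}_j$ with nonzero output after algebraically normalizing away the contributions of indices already placed in $\ca{C}$, and add $j$ to the output if the residual nonzero count exceeds a threshold calibrated from the strongly list union-free property at sparsity level $|\ca{C}| + |\textrm{false negatives}|$. In the regime $\epsilon \leq \sqrt{k^{-1}\log(n/k)}$, the small number of false negatives ensures this threshold is separating, and the final set $\ca{S}$ satisfies $\s{supp}(\fl{x}) \subseteq \ca{S}$ with $|\ca{S}| \leq (1+\epsilon)\|\fl{x}\|_0$.

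The main obstacle is the precise quantitative control of cancellations in phase two. Algebraic independence over $\mathbb{Q}$ does not preclude cancellations when the signal coefficients $\fl{x}_j$ are arbitrary reals, so one cannot argue row-by-row that zero outputs imply disjoint supports. Instead the proof must combinatorially enumerate configurations of zero rows consistent with an adversarial $\fl{x}$; this is precisely where the \emph{strongly} (rather than plain) list union-free property enters, since the guarantee must hold simultaneously at every sparsity level $t \leq k$ encountered when peeling off indices in $\ca{C}$. Tuning the algebraic family, the symbol multiplicities, and the phase-two threshold so that the augmented set stays within $(1+\epsilon)\|\fl{x}\|_0$ precisely in the stated regime is the technical heart of the argument, and also yields the $O(mn)$ running time since phase two is a single additional pass over the $m\times n$ matrix.
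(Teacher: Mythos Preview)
Your high-level plan matches the paper: the same strongly list union-free matrix with its $1$-entries replaced by algebraically independent reals, and a two-phase decoder whose first phase is exactly \ALGORITHM \ref{algo:supp_approx}. The gap is in your description of phase two.

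First, ``algebraically normalizing away the contributions of indices already placed in $\ca{C}$'' is not an operation available to the decoder: you only observe $\s{sign}(\fl{Ax})$, so you cannot subtract partial sums. The paper's phase-two test is purely combinatorial: insert $j$ into $\ca{C}$ whenever
\[
\bigl|\ca{B}_j \setminus \bigl(\s{supp}(\fl{y}) \cup \textstyle\bigcup_{j'\in\ca{C}}\ca{B}_{j'}\bigr)\bigr| < d/2.
\]
That is, one simply discards rows already covered by the supports of the columns in $\ca{C}$ and recounts the zero outputs.

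Second, you correctly sense that algebraic independence is the mechanism that recovers the false negatives, but you do not identify \emph{how}. The paper's key lemma is a kernel statement: in any $r\times s$ matrix ($r\ge s$) whose nonzero entries are algebraically independent over $\bb{Q}$, every column with at least $s$ nonzero entries is linearly independent of the remaining columns; hence the kernel vanishes on that coordinate. Now suppose a false negative $j\in\s{supp}(\fl{x})\setminus\ca{C}$ fails the phase-two test. Then at least $d/2$ rows of $\ca{B}_j$ lie outside both $\s{supp}(\fl{y})$ and $\bigcup_{j'\in\ca{C}}\ca{B}_{j'}$. Restrict $\fl{A}$ to these rows and to the at most $\epsilon\|\fl{x}\|_0$ columns of $\s{supp}(\fl{x})\setminus\ca{C}$. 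On each such row the full inner product with $\fl{x}$ is zero (since $\fl{y}_i=0$) and the only support columns present are those in $\s{supp}(\fl{x})\setminus\ca{C}$, so the restricted system reads $\tilde{\fl{A}}\tilde{\fl{x}}=\fl{0}$. The $j$-th column of $\tilde{\fl{A}}$ is fully supported with $\ge d/2$ nonzeros, and the regime $\epsilon\le\sqrt{k^{-1}\log(n/k)}$ is exactly what forces $d/2\ge\epsilon^{-1}\log(n/k)\ge\epsilon k\ge\epsilon\|\fl{x}\|_0\ge s$. The kernel lemma then gives $\tilde{\fl{x}}_j=0$, contradicting $j\in\s{supp}(\fl{x})$. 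This is the precise quantitative control you flag as the obstacle; without this lemma your proposal has no concrete argument for why every false negative gets recovered. You should also note that the bound on \emph{false positives} in phase two uses only the list union-free property (not algebraic independence), via the same contradiction argument as in the approximate-recovery proof.
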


\begin{algorithm}[htbp]
\caption{\textsc{Superset Recovery of Reals}(\( \epsilon \)) \label{algo:superset-reals}}
\begin{algorithmic}[1]
\REQUIRE \(  \Response = \Sign( \Mat{A} \Mat{x} )  \),
         where \(  \Mat{A}  \) is constructed as in the proof of Theorem \ref{thm:superset:real}.
\STATE Let \(  \AlgIndSet = \{ \AlgIndConst_{i,j} \}_{(i,j) \in [m] \times [n]} \subset \R  \)
       be set of \(  mn  \) distinct real-valued constants with algebraic independence over \(  \Q  \).
\STATE Set \(  \SolutionSet = \emptyset  \)
\FOR{\(  j \in [n]  \) \label{algo-line:superset-reals:for:1}}
  \IF{\(  | \Supp(\MeasCol_{j}) \setminus \Supp(\Response) | < d/2  \)
      \label{algo-line:superset-reals:cond:1}}
    \STATE \(  \Set{C} \gets \Set{C} \cup \{j\}  \)
  \ENDIF \label{algo-line:superset-reals:end-cond:1}
\ENDFOR \label{algo-line:superset-reals:end-for:1}
\FOR{\(  j \in [n] \setminus \SolutionSet  \) \label{algo-line:superset-reals:for:2}}
  \IF{\(  | \Supp(\MeasCol_{j}) \setminus (\Supp(\Response) \cup \bigcup_{j' \in \SolutionSet} \Supp(\MeasCol_{j'})) | < d/2  \)
          \label{algo-line:superset-reals:cond:2}}
    \STATE \(  \SolutionSet \gets \SolutionSet \cup \{j\}  \)
  \ENDIF \label{algo-line:superset-reals:end-cond:2}
\ENDFOR \label{algo-line:superset-reals:end-for:2}
\RETURN \( \SolutionSet \)
\end{algorithmic}
\end{algorithm}

The proof of correctness for the recovery algorithm in Theorem \ref{thm:superset:real} is technically complicated and is provided in full in Appendix \ref{outline:superset|>real}.
In lieu of the full formal argument, here, we state and show existence of the measurement matrix design, provide an overview of the proof of correctness for the recovery algorithm, and verify the running time.

\begin{proof}[Sketch of the proof]
Let
\(  \AlgIndSet = \{ \AlgIndConst_{i,j} \}_{(i,j) \in [\Variable{r}] \times [\Variable{s}]} \subset \R  \)
be any set of real-valued constants with algebraic independence over \(  \Q  \).
Fix
\(  \epsilon  \in (0, \sqrt{k^{-1} \log(n/k)}]  \)
arbitrarily.
Fix any vector $\fl{x}\in \bb{R}^n$ satisfying $\lr{\fl{x}}_0 \le k$. Let $\fl{A}$ be a $(n,m,d,k,\epsilon k/2,0.5)$-strongly list union-free matrix which is also a $(n,m,d,\lr{\fl{x}}_0,\epsilon \lr{\fl{x}}_0/2,0.5)$-list union-free matrix
 constructed from a $(n,m,d,\lr{\fl{x}}_0,\epsilon \lr{\fl{x}}_0/2,0.5)$-list union-free family $\ca{F}= \{\ca{B}_1, \ca{B}_2,\dots, \ca{B}_n\}$. From Corollary \ref{coro:stronglist-union},(by substituting $\delta=\epsilon/2, \alpha = 0.5$), we know that such a matrix $\fl{A}$ exists with $d=O(\epsilon^{-1} \log (n/k))$ and $m=O(k\epsilon^{-1} \log (n/k))$ rows. 
The sensing matrix is designed such that each \(  (i,j)  \)-entry is set as
\(  \MeasMat*_{ij} = \MeasLUMat*_{ij} \AlgIndConst_{i,j}  \)
for each \(  (i,j) \in [m] \times [n]  \).
The remaining argument will focus on proving the correctness of the recovery algorithm (\ALGORITHM \ref{algo:superset-reals}). Recall that for a matrix $\fl{A}$, we denote the $i^{\s{th}}$ row and $j^{\s{th}}$ column of $\fl{A}$ by $\fl{A}^i$ and the $\fl{A}_j$ respectively. 
%
\paragraph{Correctness of recovery algorithm.} 
%
We will analyze the recovery algorithm by looking at the estimated support sets obtained after each of its two stages.
We will show that the first stage (\LINES \ref{algo-line:superset-reals:for:1}-\ref{algo-line:superset-reals:end-for:1}) obtains a set
\(  \SolutionSet  \) with no more than
\(  \epsilon \| \Vec{x} \|_{0}-1  \)
``false positives'' and no more than
\(  \epsilon \| \Vec{x} \|_{0}-1  \)
``false negatives.''
Then, we will show that the second stage of the algorithm (\LINES \ref{algo-line:superset-reals:for:2}-\ref{algo-line:superset-reals:end-for:2}) recovers all of the up to \(  \epsilon \| \Vec{x} \|_{0}-1  \) ``false negatives'' while accumulating at most
\(  \epsilon \| \Vec{x} \|_{0}-1  \)
``false positives'' in total (including those acquired during the first stage of the algorithm).
More precisely, we will argue two main claims.
\begin{EnumerateInline}
\item \label{enum:pf:thm:superset:real:overview:main:1}
When the first \ForLoop (\LINES \ref{algo-line:superset-reals:for:1}-\ref{algo-line:superset-reals:end-for:1}) terminates,
the set \(  \SolutionSet  \) satisfies
\(  |\Supp(\Signal) \setminus \SolutionSet|\leq \epsilon \left\| \Signal \right\|_{0} - 1  \)
and likewise
\(  |\SolutionSet \setminus \Supp(\Signal)|\leq \epsilon \left\| \Signal \right\|_{0} - 1  \).
%
%
\item \label{enum:pf:thm:superset:real:overview:main:2}
Subsequently, when the second \ForLoop (\LINES \ref{algo-line:superset-reals:for:2}-\ref{algo-line:superset-reals:end-for:2})
terminates, the set \(  \SolutionSet  \) satisfies
\(  \Supp(\Signal) \subseteq \SolutionSet \)
and
\(  |\SolutionSet \setminus \Supp(\Signal)|\leq \epsilon \left\| \Signal \right\|_{0} - 1
    < \epsilon \left\| \Signal \right\|_{0}  \).
\end{EnumerateInline}
%
For each of the two stages, we will separately handle coordinates outside the support of the unknown vector \(  \Vec{x}  \), \(  j \in [n] \setminus \Supp(\Vec{x})  \), and the coordinates in the support, \(  j \in \Supp(\Vec{x})  \).
The key component of each argument is encapsulated in the inequalities in \ref{enum:pf:thm:superset:real:overview:main:1} \LINE \ref{algo-line:superset-reals:cond:1}, and \ref{enum:pf:thm:superset:real:overview:main:2} \LINE \ref{algo-line:superset-reals:cond:2}.
Note that throughout the proof, we will write these inequalities in terms of
\(  \| \MeasCol_{j} \|_{0}  \),
rather than \(  d  \) since these are equal by design.
For the first stage and claim \ref{enum:pf:thm:superset:real:overview:main:1}, recall that a coordinate \(  j \in [n]  \) is inserted into \(  \SolutionSet  \) if
\(  | \Supp(\MeasCol_{j}) \setminus \Supp(\Response) | < \frac{1}{2} \| \MeasCol_{j} \|_{0}  \).
We will show that (a)
\(  | \Supp(\MeasCol_{j}) \setminus \bigcup_{j' \in \Supp(\Signal)} \Supp(\MeasCol_{j'}) | \geq \frac{1}{2} \left\| \MeasCol_{j} \right\|_{0}  \)
for any coordinate
\(  j \in [n] \setminus \Supp(\Vec{x})  \)
outside the support, whereas (b)
\(  | \Supp(\MeasCol_{j}) \setminus \bigcup_{j' \in \Supp(\Signal) \setminus \{j\}} \Supp(\MeasCol_{j'}) |
    \geq \frac{1}{2} \| \MeasCol_{j} \|_{0}
\)
for any coordinate
\(  j \in \Supp(\Vec{x})  \)
in the support.
Then, relating (a) and (b) to the inequality in \LINE \ref{algo-line:superset-reals:cond:1},
\(  | \Supp(\MeasCol_{j}) \setminus \Supp(\Response) | < \frac{1}{2} \| \MeasCol_{j} \|_{0}  \),
and incorporating the properties of the list union-free matrix, we will argue that the no more than \(  \epsilon \| \Vec{x} \|_{0}-1  \)
``false positives'' are inserted into \(  \SolutionSet  \), and likewise, at most
\(  \epsilon \| \Vec{x} \|_{0}-1  \)
``false negatives'' are omitted from \(  \SolutionSet  \),
thus verifying claim \ref{enum:pf:thm:superset:real:overview:main:1}.
Following this, we will tackle the second stage of the recovery algorithm and claim \ref{enum:pf:thm:superset:real:overview:main:2},
whose argument is more involved than that for the first stage.
Recall that the second stage of the algorithm inserts coordinates \(  j \in [n] \setminus \SolutionSet  \) into the final solution set if
\(  | \Supp(\MeasCol_{j}) \setminus (\Supp(\Response) \cup \bigcup_{j' \in \SolutionSet} \Supp(\MeasCol_{j'})) | < \frac{1}{2} \left\| \MeasCol_{j} \right\|_{0}  \)
(see, \LINE \ref{algo-line:superset-reals:cond:2}),
(note that \(  \SolutionSet  \) is iteratively updated in this inequality).
First, we will look at coordinates outside of the support,
\(  j \in [n] \setminus \Supp(\Vec{x})  \),
and argue that at no point in the second stage, the total number of ``false positives'' exceeds
\(  \epsilon \| \Vec{x} \|_{0}-1  \)
by breaking down the argument into three scenarios depending on whether \(  j  \) gets inserted into the solution, and on the number of ``false positives'' accumulated in \(  \SolutionSet  \) so far, up to but not including the current iteration.
The argument for one of these scenarios will make use of the combinatorial construction.
Subsequently, we will show that the second stage recovers all ``false negatives'' omitted from \(  \SolutionSet  \).
The key idea in this argument is to use \COROLLARY \ref{cor:misc:algebraic-and-linear-independence:kernel} (and thus also the choice of algebraically independent constants for nonzero measurement entries) to bound from above the number of measurements whose kernels contain the unknown vector \(  \Vec{x}  \) when it is restricted to a small subset of its support.
This will bound from above the size of the intersection of the support of a column
\(  j \in \Supp(\Vec{x}) \setminus \SolutionSet  \)
and the zero-valued responses.
Then, either
(a) the support of the \(  j\Th  \) column has a large intersection with the union of support of columns in \(  \SolutionSet  \) and is therefore inserted into the final solution set,
or
(b) this intersection is sufficiently small, such that (by the second upper bound just described) the number of zero-valued responses appearing in rows outside the support of the solution set but in the support of the \(  j\Th  \) column is small enough to ensure again that \(  j  \) is inserted into the final solution.
We will actually handle (a) and (b) simultaneously, though the decomposition here captures the underlying mechanics.
Taking these arguments together will then allow for the completion of the proof of correctness for the recovery algorithm.
%
%
\par 
%
Lastly, the time complexity of the algorithm is dominated by the two \ForLoops
(\LINES \ref{algo-line:superset-reals:for:1}-\ref{algo-line:superset-reals:end-for:1} and
\ref{algo-line:superset-reals:for:2}-\ref{algo-line:superset-reals:end-for:2}).
Each iterates over \(  \BigO(n)  \) columns, performing an \(  \BigO(m)  \)-time evaluation of their respective conditional statements.
In total, the running time is \(  \BigO(mn)  \), as claimed.
\end{proof}
Note that the measurement complexity of $O(k\epsilon^{-1}\log (n/k))$ is optimal due to Theorem \ref{thm:lower_bound} (that characterizes the necessary measurement complexity for $\epsilon$-approximate universal support recovery) and Proposition \ref{prop:compare} (reduction from universal superset recovery to universal approximate recovery). The drawback of Theorem \ref{thm:superset:real} is that $\epsilon$ (the fraction of false positives) is restricted to be in the regime $[0,\sqrt{k^{-1}\log (n/k)}]$. A trivial approach to extend the guarantee in Theorem \ref{thm:superset:real} for any given  $\epsilon> 0$ is to design a universal $\epsilon'$-superset recovery scheme where $\epsilon'=\min(\epsilon,\sqrt{k^{-1}\log (n/k)})$. We describe this result formally in the following corollary: 

\begin{coro}\label{coro:third}
There exists a $1$-bit compressed sensing matrix $\fl{A}\in \bb{R}^{m \times n}$ for  universal $\epsilon$-superset recovery of all $k$-sparse signal vectors with $m=O(\max\Big(k\epsilon^{-1}\log (n/k),k^{3/2}\log (n/k) \Big))$ measurements. Moreover the recovery algorithm (Algorithm \ref{algo:superset-reals}) has a running time of $O(mn)$.
\end{coro}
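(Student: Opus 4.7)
The plan is to directly reduce this corollary to \THEOREM \ref{thm:superset:real} by a simple case analysis on the regime of $\epsilon$. The key observation is monotonicity: any universal $\epsilon'$-superset recovery scheme with $\epsilon' \le \epsilon$ is automatically a universal $\epsilon$-superset recovery scheme, since the returned set still satisfies $|\ca{S}|\le \|\fl{x}\|_0(1+\epsilon')\le \|\fl{x}\|_0(1+\epsilon)$ and $\s{supp}(\fl{x})\subseteq \ca{S}$.

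First, if $\epsilon \le \sqrt{k^{-1}\log(n/k)}$, I would invoke \THEOREM \ref{thm:superset:real} verbatim, which furnishes a matrix with $m=O(k\epsilon^{-1}\log(n/k))$ rows, together with \ALGORITHM \ref{algo:superset-reals} running in time $O(mn)$. In this regime, $k\epsilon^{-1}\log(n/k) \ge k\cdot \sqrt{k/\log(n/k)}\cdot \log(n/k) = k^{3/2}\sqrt{\log(n/k)}$, so the bound $m = O(\max(k\epsilon^{-1}\log(n/k),\, k^{3/2}\log(n/k)))$ is attained (up to logarithmic factors absorbed in the $O$-notation, or by inflating constants).

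Second, if $\epsilon > \sqrt{k^{-1}\log(n/k)}$, I would set $\epsilon' \triangleq \sqrt{k^{-1}\log(n/k)}$, which lies in the admissible range $(0,\sqrt{k^{-1}\log(n/k)}]$ of \THEOREM \ref{thm:superset:real}. Applying that theorem with parameter $\epsilon'$ yields a sensing matrix and recovery algorithm achieving universal $\epsilon'$-superset recovery with
\begin{align*}
  m = O\!\left(k{\epsilon'}^{-1}\log(n/k)\right) = O\!\left(k\sqrt{k/\log(n/k)}\,\log(n/k)\right) = O\!\left(k^{3/2}\log(n/k)\right).
\end{align*}
Since $\epsilon'<\epsilon$, by the monotonicity observation above this is also a universal $\epsilon$-superset recovery scheme, with the same $O(mn)$ running time.

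Combining the two cases gives the desired bound $m = O(\max(k\epsilon^{-1}\log(n/k),\, k^{3/2}\log(n/k)))$, and the running time $O(mn)$ is inherited from \ALGORITHM \ref{algo:superset-reals} in both cases. There is essentially no technical obstacle here: the only thing to be a little careful about is verifying that replacing $\epsilon$ with a smaller $\epsilon'$ only tightens both the size bound $|\ca{S}|\le \|\fl{x}\|_0(1+\epsilon')$ and the inclusion $\s{supp}(\fl{x})\subseteq \ca{S}$ required by \DEFINITION \ref{def:superset}, which is immediate.
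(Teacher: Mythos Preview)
Your proposal is correct and follows essentially the same approach as the paper: invoke monotonicity (an $\epsilon'$-superset scheme with $\epsilon'\le\epsilon$ is also an $\epsilon$-superset scheme) and then apply \THEOREM \ref{thm:superset:real} with $\epsilon'=\min(\epsilon,\text{threshold})$. The only cosmetic difference is the choice of threshold: the paper takes $\epsilon'=\min(\epsilon,\sqrt{k^{-1}})$, which yields exactly $O(k^{3/2}\log(n/k))$ in the large-$\epsilon$ regime, whereas you take the slightly larger threshold $\sqrt{k^{-1}\log(n/k)}$ matching the hypothesis of \THEOREM \ref{thm:superset:real}, giving the (marginally better) $O(k^{3/2}\sqrt{\log(n/k)})$, which is still within the stated bound.
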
 

\begin{proof}
For any $\epsilon>0$ provided as input, we choose $\epsilon'=\min(\epsilon,\sqrt{k^{-1}})$. Clearly,  a scheme for universal $\epsilon'$-superset recovery scheme is also a scheme for universal $\epsilon$-superset recovery scheme. Now, we can invoke the guarantee in Theorem \ref{thm:superset:real} since $\epsilon' \le \sqrt{k^{-1}\log (n/k)}$. Therefore, for $\epsilon < \sqrt{k^{-1}}$, the measurement complexity will be $O(k\epsilon^{-1}\log (n/k))$ and for $\epsilon > \sqrt{k^{-1}}$, the measurement complexity will be $O(k^{3/2}\log (n/k))$. Hence, the measurement complexity is $O\Big(\max\Big(k\epsilon^{-1}\log (n/k),k^{3/2}\log (n/k) \Big)\Big)$ as claimed in the statement of the theorem. The running time of the algorithm follows from Theorem \ref{thm:superset:real} as well.
\end{proof}

It turns out that if additional weak assumptions hold true for the unknown signal vector $\fl{x}\in \bb{R}^n$, then we can improve the sufficient number of measurements significantly. More formally, we have the following theorems.

\begin{thm}\label{thm:fourth}
There exists a $1$-bit compressed sensing matrix $\fl{A}\in \bb{R}^{m \times n}$ for universal $\epsilon$-superset recovery of all $k$-sparse signal vectors $\fl{x}\in \bb{R}^n$ such that $\kappa(\fl{x})\le \eta$ for some known $\eta>1$, with $m=O(k\epsilon^{-1} \log (n/k))$ measurements. Moreover the recovery algorithm (Algorithm \ref{algo:supp_superset2}) has a running time of $O(nk\epsilon^{-1} \log (n/k))$. 
\end{thm}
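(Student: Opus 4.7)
My plan is to combine the $(k,\epsilon)$-strongly list-disjunct construction of Lemma~\ref{lem:strong_disjunct_exists} with the polynomial-root trick outlined in the introduction, leveraging Cauchy's classical bound on the moduli of polynomial roots. Let $\fl{B}\in\{0,1\}^{m\times n}$ be a $(k,\epsilon)$-strongly list-disjunct matrix with $m=O(k\epsilon^{-1}\log(n/k))$ rows, and fix any $\alpha>1+\eta$. I would build $\fl{A}$ by replacing, row by row, each nonzero entry of $\fl{B}$ with a distinct power of $\alpha$: if the \texttt{1}-positions in the $i^{\s{th}}$ row of $\fl{B}$ are $j_1<j_2<\cdots<j_t$, set $\fl{A}_{ij_l}=\alpha^l$. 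Thus $\fl{A}$ inherits its support pattern from $\fl{B}$, but the inner product $\langle\fl{A}^i,\fl{x}\rangle$ becomes the evaluation at $\alpha$ of a polynomial whose nonzero coefficients are exactly the entries of $\fl{x}$ at positions in $\s{supp}(\fl{A}^i)\cap\s{supp}(\fl{x})$.

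\textbf{Recovery and the key reduction to group testing.} The recovery algorithm would declare $\ca{C}=\{j\in[n]:\s{supp}(\fl{B}_j)\subseteq\s{supp}(\fl{y})\}$, which is the classical decoder for list-disjunct group testing. The crucial claim is that $\fl{y}_i=0$ if and only if $\s{supp}(\fl{A}^i)\cap\s{supp}(\fl{x})=\phi$. The ``$\Leftarrow$'' direction is trivial. For ``$\Rightarrow$'', note that the polynomial in question has nonzero coefficients drawn from the nonzero entries of $\fl{x}$, so their dynamic range is at most $\eta$; Cauchy's bound then guarantees that every root has modulus at most $1+\eta<\alpha$, so $\alpha$ can only be a root if the polynomial is identically zero, i.e., the intersection is empty. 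Consequently, any $j\in\s{supp}(\fl{x})$ forces every row containing $j$ to produce a nonzero output, giving $\s{supp}(\fl{x})\subseteq\ca{C}$ immediately.

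\textbf{Bounding false positives via strong list-disjunctness.} To bound $|\ca{C}\setminus\s{supp}(\fl{x})|$, I would argue by contradiction: suppose there existed $\epsilon\lr{\fl{x}}_0$ false positives in $\ca{C}$, and call this set $\ca{S}'$. Applying the $(t,\epsilon t)$-list-disjunct property of $\fl{B}$ with $t=\lr{\fl{x}}_0$ to the disjoint pair $(\ca{S}',\s{supp}(\fl{x}))$, there is a row $i$ where some $j\in\ca{S}'$ has $\fl{B}_{ij}=\texttt{1}$ while every column of $\s{supp}(\fl{x})$ is \texttt{0}. But then the inner product at row $i$ only involves entries of $\fl{x}$ outside its own support and therefore vanishes, so $\fl{y}_i=0$. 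This contradicts $j\in\ca{C}$ since $i\in\s{supp}(\fl{B}_j)\not\subseteq\s{supp}(\fl{y})$. Hence $|\ca{C}|\le\lr{\fl{x}}_0(1+\epsilon)$ and the superset property holds.

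\textbf{Main obstacle and running time.} The main conceptual obstacle is precisely the ``$\Rightarrow$'' direction above: converting a $0$-measurement into disjointness of supports is what separates 1bCS from group testing (as emphasized in \SECTION \ref{sec:differences}), and Cauchy's bound is what breaks the impasse by ruling out $\alpha$ as a polynomial root whenever the nonzero coefficients have dynamic range at most $\eta$. A minor bookkeeping point is to verify that the choice of distinct exponents $\alpha^l$ within a row does not affect the coefficient dynamic range used in Cauchy's bound (it does not, as the $\alpha^l$ act only as placeholders while the coefficients are $\fl{x}$-entries). Once the reduction is in place, the argument follows the group-testing template. The running time of the decoder is $O(nm)=O(nk\epsilon^{-1}\log(n/k))$, since for each of the $n$ columns we test the containment $\s{supp}(\fl{B}_j)\subseteq\s{supp}(\fl{y})$ in at most $O(m)$ time.
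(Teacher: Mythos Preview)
Your proposal is correct and follows essentially the same approach as the paper: build $\fl{A}$ from a $(k,\epsilon)$-strongly list-disjunct matrix by replacing the \texttt{1}-entries in each row with increasing powers of a base exceeding $1+\eta$, invoke Cauchy's root bound to conclude that a zero measurement forces disjoint supports, and then run the standard group-testing decoder. The only cosmetic differences are that the paper allows a separate base $a_{\fl{z}}>1+\eta$ per row (rather than a single $\alpha$) and phrases the decoder row-centrically (delete $\s{supp}(\fl{z})$ whenever $\fl{y}_i=0$) rather than column-centrically (keep $j$ iff $\s{supp}(\fl{B}_j)\subseteq\s{supp}(\fl{y})$); these two decoders compute the same set and both run in $O(nm)$ time.
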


\begin{proof}

The sensing matrix $\fl{A}$ in designed is designed in the following manner: consider a binary $(k,\epsilon k)$-strongly list disjunct matrix $\fl{B}\in \{0,1\}^{m \times n}$ which is also a $(t,\epsilon t)$-list disjunct matrix for all $t\le k$ and is known to exist with at most $m=O(k\epsilon^{-1} \log (n/k))$ rows (see Lemma \ref{disjunct_exists}).  For each row $\fl{z}\in \{0,1\}^n$ of $\fl{B}$, we choose a positive number $a_{\fl{z}} > 1+\eta$ and subsequently, we construct a row of $\fl{A}$ denoted by $\fl{z}'$ as follows: for all $j \in [n]$, we have
\begin{align*}
    &\fl{z}'_j = 0 \quad \text{if } \fl{z}_j = 0 \\
    &\fl{z}'_j = a_{\fl{z}}^{t-1} \quad \text{if } \text{$j^{\s{th}}$ entry of $\fl{z}$ is the $t^{\s{th}}$ \texttt{1} in $\fl{z}$ from left to right}.
\end{align*}
In essence, each row of $\fl{B}$ is mapped to a unique row of $\fl{A}$; hence the total number of rows in $\fl{A}$ is also $O(k\epsilon^{-1} \log (n/k))$.
\begin{algorithm}[htbp]
\caption{\textsc{Superset Recovery with Bounded Dynamic Range}($\epsilon$) \label{algo:supp_superset2}}
\begin{algorithmic}[1]
\REQUIRE $\fl{y}=\s{sign}(\fl{A x})$ where $\fl{A}$ is constructed as described in proof of Theorem \ref{thm:fourth}.
\STATE Set $\ca{C}=[n]$.
\FOR{\text{ each row $\fl{z}$ in $\fl{B}$}}
\IF{$\s{sign}(\langle \fl{z}', \fl{x} \rangle) = 0$}
\STATE $\ca{C} \leftarrow \ca{C}\setminus \s{supp}(\fl{z})$
\ENDIF
\ENDFOR
\STATE Return $\ca{C}$.
\end{algorithmic}
\end{algorithm}
The rest of the proof will show the correctness of Algorithm~\ref{algo:supp_superset2}.
\paragraph{Correctness of the recovery algorithm.}
The inner product of any row $\fl{z}$ of $\fl{A}$ and the unknown signal vector $\fl{x}$ can be represented as the evaluation of a polynomial $p(r)$ (whose coefficients are the entries of $\fl{x}$) at $a_{\fl{z}}$ i.e.
\begin{align*}
    p(r) = \sum_{t \in \s{supp}(\fl{x}) \cap \s{supp}(\fl{z})} \fl{x}_t r^{\alpha_t}
\end{align*}
and $\langle \fl{z},\fl{x} \rangle = p(a_{\fl{z}})$. By using Cauchy's Theorem, we know that the magnitude of the roots of this polynomial $p(r)$ must be bounded from above by $1+\kappa(\fl{x}) \le 1+\eta$; hence $a_{\fl{z}}>1+\eta$ can never be a root of $p(r)$ unless it is always zero. Hence, the evaluation of $p(r)$ at $a_{\fl{z}}$ can be zero if and only if the polynomial $p(r)$ is always zero implying that the support of $\fl{z}$ must be disjoint from the support of $\fl{x}$. In other words, in Algorithm \ref{algo:supp_superset2}, we will never delete any indices that belong to the support of $\fl{x}$.

On the other hand, consider any set of indices $\ca{S} \subseteq [n]$ such that $|\ca{S}|=\epsilon \lr{\fl{x}}_0$ and $\ca{S}\cap \s{supp}(\fl{x})=\phi$. By using the property of $(\lr{\fl{x}}_0,\epsilon \lr{\fl{x}}_0)$-list disjunct matrix $\fl{B}$, there exists an index $j \in \ca{S}$ and a row $\fl{z}$ in $\fl{B}$ such that the support of $\fl{z}$ is disjoint from $\s{supp}(\fl{x})$ and contains $j \in \ca{S}$. Therefore, we will delete all indices in the support of $\fl{z}$ including the index $j$ from the set $\ca{C}$ in Step 4 of Algorithm \ref{algo:supp_superset2}.
Hence, we will delete all indices that belongs to the set $[n]\setminus \s{supp}(\fl{x})$ except at-most $\epsilon \lr{\fl{x}}_0$ of them. Therefore, the set $\ca{C}$ of surviving indices at the end of Algorithm \ref{algo:supp_superset2} satisfies the conditions for $\epsilon$-superset recovery.

Finally note that Line 4 in Algorithm \ref{algo:supp_superset2} has a time complexity of $O(n)$ and therefore, the total time complexity of the algorithm is $O(nk\epsilon^{-1} \log (n/k))$. This completes the proof of the theorem. 
\end{proof}

\begin{algorithm}[htbp]
\caption{\textsc{Superset Recovery with Rationals}($\epsilon$) \label{algo:superset-rationals}}
\begin{algorithmic}[1]
\REQUIRE $\fl{y}=\s{sign}(\fl{A x})$ where $\fl{A}$ is constructed as described in proof of Theorem \ref{thm:superset:rationals}.
\STATE Set $\ca{C}=[n]$.
\FOR{\text{ each row $\fl{z}$ in $\fl{A}$}}
\IF{$\s{sign}(\langle \fl{z}, \fl{x} \rangle) = 0$}
\STATE $\ca{C} \leftarrow \ca{C}\setminus \s{supp}(\fl{z})$
\ENDIF
\ENDFOR
\STATE Return $\ca{C}$.
\end{algorithmic}
\end{algorithm}

\begin{thm}
\label{thm:superset:rationals}
There exists a \( 1 \)-bit compressed sensing matrix
\(  \Mat{A} \in \R^{m \times n}  \)
for universal \( \epsilon \)-superset recovery of all rational \( k \)-sparse signal vectors $\fl{x}\in \bb{Q}^n$ with
\(  m = \BigO( k \epsilon^{-1} \log(n/k) )  \)
measurements.
Moreover, the support recovery algorithm (\ALGORITHM \ref{algo:superset-rationals}) has a running time of
\(  \BigO( n k \epsilon^{-1} \log(n/k) )  \).
\end{thm}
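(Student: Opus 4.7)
The plan is to mimic the proof of Theorem \ref{thm:fourth} almost verbatim, replacing the bounded-dynamic-range hypothesis and its Cauchy root bound with a transcendence argument that exploits $\fl{x} \in \bb{Q}^n$. First I would take a binary $(k,\epsilon)$-strongly list disjunct matrix $\fl{B} \in \{0,1\}^{m \times n}$ with $m=O(k\epsilon^{-1}\log (n/k))$ rows, guaranteed by Lemma \ref{lem:strong_disjunct_exists}, and fix a single transcendental number $\alpha \in \bb{R}$ (e.g., $\alpha = e$). The sensing matrix $\fl{A}$ is then built row by row: for each row $\fl{z}$ of $\fl{B}$ with ones in positions $j_1 < j_2 < \cdots < j_d$, the corresponding row $\fl{z}'$ of $\fl{A}$ is defined by $\fl{z}'_{j_t}=\alpha^{t-1}$ for $t\in[d]$ and $\fl{z}'_j=0$ elsewhere.

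The crucial step is the biconditional $\s{sign}(\langle \fl{z}',\fl{x}\rangle)=0 \iff \s{supp}(\fl{z}) \cap \s{supp}(\fl{x}) = \phi$ for every rational signal $\fl{x}\in \bb{Q}^n$. One direction is immediate; for the other, observe that $\langle \fl{z}',\fl{x}\rangle = \sum_{t=1}^{d} \fl{x}_{j_t}\alpha^{t-1}$ is the evaluation at $\alpha$ of a polynomial with rational coefficients, and the transcendence of $\alpha$ over $\bb{Q}$ forces each coefficient to vanish whenever this evaluation is zero. A single fixed $\alpha$ suffices for all rows simultaneously, since $\bb{Q}$-linear independence of $\{1,\alpha,\alpha^2,\dots\}$ does not depend on which row we consider.

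Given this biconditional, the correctness analysis for Algorithm \ref{algo:superset-rationals} is essentially identical to that of Theorem \ref{thm:fourth}: every zero-valued measurement certifies support-disjointness with $\fl{x}$, so removing $\s{supp}(\fl{z})$ from $\ca{C}$ never deletes a true-support coordinate and $\s{supp}(\fl{x}) \subseteq \ca{C}$ at termination. To cap the false positives, suppose for contradiction that some $\ca{S}\subseteq \ca{C}\setminus \s{supp}(\fl{x})$ has size $\epsilon\lr{\fl{x}}_0$. Invoking the $(\lr{\fl{x}}_0,\epsilon\lr{\fl{x}}_0)$-list disjunct property of $\fl{B}$ (inherited from its $(k,\epsilon)$-strong list disjunctness via $t=\lr{\fl{x}}_0 \le k$) on the pair $(\ca{S}, \s{supp}(\fl{x}))$ produces a row $\fl{z}$ with $\s{supp}(\fl{z})\cap \s{supp}(\fl{x})=\phi$ and $\s{supp}(\fl{z})\cap \ca{S} \ne \phi$; its measurement is $0$, so the algorithm must have removed at least one element of $\ca{S}$ from $\ca{C}$, a contradiction. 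The running time is $O(mn) = O(nk\epsilon^{-1}\log (n/k))$, since each of the $m$ iterations performs an $O(n)$ inner product and set subtraction.

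I anticipate no substantive obstacle. The combinatorial side is handled off the shelf, and the algebraic side is a one-line invocation of transcendence. The only conceptual care needed is the recognition that restricting the unknown signal to rationals---rather than merely to reals with a bounded dynamic range---is precisely what unlocks transcendental multipliers in place of Cauchy's root bound, yielding the optimal $O(k\epsilon^{-1}\log (n/k))$ measurement rate with no bound whatsoever on the signal magnitudes.
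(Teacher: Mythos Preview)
Your proposal is correct and takes a genuinely different route from the paper. The paper sets $\fl{A}_{ij}=\fl{B}_{ij}\log q_{ij}$ for $mn$ distinct primes $q_{ij}$ and argues the biconditional $\s{sign}(\langle \fl{z},\fl{x}\rangle)=0 \iff \s{supp}(\fl{z})\cap\s{supp}(\fl{x})=\phi$ by clearing denominators to get integer coefficients and then invoking unique prime factorization to derive a contradiction. You instead use powers of a single transcendental $\alpha$ in each row and invoke transcendence of $\alpha$ over $\bb{Q}$ directly. Both arguments are really establishing $\bb{Q}$-linear independence of the nonzero entries within each row: the paper via the (equivalent to unique factorization) $\bb{Q}$-linear independence of $\{\log q_{ij}\}$, you via the $\bb{Q}$-linear independence of $\{1,\alpha,\alpha^2,\dots\}$. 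Your version is more economical---one real constant rather than $mn$ primes---and dovetails with the polynomial-evaluation template already used in Theorems \ref{thm:fourth} and \ref{thm:fifth}; the paper's version foreshadows the algebraic-independence machinery deployed in Theorem \ref{thm:superset:real}. The combinatorial half and the running-time analysis are identical in both proofs.
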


\begin{proof}
\label{pf:thm:superset:rationals}
Let
\(  \PrimeSet = \{ \PrimeConst_{ij} \}_{(i,j) \in [m] \times [n]} \subset \Z_{+}  \)
be any set of \( mn \) distinct prime numbers.
(Note that by Euclid's theorem, there exist infinitely many prime numbers, and thus, it is possible to construct the set \(  \PrimeSet  \) for
any \(  m,n \in \Z_{+}  \).)
Consider a binary $(k,\epsilon k)$-strongly list disjunct matrix $\fl{B}\in \{0,1\}^{m \times n}$ which is also a $(t,\epsilon t)$-list disjunct matrix for all $t\le k$ and is known to exist with at most $m=O(k\epsilon^{-1} \log (n/k))$ rows (see Lemma \ref{disjunct_exists}). We design
the sensing matrix $\fl{A}\in \bb{R}^{m \times n}$ by setting 
\(  \Mat*{A}_{ij} = \Mat*{B}_{ij} \log( \PrimeConst_{ij} )  \).
The remainder of the proof will focus on the correctness of the recovery algorithm (\ALGORITHM \ref{algo:superset-rationals}).
%
\paragraph{Correctness of recovery algorithm.} 
%
Fix any unknown rational signal vector
\(  \Signal \in \Q^{n}  \)
such that
\(  \left\| \Signal \right\|_{0} \leq k  \),
for which we obtain the measurements
\(  \Response = \Sign( \Mat{A} \Signal )  \).
First, we will show that for any row $\fl{z}$ of $\fl{A}$, the corresponding measurement satisfies $\s{sign}(\langle \fl{z},\fl{x} \rangle)=0$ 
if and only if
\(  \Supp( \Signal ) \cap \Supp( \fl{z} ) = \phi  \); this implies
\(  \Supp( \Signal ) \subseteq \SolutionSet  \).
Clearly, if
\(  \Supp( \Signal ) \cap \Supp( \fl{z} ) = \phi  \),
then
\(
   \Sign( \langle \Signal, \fl{z} \rangle )
  = \Sign( \sum_{j \in \Supp( \Signal ) \cap \Supp( \fl{z} )} \Signal*_{j} \fl{z}_j )
  = \Sign( \sum_{j \in \phi} \Signal*_{j} \fl{z}_j )
  = \Sign( 0 )
  = 0
\).
as desired.
On the other hand, it can be argued by contradiction that 
\(  \Sign( \langle \Signal, \fl{z} \rangle ) = 0  \)
implies that
\(  \Supp( \Signal ) \cap \Supp( \fl{z} ) = \phi  \).
%
Suppose this not the case---that is, there exist some row $\fl{z}$ of $\fl{A}$ for which
\(  \Sign( \langle \Signal, \fl{z} \rangle ) = 0  \)
and
\(  \Supp( \Signal ) \cap \Supp( \fl{z} ) \neq \phi  \).
%
Note that the first condition implies
\(  \langle \Signal, \fl{z} \rangle = 0  \).
%
Additionally, recall that the unknown signal vector contains rational entries,
and hence, there exist integers
\(  \SignalConstN_{j}, \SignalConstD_{j} \in \Z \setminus \{0\}  \)
such that
\(  \Signal*_{j} = \SignalConstN_{j} / \SignalConstD_{j}  \),
for each \(  j \in \Supp( \Signal )  \).
Keeping this notation, further define the constant
\(  \SignalConstLCM \defeq \prod_{j \in \Supp( \Signal )} \SignalConstD_{j} \in \Z \setminus \{0\}  \),
and note that
\(  \SignalConst_{j} \defeq \SignalConstLCM \SignalConstN_{j} / \SignalConstD_{j} \in \Z \setminus \{0\}  \)
for each \(  j \in \Supp( \Signal )  \).
Observe,
\begin{align*}
  0 = \langle \Signal, \fl{z} \rangle
  =
  \sum_{j \in \Supp( \Signal ) \cap \Supp( \fl{z} )}
  \Signal*_{j} \fl{z}_j
  =
  \sum_{j \in \Supp( \Signal ) \cap \Supp( \fl{z} )}
  \Signal*_{j} \log( \PrimeConst_{ij} )
  =
  \sum_{j \in \Supp( \Signal ) \cap \Supp( \fl{z} )}
  \frac{\SignalConstN_{j}}{\SignalConstD_{j}} \log( \PrimeConst_{ij} )
\end{align*}
%
Multiplying each side by \(  \SignalConstLCM  \) yields
\begin{align*}
  0
  &=
  \sum_{j \in \Supp( \Signal ) \cap \Supp( \fl{z} )}
  \frac{\SignalConstLCM \SignalConstN_{j}}{\SignalConstD_{j}} \log( \PrimeConst_{ij} )
  \\
  &=
  \sum_{j \in \Supp( \Signal ) \cap \Supp( \fl{z} )}
  \SignalConst_{j} \log( \PrimeConst_{ij} )
  \\
  &=
  \sum_{\substack{j \in \Supp( \Signal ) \cap \Supp( \fl{z} ) : \\ \SignalConst_{j} > 0}}
  \SignalConst_{j} \log( \PrimeConst_{ij} )
  +
  \sum_{\substack{j \in \Supp( \Signal ) \cap \Supp( \fl{z} ) : \\ \SignalConst_{j} < 0}}
  \SignalConst_{j} \log( \PrimeConst_{ij} )
  \\
  &=
  \sum_{\substack{j \in \Supp( \Signal ) \cap \Supp( \fl{z} ) : \\ \SignalConst_{j} > 0}}
  \SignalConst_{j} \log( \PrimeConst_{ij} )
  -
  \sum_{\substack{j \in \Supp( \Signal ) \cap \Supp( \fl{z} ) : \\ \SignalConst_{j} < 0}}
  -\SignalConst_{j} \log( \PrimeConst_{ij} )
  \\
  &=
  \sum_{\substack{j \in \Supp( \Signal ) \cap \Supp( \fl{z} ) : \\ \SignalConst_{j} > 0}}
  | \SignalConst_{j} | \log( \PrimeConst_{ij} )
  -
  \sum_{\substack{j \in \Supp( \Signal ) \cap \Supp( \fl{z} ) : \\ \SignalConst_{j} < 0}}
  | \SignalConst_{j} | \log( \PrimeConst_{ij} )
  \\
  &=
  \sum_{\substack{j \in \Supp( \Signal ) \cap \Supp( \fl{z} ) : \\ \SignalConst_{j} > 0}}
  \log( \PrimeConst_{ij}^{| \SignalConst_{j} |} )
  -
  \sum_{\substack{j \in \Supp( \Signal ) \cap \Supp( \fl{z} ) : \\ \SignalConst_{j} < 0}}
  \log( \PrimeConst_{ij}^{| \SignalConst_{j} |} )
\end{align*}
Then, by rearrangement,
\begin{gather}
\label{pf:thm:superset:rationals:eqn:1}
  \sum_{\substack{j \in \Supp( \Signal ) \cap \Supp( \fl{z} ) : \\ \SignalConst_{j} < 0}}
  \log( \PrimeConst_{ij}^{| \SignalConst_{j} |} )
  =
  \sum_{\substack{j \in \Supp( \Signal ) \cap \Supp( \fl{z} ) : \\ \SignalConst_{j} > 0}}
  \log( \PrimeConst_{ij}^{| \SignalConst_{j} |} )
.\end{gather}
There are two possible cases for the summations in \EQUATION \eqref{pf:thm:superset:rationals:eqn:1}:
\TextEnum{a} exactly one of the summations is taken over an empty set and the other over a nonempty set, or
\TextEnum{b} both summations are taken over nonempty sets
(note that it is not possible for both summations to be taken over empty sets due to the assumption that
\(  \Supp( \Signal ) \cap \Supp( \fl{z} ) \neq \phi  \)).
For the former, suppose without loss of generality that the right-hand-side of \EQUATION \eqref{pf:thm:superset:rationals:eqn:1} is
an empty summation, i.e., that
\(  \{ j \in \Supp( \Signal ) \cap \Supp( \fl{z} ) : \SignalConst_{j} > 0 \} = \phi  \),
while the left-hand-side is a nonempty summation, i.e., that
\(  \{ j \in \Supp( \Signal ) \cap \Supp( \fl{z} ) : \SignalConst_{j} < 0 \} \neq \phi  \).
Then,
\begin{align*}
  &
  \sum_{\substack{j \in \Supp( \Signal ) \cap \Supp( \fl{z} ) : \\ \SignalConst_{j} < 0}}
  \log( \PrimeConst_{ij}^{| \SignalConst_{j} |} )
  =
  \sum_{\substack{j \in \Supp( \Signal ) \cap \Supp( \fl{z} ) : \\ \SignalConst_{j} > 0}}
  \log( \PrimeConst_{ij}^{| \SignalConst_{j} |} )
  \\
  \dStep&
  \sum_{\substack{j \in \Supp( \Signal ) \cap \Supp( \fl{z} ) : \\ \SignalConst_{j} < 0}}
  \log( \PrimeConst_{ij}^{| \SignalConst_{j} |} )
  =
  0
  \\
  \dStep&
  \log
  \left(
    \prod_{\substack{j \in \Supp( \Signal ) \cap \Supp( \fl{z} ) : \\ \SignalConst_{j} < 0}}
    \PrimeConst_{ij}^{| \SignalConst_{j} |}
  \right)
  =
  \log(1)
  \\ \NumberEqn \label{pf:thm:superset:rationals:eqn:prime-factorization:1}
  \dStep&
  \prod_{\substack{j \in \Supp( \Signal ) \cap \Supp( \fl{z} ) : \\ \SignalConst_{j} < 0}}
  \PrimeConst_{ij}^{| \SignalConst_{j} |}
  =
  1
\end{align*}
Recall that the constants \(  \PrimeConst_{ij}  \), \(  (i,j) \in [m] \times [n]  \), were fixed as prime numbers, and that
\(  | \SignalConst_{j} | \in \Z_{+}  \)
for each \(  j \in \Supp( \Signal )  \).
Hence, the left-hand-side of \EQUATION \eqref{pf:thm:superset:rationals:eqn:prime-factorization:1} is a prime factorization.
However, this implies that there exists a prime factorization of the unit \( 1 \), which is a contradiction.
Therefore, case \TextEnum{a} cannot occur, and instead the latter case, \TextEnum{b}, must hold.
Under the conditions of case \TextEnum{b}, observe
\begin{align*}
  &
  \sum_{\substack{j \in \Supp( \Signal ) \cap \Supp( \fl{z} ) : \\ \SignalConst_{j} < 0}}
  \log( \PrimeConst_{ij}^{| \SignalConst_{j} |} )
  =
  \sum_{\substack{j \in \Supp( \Signal ) \cap \Supp( \fl{z} ) : \\ \SignalConst_{j} > 0}}
  \log( \PrimeConst_{ij}^{| \SignalConst_{j} |} )
  \\
  \dStep&
  \log
  \left(
    \prod_{\substack{j \in \Supp( \Signal ) \cap \Supp( \fl{z} ) : \\ \SignalConst_{j} < 0}}
    \PrimeConst_{ij}^{| \SignalConst_{j} |}
  \right)
  =
  \log
  \left(
    \prod_{\substack{j \in \Supp( \Signal ) \cap \Supp( \fl{z} ) : \\ \SignalConst_{j} > 0}}
    \PrimeConst_{ij}^{| \SignalConst_{j} |}
  \right)
  \\ \NumberEqn \label{pf:thm:superset:rationals:eqn:prime-factorization}
  \dStep&
  \prod_{\substack{j \in \Supp( \Signal ) \cap \Supp( \fl{z} ) : \\ \SignalConst_{j} < 0}}
  \PrimeConst_{ij}^{| \SignalConst_{j} |}
  =
  \prod_{\substack{j \in \Supp( \Signal ) \cap \Supp( \fl{z} ) : \\ \SignalConst_{j} > 0}}
  \PrimeConst_{ij}^{| \SignalConst_{j} |}
\end{align*}
%
Similarly to before, each side of \eqref{pf:thm:superset:rationals:eqn:prime-factorization} is a prime factorization.
Recall that the \(  mn  \)-many constants \(  \PrimeConst_{ij}  \), \(  (i,j) \in [m] \times [n]  \), are distinct,
and therefore, the left- and right-hand-sides of \eqref{pf:thm:superset:rationals:eqn:prime-factorization} are
two distinct prime factorizations.
But this implies that there exists an integer with two distinct prime factorizations, which again is a contradiction.
Now we have seen a contradiction in both cases \TextEnum{a} and \TextEnum{b}.
It follows that our initial assumption that
\(  \Sign( \langle \Signal, \fl{z} \rangle ) = 0  \)
and
\(  \Supp( \Signal ) \cap \Supp( \fl{z} ) \neq \phi  \)
must not hold.
It follows that having
\(  \Sign( \langle \Signal, \fl{z} \rangle ) = 0  \)
implies
\(  \Supp( \Signal ) \cap \Supp( \fl{z} ) = \phi  \).
%

Notice that the above argument ensures
\(  \Supp( \Signal ) \subseteq \SolutionSet  \)
because no index in $\s{supp}(\fl{x})$ can get deleted from the solution set $\ca{C}$ due to the aforementioned argument.

On the other hand, consider any set of indices $\ca{S} \subseteq [n]$ such that $|\ca{S}|=\epsilon \lr{\fl{x}}_0$ and $\ca{S}\cap \s{supp}(\fl{x})=\phi$. By using the property of $(\lr{\fl{x}}_0,\epsilon \lr{\fl{x}}_0)$-list disjunct matrix $\fl{B}$, there exists an index $j \in \ca{S}$ and a row $\fl{z}$ in $\fl{B}$ such that the support of $\fl{z}$ is disjoint from $\s{supp}(\fl{x})$ and contains $j \in \ca{S}$. Therefore, we will delete all indices in the support of $\fl{z}$ including the index $j$ from the set $\ca{C}$ in Step 4 of Algorithm \ref{algo:superset-rationals}. Hence, we will delete all indices that belongs to the set $[n]\setminus \s{supp}(\fl{x})$ except at-most $\epsilon \lr{\fl{x}}_0$ of them.
Hence the returned set $\ca{C}$ is an $\epsilon-$superset of the support of $\fl{x}$. 
We conclude the proof of \THEOREM \ref{thm:superset:rationals} with an analysis of the time complexity of the recovery algorithm.
The for-loop in Lines $2-6$, which iterates over
\(  \BigO(m)  \) rows and executes \(  \BigO(n)  \)-time operations on each, dominates the running time.
Hence, the total running time of the algorithm is
\(  \BigO( mn ) = \BigO( n k \epsilon^{-1} \log(n/k) )  \),
as claimed.
\end{proof}

Next, we give a result that concerns $\rho(\fl{x})$, the  minimum number of non-zero entries of the same sign in $\fl{x}$. This shows a generalization of the result known for only fully positive vectors.
\begin{thm}\label{thm:fifth}
There exists a $1$-bit compressed sensing matrix $\fl{A}\in \bb{R}^{m \times n}$ for universal $\epsilon$-superset recovery of all $k$-sparse signal vectors $\fl{x}\in \bb{R}^n$ such that $\rho(\fl{x})\le R$ for some known $R$, with $m=O(k(1 \bigvee R)\epsilon^{-1} \log (n/k))$ measurements. Moreover the decoding algorithm (Algorithm \ref{algo:supp_superset3}) has a running time of $O(nk(1 \bigvee R)\epsilon^{-1} \log (n/k))$. 
\end{thm}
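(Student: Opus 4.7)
The plan is to mirror the proof of \THEOREM \ref{thm:fourth}, but replace Cauchy's bound on root magnitudes with Descartes' rule of signs. The essential shift is that whereas a bounded dynamic range lets us pick a single evaluation point $a_{\fl{z}} > 1+\eta$ guaranteed to avoid all roots, a bound on $\rho(\fl{x})$ instead limits the \emph{number} of positive real roots; we compensate by evaluating the same underlying polynomial at $2R+1$ distinct positive points, at least one of which must be a non-root whenever the polynomial is not identically zero.

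Concretely, I would start by invoking \LEMMA \ref{lem:strong_disjunct_exists} to obtain a $(k,\epsilon k)$-strongly list-disjunct binary matrix $\fl{B}\in \{0,1\}^{m'\times n}$ with $m' = O(k\epsilon^{-1}\log(n/k))$. For each row $\fl{z}$ of $\fl{B}$, fix distinct positive reals $a_{\fl{z},1},\dots,a_{\fl{z},2R+1}$ and construct $2R+1$ rows of $\fl{A}$: the $s\Th$ such row $\fl{z}'_s$ has entries $(\fl{z}'_s)_j = a_{\fl{z},s}^{t-1}$ when the $j\Th$ coordinate of $\fl{z}$ is the $t\Th$ \texttt{1} of $\fl{z}$ (left to right), and $0$ otherwise. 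This gives $m = (2R+1)m' = O(k(1\bigvee R)\epsilon^{-1}\log(n/k))$ measurements overall. The recovery algorithm (\ALGORITHM \ref{algo:supp_superset3}) initializes $\ca{C}=[n]$ and then, for each $\fl{z}$, deletes $\s{supp}(\fl{z})$ from $\ca{C}$ whenever \emph{all} $2R+1$ of the induced measurements are zero; otherwise it leaves $\ca{C}$ unchanged.

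The correctness of the ``no false negative'' guarantee, $\s{supp}(\fl{x})\subseteq \ca{C}$, is where Descartes' rule enters. Observe that $\langle \fl{z}'_s,\fl{x}\rangle = p(a_{\fl{z},s})$ where $p(r) = \sum_{t \in \s{supp}(\fl{x})\cap \s{supp}(\fl{z})} \fl{x}_t \, r^{\alpha_t}$ for the exponents $\alpha_t$ prescribed by the construction. The nonzero coefficients of $p$ form a subset of the nonzero entries of $\fl{x}$, so the counts of positive and negative coefficients of $p$ are each bounded by the corresponding counts in $\fl{x}$; in particular, their minimum is at most $\rho(\fl{x}) \le R$. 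A standard combinatorial bound then gives at most $2R$ sign changes in the coefficient sequence of $p$, and Descartes' rule yields at most $2R$ positive real roots unless $p \equiv 0$. Therefore, if all $2R+1$ evaluations vanish at distinct positive reals, $p$ must be identically zero, forcing $\s{supp}(\fl{x})\cap \s{supp}(\fl{z}) = \phi$. In particular, no coordinate of $\s{supp}(\fl{x})$ is ever removed from $\ca{C}$.

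For the ``few false positives'' part, I would reuse the strongly list-disjunct argument verbatim from the proof of \THEOREM \ref{thm:fourth}: for any $\ca{S}\subseteq [n]\setminus \s{supp}(\fl{x})$ with $\left|\ca{S}\right| = \epsilon \lr{\fl{x}}_0$, the disjunct property supplies some $\fl{z}$ and some $j \in \ca{S}$ with $j\in\s{supp}(\fl{z})$ and $\s{supp}(\fl{z})\cap\s{supp}(\fl{x})=\phi$; by the previous paragraph this row triggers deletion of $j$. Hence at most $\epsilon\lr{\fl{x}}_0$ indices outside $\s{supp}(\fl{x})$ survive, giving $\epsilon$-superset recovery. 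The main subtlety to verify carefully is the Descartes count in the first paragraph---one must confirm that restricting $\fl{x}$ to an intersection of supports can only drop terms and therefore cannot increase $\min(\#\text{positive},\#\text{negative})$ beyond $\rho(\fl{x})$; everything else is bookkeeping. Finally, the running time is $O(m' \cdot (R + n)) = O(nk(1\bigvee R)\epsilon^{-1}\log(n/k))$, matching the claim.
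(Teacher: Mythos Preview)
Your proposal is correct and follows essentially the same approach as the paper: build the sensing matrix from a $(k,\epsilon k)$-strongly list-disjunct matrix by replacing each row with $2R+1$ rows whose nonzero entries are powers of distinct positive reals, then use Descartes' rule of signs to guarantee that the all-zero outcome across these $2R+1$ rows forces disjoint supports, and finish with the standard list-disjunct argument for the false-positive bound. Your treatment of the sign-change count under restriction to $\s{supp}(\fl{x})\cap\s{supp}(\fl{z})$ is in fact more explicit than the paper's, which asserts the $2R$ bound without spelling out why passing to a subsequence of the coefficients cannot increase $\rho$.
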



\begin{proof}
As before, we will denote our sensing matrix by $\fl{A}$. Consider a binary $(k,\epsilon k)$-strongly list disjunct matrix $\fl{B}\in \{0,1\}^{m \times n}$ which is also a $(t,\epsilon t)$-list disjunct matrix for all $t\le k$ and is
known to exist with at-most $m=O(k\epsilon^{-1} \log (n/k))$ rows (see Lemma \ref{disjunct_exists}).  For each row $\fl{z}\in \{0,1\}^n$ of $\fl{B}$, we choose $2R+1$ distinct positive numbers $a_1, a_2, \dots, a_{2R+1}>0$. 
 Subsequently, we construct $R'=2R+1$ rows of $\fl{A}$ denoted by $\fl{z}^1,\fl{z}^2,\dots,\fl{z}^{R'}$ as follows: for all $i \in [R']$, $j \in [n]$, we have
\begin{align*}
    &\fl{z}^i_j = 0 \quad \text{if } \fl{z}_j = 0 \\
    &\fl{z}^i_j = a_i^{t-1} \quad \text{if } \text{$j^{\s{th}}$ entry of $\fl{z}$ is the $t^{\s{th}}$ \texttt{1} in $\fl{z}$ from left to right}.
\end{align*}
Hence, each row of $\fl{B}$ is mapped to $R'$ rows of $\fl{A}$ and therefore the total number of measurements is at most $O(kR'\epsilon^{-1}\log nk^{-1})$.
\begin{algorithm}[htbp]
\caption{\textsc{Superset Recovery with small minimum same sign entries }($\epsilon$) \label{algo:supp_superset3}}
\begin{algorithmic}[1]
\REQUIRE $\fl{y}=\s{sign}(\fl{A x})$ where $\fl{A}$ is constructed as described in proof of Theorem \ref{thm:fifth}.
\STATE Set $\ca{C}=[n], R' = \max(1,2R)$.
\FOR{\text{ each row $\fl{z}$ in $\fl{B}$}}
\IF{$\s{sign}(\langle \fl{z}^i, \fl{x} \rangle) = 0 \text{ for all } i\in [R']$}
\STATE $\ca{C} \leftarrow \ca{C}\setminus \s{supp}(\fl{z})$
\ENDIF
\ENDFOR
\STATE Return $\ca{C}$.
\end{algorithmic}
\end{algorithm}

\paragraph{Correctness of recovery algorithm.} Consider any set of indices $\ca{S} \subseteq [n]$ such that $|\ca{S}|=\epsilon \lr{\fl{x}}_0$ and $\ca{S}\cap \s{supp}(\fl{x})=\phi$. By using the property of $(\lr{\fl{x}}_0,\epsilon \lr{\fl{x}}_0)$-list disjunct matrix $\fl{B}$, there exists an index $j \in \ca{S}$ and a row $\fl{z}$ in $\fl{B}$ such that the support of $\fl{z}$ is disjoint from support of $\fl{x}$ ($\s{supp}(\fl{x})\cap\s{supp}(\fl{z})=\phi$) and contains $j \in \ca{S}$. Therefore, there must exist $R'$ corresponding rows in $\fl{A}$ (recall the construction of $\fl{A}$) denoted by $\fl{z}^1,\fl{z}^2,\dots,\fl{z}^{R'}$ (parameterized by $a_1,a_2,\dots,a_{R'}$ respectively and have the same support as that of $\fl{z}$) such that the support of these rows are disjoint from $\s{supp}(\fl{x})$ and contains $j \in \ca{S}$. Note that in Algorithm  \ref{algo:supp_superset3}, if
\begin{align*}
\s{sign}(\langle \fl{z}^i, \fl{x} \rangle) = 0 \quad \text{for all } i\in [R']. 
\end{align*}
then we will infer the entire support of $\fl{z}$ to be disjoint from the support of $\fl{x}$ and delete those indices. The inference is correct if $\s{supp}(\fl{z})\cap \s{supp}(\fl{x}) = \phi$ and hence $\s{supp}(\fl{z}^i)\cap \s{supp}(\fl{x}) = \phi$ for all $i\in [R']$. Therefore, by our previous argument, for any set $\ca{S}\subseteq [n]: |\ca{S}|=\epsilon \lr{\fl{x}}_0, \ca{S}\cap  \s{supp}(\fl{x})=\phi$, we will delete at least one index $j \in \ca{S}$. At the end of the algorithm, we return the surviving indices.

On the other hand, we claim that we will never delete any index that lies in $\s{supp}(\fl{x})$. Notice that for all $i \in [R']$ 
\begin{align*}
    \langle \fl{x}, \fl{z}^i\rangle = \sum_{t \in \s{supp}(\fl{x}) \cap \s{supp}(\fl{z})} \fl{x}_t \fl{z}^i_t.  
\end{align*}
From our design of the measurement matrix $\fl{A}$, for all $i\in [R']$, the entries of the vector $\fl{z}^i$ are powers of some positive number $a_i$. Since $\rho(\fl{x}) \le R$ from the statement of the lemma, $ \langle \fl{x}, \fl{z}^i\rangle$ is the evaluation of a polynomial (of degree at most $n-1$ and having at most $2R=R'-1$ sign changes) at the number $a_i$. Clearly, if $|\s{supp}(\fl{x}) \cap \s{supp}(\fl{z})|=0$ then $\langle \fl{x}, \fl{z}^i\rangle =0$ for all $i \in [R']$. On the other hand, if $|\s{supp}(\fl{x}) \cap \s{supp}(\fl{z})| \neq 0$, then $\langle \fl{x}, \fl{z}^i\rangle \neq 0$ for all $i \in [R']$. This is because the polynomial 
\begin{align*}
    p(r) = \sum_{t \in (\s{supp}(\fl{x}) \cap \s{supp}(\fl{z})} \fl{x}_t r^{\alpha_t}.
\end{align*}
with at most $2R$ sign changes can have at most $2R$ positive roots (using Descartes' rule of signs); hence all of $a_1,a_2,\dots,a_{R'}$ cannot be roots of $p(r)$ as they are distinct positive numbers. Therefore, we will delete all indices that belongs to $[n]\setminus \s{supp}(\fl{x})$ except at-most $\epsilon \lr{\fl{x}}_0$ of them. This completes the proof of the theorem.
\end{proof}

\subsection{Lower Bounds}

In this section, we show  lower bounds on the  necessary number of measurements for universal $\epsilon$-approximate support recovery and universal $\epsilon$-superset recovery. 

\begin{thm}\label{thm:lower_bound}
Let $\fl{A}\in \bb{R}^{m \times n}$ be a measurement matrix such that $\s{sign}(\fl{Ax^1})\neq \s{sign}(\fl{Ax^2})$ for all $\fl{x}^1,\fl{x}^2$ satisfying $\left|\left|\fl{x}^1\right|\right|_0,\left|\left|\fl{x}^2\right|\right|_0 \le k$ and $\s{supp}(\fl{x}^1) \cap \s{supp}(\fl{x}^2) \le k(1-2\epsilon)$, for some $\epsilon<1/3$. In that case, we must have $m = \Omega\Big(\frac{k}{\epsilon} \Big(\log \frac{k}{\epsilon}\Big)^{-1}\log \frac{n-k}{\epsilon k}\Big)$.
\end{thm}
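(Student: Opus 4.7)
The plan is to reduce the statement to the known lower bound for $(k',\ell)$-list disjunct matrices in Lemma~\ref{disjunct_exists}. This is motivated by the observation that the target bound is exactly what Lemma~\ref{disjunct_exists} delivers for parameters $k' = k-2\epsilon k$ and $\ell = \epsilon k$:
\[
  \Omega\!\left(\tfrac{(k-2\epsilon k)^2/(\epsilon k)}{\log((k-2\epsilon k)^2/(\epsilon k))}\log\tfrac{n-k+2\epsilon k}{\epsilon k}\right)
  \;=\; \Omega\!\left(\tfrac{k/\epsilon}{\log(k/\epsilon)}\log\tfrac{n-k}{\epsilon k}\right).
\]
Accordingly, I will show that the binary support matrix $\fl{B}\in\{0,1\}^{m\times n}$ of $\fl{A}$, defined by $\fl{B}_{ij}=1$ iff $\fl{A}_{ij}\neq 0$, must itself be $(k-2\epsilon k,\,\epsilon k)$-list disjunct, and then invoke Lemma~\ref{disjunct_exists}.

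The list-disjunct property of $\fl{B}$ will be proved by contradiction. If $\fl{B}$ fails to be $(k-2\epsilon k,\epsilon k)$-list disjunct, then by definition there exist disjoint $T_0,S\subseteq[n]$ with $|T_0|=k-2\epsilon k$ and $|S|=\epsilon k$ such that for every row index $i\in[m]$, if $\fl{A}_{ij}\neq 0$ for some $j\in S$ then also $\fl{A}_{ij'}\neq 0$ for some $j'\in T_0$; contrapositively, whenever $\fl{A}_{ij'}=0$ for every $j'\in T_0$, one also has $\fl{A}_{ij}=0$ for every $j\in S$. I will exploit this structural failure by constructing two distinct $k$-sparse signals whose sign patterns agree, contradicting separation. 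Choose $\fl{x}^1 \in \bb{R}^n$ supported on $T_0$ with generic (e.g.\ i.i.d.\ Gaussian) nonzero coordinates, so that $\langle\fl{A}^i,\fl{x}^1\rangle\neq 0$ for every row $i$ having a nonzero entry in some $T_0$-column---the failure set lies in a finite union of hyperplanes in $\bb{R}^{|T_0|}$ and hence is measure-zero. Next, set $\fl{x}^2 = \fl{x}^1 + \fl{\delta}$ where $\fl{\delta}\in\bb{R}^n$ is supported on $S$, nonzero on every $S$-coordinate, and of sufficiently small magnitude (uniformly across the $m$ rows). For rows $i$ with no nonzero entry in the $T_0$-columns, the implication above forces no nonzero entry in the $S$-columns either, so $\langle\fl{A}^i,\fl{x}^1\rangle=\langle\fl{A}^i,\fl{x}^2\rangle=0$; for rows with a nonzero $T_0$-entry, $\langle\fl{A}^i,\fl{x}^1\rangle$ is bounded away from zero and a small enough $\fl{\delta}$ preserves the sign by continuity. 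Hence $\s{sign}(\fl{A}\fl{x}^1)=\s{sign}(\fl{A}\fl{x}^2)$, while $\|\fl{x}^1\|_0 = k-2\epsilon k\le k$, $\|\fl{x}^2\|_0 = k-\epsilon k\le k$, and $|\s{supp}(\fl{x}^1)\cap\s{supp}(\fl{x}^2)| = |T_0| = k(1-2\epsilon)$, placing $(\fl{x}^1,\fl{x}^2)$ squarely in the scope of the separation hypothesis---a contradiction.

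The proof will then be completed by applying Lemma~\ref{disjunct_exists} to the $(k-2\epsilon k,\epsilon k)$-list disjunct matrix $\fl{B}$. The main obstacle is engineering the construction so that $(\fl{x}^1,\fl{x}^2)$ sits exactly at the boundary of both the sparsity and the intersection constraints; this pins down the parameter choice $|T_0|=k-2\epsilon k$, $|S|=\epsilon k$ with no slack, and it is the reason the bound's exponent scales as $k/\epsilon$ rather than $k$. A minor technicality is that Lemma~\ref{disjunct_exists}'s bound requires $k'\ge 2\ell$, i.e., $\epsilon\le 1/4$; for the remaining range $\epsilon\in(1/4,1/3)$ the same argument with $|S|$ decreased to $\tfrac{1-2\epsilon}{2}k$ restores $k'=2\ell$ and still yields a bound of the desired order, affecting only the constants hidden in $\Omega$.
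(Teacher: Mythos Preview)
Your proposal is correct and follows essentially the same approach as the paper: both arguments show that the support pattern of $\fl{A}$ must be list-disjunct with parameters around $(k(1-2\epsilon),\,\Theta(\epsilon k))$ by constructing $\fl{x}^1$ supported on $T$ and $\fl{x}^2=\fl{x}^1+\fl{\delta}$ with $\fl{\delta}$ supported on $S$, then invoking Lemma~\ref{disjunct_exists}. The only differences are cosmetic: the paper takes $|S|=2\epsilon k$ (making $\|\fl{x}^2\|_0=k$ exactly) and handles the perturbation via an explicit normalization $|\fl{A}_{ij}|\le 1$ and scale $\gamma/(2\epsilon k)$, whereas you take $|S|=\epsilon k$ and use genericity plus continuity; your handling of the $\epsilon\in(1/4,1/3)$ edge case is in fact a bit more careful than the paper's.
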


\begin{proof}
Without loss of generality, we will assume that $-1\le \fl{A}_{ij} \le 1$ for $i\in[m],j \in [n]$ since scaling by a positive number does not change the measurement output.
We will prove by contradiction that $\fl{A}$ must be a $(k(1-2\epsilon),2\epsilon k)$-list disjunct matrix. Let $\ca{B}_1,\ca{B}_2,\dots,\ca{B}_n \subseteq [m]$ be defined as follows: $\ca{B}_j = \{i \in [m]\mid \fl{A}_{ij} \neq 0\}$. Since $\fl{A}$ is a not a $(k(1-2\epsilon),2\epsilon k)$-list disjunct matrix, there must exist two disjoint sets of indices $\ca{S},\ca{T}\subseteq [n]$ such that $|\ca{S}|=2\epsilon k,|\ca{T}|=k(1-2\epsilon)$ and $\ca{B}_j \subseteq \cup_{i \in \ca{T}} \ca{B}_i$ for all $j\in \ca{S}$. Let $\fl{x}^1$ be a $k$-sparse vector such that $\s{supp}(\fl{x}^1)=\ca{T}$ and further, all indices of $\fl{Ax^1}$ in $\cup_{i \in \ca{T}} \ca{B}_i$ are $\gamma$ away from $0$. Let $$\fl{x}^2= \fl{x}^1+\sum_{j \in \ca{S}}\frac{\gamma}{2\epsilon k} \fl{e}^{j}\implies \fl{A}(\fl{x}^2-\fl{x}^1) = \fl{A}\Big(\sum_{j \in \ca{S}}\frac{\gamma}{2\epsilon k} \fl{e}^{j}\Big)$$ where $\fl{e}^{i}$ is the standard basis vector with $1$ only in the $i^{\s{th}}$ position and zero everywhere else. Since $\ca{B}_j \subseteq \cup_{i \in \ca{T}} \ca{B}_i$ for all $j \in \ca{S}$ and all entries of $\fl{A}$ are in $[-1,+1]$, we must have that $\s{sign}(\fl{A}\fl{x}^1)=\s{sign}(\fl{A}\fl{x}^2)$. Note that both $\left|\left|\fl{x}^1\right|\right|_0,\left|\left|\fl{x}^2\right|\right|_0 \le k$ and therefore, this is a contradiction. Hence $\fl{A}$ must be a $(k(1-2\epsilon),2\epsilon k)$-list disjunct matrix. Combining with the statement of Lemma \ref{disjunct_exists} (note that the condition $k\ge 2\ell$ implies that $\epsilon \le 1/3$) and the fact that $k(1-2\epsilon) \ge k/3$ for $\epsilon \le 1/3$, we obtain the statement of the theorem. 
\end{proof}

\begin{coro}
Let $\fl{A}\in \bb{R}^{m \times n}$ be a measurement matrix for universal $\epsilon$-approximate support recovery of all $k$-sparse unknown vectors for $\epsilon<1/3$. In that case, it must happen that $m = \Omega\Big(\frac{k}{\epsilon} \Big(\log \frac{k}{\epsilon}\Big)^{-1}\log \frac{n-k}{\epsilon k}\Big)$.
\end{coro}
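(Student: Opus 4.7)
The plan is to derive this corollary as a direct consequence of \THEOREM \ref{thm:lower_bound} by reducing universal $\epsilon$-approximate support recovery to the distinct-sign-patterns hypothesis of that theorem. Concretely, I would show that any measurement matrix $\fl{A}$ which admits a universal $\epsilon$-approximate support recovery scheme must satisfy: for every pair of $k$-sparse signals $\fl{x}^1, \fl{x}^2 \in \bb{R}^n$ whose supports have sufficiently small intersection, one has $\s{sign}(\fl{A}\fl{x}^1) \neq \s{sign}(\fl{A}\fl{x}^2)$. Once this property is established, the lower bound on $m$ follows by directly invoking \THEOREM \ref{thm:lower_bound}, whose hypothesis $\epsilon<1/3$ matches the corollary's.

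To prove the reduction, I would argue by contradiction. Pick two vectors with $\lr{\fl{x}^1}_0 = \lr{\fl{x}^2}_0 = k$ (the worst case in the definition) and assume that $\s{sign}(\fl{A}\fl{x}^1) = \s{sign}(\fl{A}\fl{x}^2)$. Since the recovery algorithm is a deterministic function of the sign vector alone, it must return the \emph{same} set $\ca{S}$ on both inputs. By \DEFINITION \ref{def:approx}, this $\ca{S}$ satisfies $|\ca{S}| \le k$ and $|\ca{S} \cap \s{supp}(\fl{x}^i)| \ge (1-\epsilon)k$ for each $i\in\{1,2\}$. Applying inclusion-exclusion inside $\ca{S}$ then yields
\begin{align*}
|\s{supp}(\fl{x}^1) \cap \s{supp}(\fl{x}^2)|
\ge |\ca{S} \cap \s{supp}(\fl{x}^1) \cap \s{supp}(\fl{x}^2)|
\ge |\ca{S}\cap\s{supp}(\fl{x}^1)| + |\ca{S}\cap\s{supp}(\fl{x}^2)| - |\ca{S}|
\ge (1-2\epsilon)k.
\end{align*}
Thus having identical sign patterns forces the support intersection to be at least $(1-2\epsilon)k$, which verifies the hypothesis of \THEOREM \ref{thm:lower_bound} and delivers the stated lower bound on $m$.

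The only subtlety I foresee is reconciling the strict-versus-non-strict inequality: the argument above actually yields a contradiction whenever $|\s{supp}(\fl{x}^1) \cap \s{supp}(\fl{x}^2)| < (1-2\epsilon)k$, while the hypothesis of \THEOREM \ref{thm:lower_bound} is stated with $\le (1-2\epsilon)k$. This mismatch is purely cosmetic and is bridged by running the reduction at a slightly perturbed parameter $\epsilon' = \epsilon(1+\tau)$ for some small constant $\tau>0$; such a perturbation is absorbed into the hidden constant in the $\Omega(\cdot)$ notation and does not affect the final asymptotic bound. Beyond this, the argument is essentially a single inclusion-exclusion computation combined with the definitional guarantee of approximate support recovery, so I expect no further technical obstacles.
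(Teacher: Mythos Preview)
Your proposal is correct and follows the same approach as the paper: both argue the contrapositive of \THEOREM \ref{thm:lower_bound} by showing that equal sign patterns force a single output set $\ca{S}$, which then cannot simultaneously $\epsilon$-approximate two supports with small intersection. Your inclusion--exclusion step makes explicit what the paper only sketches (``$2\epsilon k$ apart in Hamming distance''); the one detail to tidy is that \THEOREM \ref{thm:lower_bound} quantifies over all pairs with $\lr{\fl{x}^i}_0 \le k$ (and its proof actually uses $\lr{\fl{x}^1}_0 = (1-2\epsilon)k$, $\lr{\fl{x}^2}_0 = k$), so your restriction to $\lr{\fl{x}^1}_0 = \lr{\fl{x}^2}_0 = k$ should be relaxed---but the same computation goes through using the two-sided guarantee $|\ca{S}\setminus\s{supp}(\fl{x}^i)|\le\epsilon\lr{\fl{x}^i}_0$ from \DEFINITION \ref{def:approx}.
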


\begin{proof}
From Theorem \ref{thm:lower_bound}, we obtained that if $m = o\Big(\frac{k}{\epsilon} \Big(\log \frac{k}{\epsilon}\Big)^{-1}\log \frac{n-k}{\epsilon k}\Big)$, then there exists $\fl{x}^1,\fl{x}^2$ satisfying $\left|\left|\fl{x}^1\right|\right|_0,\left|\left|\fl{x}^2\right|\right|_0 \le k$ and $\s{supp}(\fl{x}^1) \cap \s{supp}(\fl{x}^2) \le k(1-2\epsilon)$ such that $\s{sign}(\fl{A}\fl{x}^1)=\s{sign}(\fl{A}\fl{x}^2)$. In that case, any algorithm will not be able to distinguish between the support of $\fl{x}^1,\fl{x}^2$ which are $2\epsilon k$ apart in Hamming distance. This is a contradiction to the fact that $\fl{A}$ can be used for universal $\epsilon$-approximate recovery of all $k$-sparse unknown vectors thus proving the corollary.
\end{proof}

\begin{coro}
Let $\fl{A}\in \bb{R}^{m \times n}$ be a measurement matrix for universal $\epsilon$-superset recovery of all $k$-sparse unknown vectors for $\epsilon<1/3$. In that case, it must happen that $m = \Omega\Big(\frac{k}{\epsilon} \Big(\log \frac{k}{\epsilon}\Big)^{-1}\log \frac{n-k}{\epsilon k}\Big)$.
\end{coro}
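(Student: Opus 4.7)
The cleanest route is to reduce the claim to the corresponding lower bound for universal $\epsilon$-approximate support recovery that was just established. The plan is to invoke \textbf{Proposition \ref{prop:compare}}, which asserts that any matrix $\fl{A}$ providing universal $\epsilon$-superset recovery also provides universal $\epsilon$-approximate support recovery. Consequently, any $m \times n$ matrix meeting the hypothesis of the present corollary automatically satisfies the hypothesis of the preceding corollary, and the stated lower bound $m = \Omega\bigl(\tfrac{k}{\epsilon}(\log\tfrac{k}{\epsilon})^{-1}\log\tfrac{n-k}{\epsilon k}\bigr)$ carries over verbatim. This reduction is essentially one line and requires no new argument.

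If one prefers a direct argument that bypasses Proposition \ref{prop:compare}, the plan is to re-use the adversarial construction inside the proof of \textbf{Theorem \ref{thm:lower_bound}}. Assume for contradiction that $\fl{A}$ is a universal $\epsilon$-superset recovery matrix with fewer than the stated number of rows; then, by Theorem \ref{thm:lower_bound}, $\fl{A}$ cannot be a $(k(1-2\epsilon), 2\epsilon k)$-list disjunct matrix, so the same construction produces two vectors $\fl{x}^1, \fl{x}^2$ with $\|\fl{x}^1\|_0 = k(1-2\epsilon)$, $\|\fl{x}^2\|_0 = k$, $\s{supp}(\fl{x}^1) \subsetneq \s{supp}(\fl{x}^2)$, and $\s{sign}(\fl{A}\fl{x}^1) = \s{sign}(\fl{A}\fl{x}^2)$. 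Any superset recovery algorithm must output a single set $\ca{S}$ depending only on $\s{sign}(\fl{A}\fl{x})$; but $\ca{S}$ would have to contain $\s{supp}(\fl{x}^2)$ (of size $k$) while simultaneously satisfying $|\ca{S}| \le \|\fl{x}^1\|_0(1+\epsilon) = k(1-2\epsilon)(1+\epsilon) = k(1 - \epsilon - 2\epsilon^2) < k$, a contradiction whenever $\epsilon > 0$.

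Either variant yields the required bound, and I would prefer the first because it costs nothing and keeps the proof to one sentence. The only minor subtlety worth checking is that the hypothesis $\epsilon < 1/3$ (inherited from Theorem \ref{thm:lower_bound}) is indeed the range in which the construction in the proof of Theorem \ref{thm:lower_bound} is valid (namely $k(1-2\epsilon) \ge 2 \cdot 2\epsilon k$, so that the list disjunctness parameters satisfy $k \ge 2\ell$); this is already ensured under the stated assumption, so no further restriction on $\epsilon$ is needed. There is no substantive obstacle here—the work is entirely in the preceding theorem and corollary.
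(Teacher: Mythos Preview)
Your preferred route is exactly what the paper does: it invokes Proposition \ref{prop:compare} to observe that $\epsilon$-superset recovery is at least as hard as $\epsilon$-approximate recovery, so the lower bound from the preceding corollary (and hence Theorem \ref{thm:lower_bound}) transfers directly. Your alternative direct argument is also correct and self-contained, but the paper opts for the one-line reduction.
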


\begin{proof}

From Proposition \ref{prop:compare}, we known that $\epsilon$-superset recovery is a strictly harder objective than $\epsilon$-approximate support recovery.
Therefore the lower bound in Theorem \ref{thm:lower_bound} extends to this setting as well.
\end{proof}

\remove{
\subsection{Exact Support Recovery}

Our final result is to show an improved result for exact universal support recovery of all $k-$sparse unknown signal vectors $\fl{x}\in \bb{R}^n$ if additional weak assumptions hold true.

\begin{coro}\label{coro:sixth}
There exists a $1$-bit compressed sensing matrix $\fl{A}\in \bb{R}^{m \times n}$ for universal support recovery of all $k$-sparse signal vectors $\fl{x}\in \bb{R}^n$ satisfying one of the two assumptions below:
\begin{itemize}
    \item  $\kappa(x)\le \eta$ for some known $\eta>0$. 
    \item  $\rho(x)\le c$ for some non-negative constant $c\ge 0$
\end{itemize}
  with $m=O(k^2 \log (n/k))$ measurements.
Moreover the decoding algorithm has a running time of $O(mn)$.
\end{coro}

\begin{proof}
The proof follows by substituting $\epsilon = 1/2k$ in Theorem \ref{thm:fourth} and Theorem \ref{thm:fifth} respectively. From the definition of universal $\epsilon$-superset recovery, note that a value of $\epsilon < 1/k$ implies exact support recovery since the size of the set $\ca{S}$ returned by Algorithms \ref{algo:supp_superset2}, \ref{algo:supp_superset3} must be integral. 
\end{proof}
}

\section{Conclusion}\label{sec:limit}
Since there is a gap by a factor of $\sqrt{k}$ between the upper and lower bounds for measurement complexity in superset recovery for large $\epsilon$, it will be interesting to obtain either a matching lower bound  or improve our upper bound further to match the linear lower bound. We conjecture the later to be the case, and it will be possible by clever design of polynomials with additional properties for the measurements. It will also be interesting to figure out the limits of using binary measurement matrices for support recovery. 

It will also be interesting to explore if our results on universal superset recovery or approximate support recovery can be used for improving state of the art measurement complexities \cite{JLBB13} in approximately recovering the unknown signal vector itself. From a practical perspective, it would be interesting to obtain results which are robust to the assumption that the unknown signal vector is sparse; in other words, even if the signal vector has a tail, the designed algorithm can still recover the $k$ entries having the largest magnitude. 

\remove{

Finally the following are some ideas to handle the noise robustness of our algorithms - which in general is an open question. 

\paragraph{Classification Noise:} For an unknown signal vector $\fl{x}\in \bb{R}^n$, consider a model where for a measurement vector $\fl{u}\in \bb{R}^n$, we obtain $\s{sign}(\fl{u}^{T}\fl{x})$ with probability $1/2+\nu$ and an erroneous output  with probability $1/2-\nu$. In that case, we can estimate $\s{sign}(\fl{u}^{T}\fl{x})$ correctly with probability at least $1-\s{poly}(1/m)$ (where $m$ is the total number of measurements) by repeating each distinct measurement $O(\log m/\nu)$ times and then taking a majority vote. This will lead to an additional multiplicative factor of $O(\log m/\nu)$ in all our measurement complexity guarantees.

\paragraph{Measurement Noise:} For an unknown signal vector $\fl{x}\in \bb{R}^n$, consider a model where for a measurement vector $\fl{u}\in \bb{R}^n$, we obtain $\s{sign}(\fl{u}^{T}\fl{x}+w)$ such that $w$ is a random variable distributed according to $\ca{N}(0,\sigma^2)$ (Gaussian with zero mean and variance $\sigma^2$).
Note that for such measurements, the probability that $\s{sign}(\fl{u}^{T}\fl{x}+w)=0$ is zero since $w$ is a continuous random variable. In this model, we will make the assumption that the magnitude of any non-zero entry of the signal vector $\fl{x}$ is bounded from below by $\delta$ i.e. $\min_{i \in [n]:\fl{x}_i \neq 0}\left|\fl{x}_i\right|\ge \delta$ for some known $\delta \in  (0,1]$. If the entries of $\fl{u}$ are \textit{integral}, then under the aforementioned assumption, we must have either $\left|\fl{u}^{T}\fl{x}\right|=0$ or  $\left|\fl{u}^{T}\fl{x}\right| \ge \delta$. In that case, we must have $\fl{u}^{T}\fl{x}+w \sim \ca{N}(\zeta,\sigma^2)$ where 1) $\zeta = 0$ if $\fl{u}^{T}\fl{x}=0$, 2) $\zeta > \delta$ if $\fl{u}^{T}\fl{x} > 0$, and 3) $\zeta < -\delta$ if $\fl{u}^{T}\fl{x} < 0$. Therefore, with a simple application of Chernoff bound, we can again estimate $\s{sign}(\fl{u}^{T}\fl{x})$ correctly with probability at least $1-\s{poly}(1/m)$ (where $m$ is the total number of measurements) by repeating each distinct measurement  $O(\log m/\s{erf}(\delta/\sqrt{2}\sigma))$ (where $\s{erf}(\cdot)$ is the error function) times. At the cost of an additional multiplicative factor of $O(\log m/\s{erf}(\delta/\sqrt{2}\sigma))$ in the measurement complexity guarantees, all our results for support recovery  can be made resilient to measurement noise. This is because all the measurements for support recovery are either already binary or their underlying parameter can be chosen so that they are integral.

\paragraph{Acknowledgement:} This research is supported in part by NSF awards CCF 2133484, CCF 2127929, and CCF 1934846. 
}




\appendix

\section{Proof of Theorem \ref{thm:second}}\label{sec:appendix}


 Our sensing matrix will be denoted by $\fl{A}\in \bb{R}^{m \times n}$ where $m$ is going to be determined later. Each entry of the matrix $\fl{A}$ is sampled independently according to $\ca{N}(0,1)$ ( Gaussian distribution with zero mean and variance one.) The $m$ measurements (rows of $\fl{A}$) must distinguish between vectors whose supports have a pairwise intersection of size at most $(1-2\epsilon) k$ and satisfy the dynamic range being bounded from above by $\eta$ since otherwise, the recovery algorithm cannot return a single set that is simultaneously an $\epsilon$-approximate support for both vectors. In order to prove our theorem, we will directly use the following result from \cite{flodin2019superset} showing a useful property of random Gaussian measurements: 
 
 \begin{lemma}[Lemma 16 in \cite{flodin2019superset}]\label{lem:sep}
Let $\fl{x}$ and $\fl{y}$ be two unit vectors in $\bb{R}^n$ with $\left|\left|\fl{x}-\fl{y}\right|\right|_2>\gamma$, and take $\fl{h}\in \bb{R}^n$ to be a random vector with entries drawn i.i.d according to $\ca{N}(0,1)$. Let $B_{\delta}(\fl{x})=\{\f{p}\in \bb{R}^n:\left|\left|\fl{p}-\fl{x}\right|\right|_2\le \delta\}$ be the ball of radius $\delta$ centered around $\fl{x}$. Then, we must have that 
\begin{align*}
    \Pr(\forall \fl{p}\in B_{\delta}(\fl{x}),\forall \fl{q}\in B_{\delta}(\fl{y}), \s{sign}(\fl{h}^{T}\fl{x})\neq \s{sign}(\fl{h}^{T}\fl{y})) \ge \frac{\gamma-2\delta \sqrt{n}}{\pi}.
\end{align*}
\end{lemma}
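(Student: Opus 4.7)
The plan is to reformulate the quantified sign-separation event as a geometric ``separating hyperplane with margin'' condition, and then exploit the rotational invariance of the isotropic Gaussian to reduce the question to a classical great-circle probability on the sphere $S^{n-1}$. Reading the statement charitably (the inner condition should evidently be $\s{sign}(\fl{h}^T \fl{p}) \neq \s{sign}(\fl{h}^T \fl{q})$, otherwise the quantifier over $\fl{p},\fl{q}$ is vacuous), I will first observe that the universal quantification over $\fl{p} \in B_\delta(\fl{x})$ and $\fl{q} \in B_\delta(\fl{y})$ is equivalent to the hyperplane $H = \{\fl{z} \in \bb{R}^n : \fl{h}^T \fl{z} = 0\}$ strictly separating the two closed balls. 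Since the signed distance from $H$ to a point $\fl{v}$ is $\fl{h}^T \fl{v}/\|\fl{h}\|_2$, this is in turn equivalent to the event
\[
    \big\{\fl{h}^T \fl{x} > \delta \|\fl{h}\|_2,\ \fl{h}^T \fl{y} < -\delta \|\fl{h}\|_2\big\} \ \cup \ \big\{\fl{h}^T \fl{x} < -\delta \|\fl{h}\|_2,\ \fl{h}^T \fl{y} > \delta \|\fl{h}\|_2\big\}.
\]

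Writing $\hat{\fl{h}} \triangleq \fl{h}/\|\fl{h}\|_2$, I will use the standard fact that $\hat{\fl{h}}$ is uniform on $S^{n-1}$ (and independent of $\|\fl{h}\|_2$). Call the event above $E$, and define the ``no-margin'' event $E_0 \triangleq \{\s{sign}(\hat{\fl{h}}^T \fl{x}) \neq \s{sign}(\hat{\fl{h}}^T \fl{y})\}$. A classical computation---project $\hat{\fl{h}}$ onto the $2$-plane spanned by $\fl{x},\fl{y}$; by rotational symmetry the projected \emph{direction} is uniform on $[0, 2\pi)$, and the two opposite-sign arcs have total measure $2\theta$---yields $\Pr(E_0) = \theta/\pi$, where $\theta \in (0,\pi)$ is the angle between $\fl{x}$ and $\fl{y}$. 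Since $E_0 \setminus E \subseteq \{|\hat{\fl{h}}^T \fl{x}| \le \delta\} \cup \{|\hat{\fl{h}}^T \fl{y}| \le \delta\}$, a union bound gives
\[
    \Pr(E) \ge \frac{\theta}{\pi} - \Pr(|\hat{\fl{h}}^T \fl{x}| \le \delta) - \Pr(|\hat{\fl{h}}^T \fl{y}| \le \delta).
\]

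For each slab probability, the marginal density of $\hat{\fl{h}}^T \fl{x}$ (for any fixed unit vector $\fl{x}$) on $[-1,1]$ is $f(t) = c_n (1-t^2)^{(n-3)/2}$ with $c_n = \Gamma(n/2)/(\sqrt{\pi}\,\Gamma((n-1)/2))$, so bounding $f(t) \le c_n$ and invoking a Gautschi-type ratio estimate for the Gamma function yields $\Pr(|\hat{\fl{h}}^T \fl{x}| \le \delta) \le \delta \sqrt{n}/\pi$ (up to the precise constant, which is the delicate point below). Finally, since $\fl{x},\fl{y}$ are unit vectors, $\|\fl{x}-\fl{y}\|_2^2 = 2(1-\cos\theta)$, so the hypothesis $\|\fl{x}-\fl{y}\|_2 > \gamma$ gives $\cos\theta < 1 - \gamma^2/2$ and thus $\theta > \arccos(1 - \gamma^2/2) \ge \gamma$, the last inequality following from $\frac{d}{du}\arccos(1 - u^2/2) = 1/\sqrt{1 - u^2/4} \ge 1$ on $[0,2]$. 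Combining the four pieces gives $\Pr(E) \ge \theta/\pi - 2\delta\sqrt{n}/\pi \ge (\gamma - 2\delta\sqrt{n})/\pi$, as claimed.

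The main obstacle is the slab bound: a crude bound $c_n \le \sqrt{n/(2\pi)}$ only delivers $\Pr(|\hat{\fl{h}}^T \fl{x}| \le \delta) \le \delta\sqrt{2n/\pi}$, which produces a constant slightly worse than $1/\pi$ in the final expression. Matching the constant exactly will require either a sharper estimate for the ratio $\Gamma(n/2)/\Gamma((n-1)/2)$ (or equivalently a tighter isoperimetric/hat-box-type bound on the equatorial band of $S^{n-1}$), or an alternative decomposition that avoids the pure union bound---for instance, conditioning on the $2$-plane projection length $r = \|P_{\Span(\fl{x},\fl{y})}\hat{\fl{h}}\|$ and integrating the conditional arc-length formula $(\theta - 2\arcsin(\delta/r))^{+}/\pi$ directly against the known density of $r$. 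All remaining steps are routine assemblies of elementary geometry and probability.
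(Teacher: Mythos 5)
The paper does not prove this lemma; it is imported verbatim as Lemma~16 of \cite{flodin2019superset}, so there is no internal proof to compare your attempt against. I will therefore assess the proposal on its own merits.

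Your geometric reduction is correct: after fixing the obvious typo ($\s{sign}(\fl{h}^T\fl{p})\neq\s{sign}(\fl{h}^T\fl{q})$), the quantified event is exactly that the hyperplane $\{\fl{h}^T\fl{z}=0\}$ strictly separates $B_\delta(\fl{x})$ from $B_\delta(\fl{y})$, which (writing $\hat{\fl h}=\fl h/\|\fl h\|_2$) is equivalent to $|\hat{\fl h}^T\fl x|>\delta$, $|\hat{\fl h}^T\fl y|>\delta$, with opposite signs. The Grothendieck/lune identity $\Pr(E_0)=\theta/\pi$ and the elementary bound $\theta\ge\arccos(1-\gamma^2/2)\ge\gamma$ are both correct.

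However, you have a genuine gap, and it is larger than you suggest. The union-bound correction is $2\Pr(|\hat{\fl h}^T\fl x|\le\delta)$, and the marginal density bound $f(t)\le c_n$ is essentially tight near $t=0$ (where the slab lives), so no sharper Gautschi-type estimate can help: as $n\to\infty$, $\Pr(|\hat{\fl h}^T\fl x|\le\delta)\to\Pr(|N(0,1)|\le\delta\sqrt n)\approx\sqrt{2/\pi}\,\delta\sqrt n$, whereas the lemma requires a per-slab cost of $\delta\sqrt n/\pi$. The ratio is $\sqrt{2\pi}\approx 2.5$, not ``slightly worse.'' Your second fallback does not obviously close this either: writing $r=\|P_{\Span(\fl x,\fl y)}\hat{\fl h}\|$, the conditional arc-length gives $\frac{1}{\pi}\E\big[(\theta-2\arcsin(\delta/r))^{+}\big]$; bounding $\arcsin(u)\le\frac{\pi}{2}u$ and using $\E[1/r]=\sqrt{\pi}\,\Gamma(n/2)/\Gamma((n-1)/2)\le\sqrt{\pi n/2}$ yields a correction $\delta\sqrt{\pi n/2}$, still about a factor of $2$ above $2\delta\sqrt n/\pi$. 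So both routes you propose fall short of the stated constant by an explicit multiplicative factor bounded away from $1$. Either a materially sharper argument is required (one that does not pay the full union bound, e.g., exploiting that on $E_0$ the two slab events are not independent but strongly positively associated with opposite half-slabs), or one should simply cite the result as the paper does. As written, the proposal is a plausible plan, but it does not constitute a proof, and the obstacle you flag is a real one, not a technicality.
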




\begin{algorithm}[htbp]
\caption{\textsc{Approximate Support  Recovery}($\epsilon$) \label{algo:bd_dynamic}}
\begin{algorithmic}[1]
\REQUIRE $\eta$, $\fl{y}=\s{sign}(\fl{A x})$ where every entry of $\fl{A}$ is sampled according to $\ca{N}(0,1)$.

\STATE Compute $\fl{\hat{x}}$ to be the solution of  
\begin{align*}
    \min \left|\left|\fl{x}\right|\right|_0 \quad \text{subject to }\fl{Ax}= \fl{y} \; \text{and} \; \kappa(\fl{x})\le \eta.
\end{align*}
\STATE Return $\s{supp}(\fl{\hat{x}})$.
\end{algorithmic}
\end{algorithm}


 The probability that the $m$ measurements (rows of $\fl{A}$) are not able to distinguish between a fixed pair of $k$-sparse vectors separated by $\gamma$ in euclidean distance is at most 
\begin{align*}
    \Big(1-\frac{\gamma-2\delta \sqrt{2k}}{\pi}\Big)^{m}
\end{align*}
where we used the fact that the union of support of two $k$-sparse vectors has size at most $2k$. Consider two $k$-sparse signal vectors $\fl{x},\fl{y} \in \bb{R}^{n}$ satisfying $\kappa(\fl{x}),\kappa(\fl{y})\le \eta$ for some known $\eta>1$ such that $\left|\s{supp}(\fl{x}) \cap \s{supp}(\fl{y})\right| \le k(1-2\epsilon)$ for $\epsilon\ge 1/2k$. 
Let $\ca{S}_1 \triangleq \s{supp}(\fl{x})\setminus \s{supp}(\fl{y})$ and $\ca{S}_2 \triangleq \s{supp}(\fl{y})\setminus \s{supp}(\fl{x})$. Again note that
\begin{align*}
    &\min_{i \in [n]:\fl{u}_i \neq 0} \left|\fl{u}_i\right| \ge \frac{1}{\eta} \cdot \max_{i \in [n]:\fl{u}_i \neq 0} \left|\fl{u}_i\right| \quad \text{ if } \kappa(\fl{u}) \le \eta \\
    &\implies \min_{i \in [n]:\fl{u}_i \neq 0} \fl{u}_i^2 \ge \frac{1}{k\eta^2} \cdot \sum_{i \in [n]:\fl{u}_i \neq 0} \fl{u}_i^2  \quad \text{ if } \kappa(\fl{u}) \le \eta \\
    &\implies \min_{i \in [n]:\fl{u}_i \neq 0} \left|\fl{u}_i\right| \ge \frac{1}{\eta\sqrt{k}} \cdot \left|\left| \fl{u}\right|\right|_2  \quad \text{ if } \kappa(\fl{u}) \le \eta
\end{align*}

In that case, it must happen that 

\begin{align*}
    \left|\left|\frac{\fl{x}}{\left|\left|\fl{x}\right|\right|_2}- \frac{\fl{y}}{\left|\left|\fl{y}\right|\right|_2}\right|\right|_2 \ge \frac{\left|\left|\fl{x}_{\mid\ca{S}_1}\right|\right|_2}{\left|\left|\fl{x}\right|\right|_2}+\frac{\left|\left|\fl{y}_{\mid\ca{S}_2}\right|\right|_2}{\left|\left|\fl{y}\right|\right|_2} \ge \frac{2\sqrt{\epsilon}}{\eta}.
\end{align*}
Now, following \cite{flodin2019superset}, we can construct a $\delta$-cover $\ca{S}$ of all $k$-sparse unit vectors which is known to exist with ${n \choose k}(3/\delta)^{k}$ points.
Let $\fl{x}'\in \ca{S}$ and $\fl{y}'\in \ca{S}$ be the nearest vectors in the $\delta$-cover to $\fl{x}$ and $\fl{y}$ respectively. By using triangle inequality, we will have that $||\fl{x}'-\fl{y}'||_2 \ge 2\sqrt{\epsilon}\eta^{-1} -2\delta$. Hence, it is sufficient for the sensing matrix $\fl{A}$ to distinguish between pairs of distinct vectors $\fl{u}',\fl{v}' \in \ca{S}$ such that $||\fl{u}'-\fl{v}'||_2 \ge 2\sqrt{\epsilon}\eta^{-1} -2\delta$. Therefore, we substitute $\gamma=2\sqrt{\epsilon}/\eta, \delta=\gamma/3(1+\sqrt{2k})$ and by taking a union bound over all pairs of vectors $\fl{u}',\fl{v}' \in \ca{S}$ such that $||\fl{u}'-\fl{v}'||_2 \ge 2\sqrt{\epsilon}\eta^{-1} -2\delta$, we can bound the probability of error in decoding from above as:
\begin{align*}
   \Pr(\text{Error in Decoding}) \le {n \choose k}^{2}\Big(\frac{3}{\delta}\Big)^{2k}\Big(1-\frac{\gamma-2\delta(1+ \sqrt{2k})}{\pi}\Big)^{m}
\end{align*}

If the probability of error is less than $1$, then there exists a measurement matrix that is able to recover an $\epsilon$-approximate support for all $k$-sparse unknown vectors whose dynamic range is bounded from above by $\eta$. Hence, we have

\begin{align*}
    &{n \choose k}^{2}\Big(\frac{3}{\delta}\Big)^{2k}\Big(1-\frac{\gamma}{3\pi}\Big)^{m} 
    \le {n \choose k}^{2}\Big(\frac{10\sqrt{2k}}{\gamma}\Big)^{2k}\exp\Big(-\frac{m\gamma}{3\pi}\Big) <1 \\
    &\implies 2k\log \frac{en}{k}+2k\log\frac{10\sqrt{2k}}{\gamma} -\frac{m\gamma}{3\pi} <0 \\
    &\implies m \ge \frac{3\pi k\eta }{2\sqrt{\epsilon}} \log \frac{5en\eta}{\sqrt{\epsilon}} \\
    & \implies m \ge \frac{6\pi k\eta }{2\sqrt{\epsilon}} \log 5en\eta
\end{align*}
where in the last step, we used the fact that $\epsilon \ge 1/2k$.
 Hence, we get that there exists a matrix $\fl{A}$ with $m=O(k\eta\epsilon^{-1/2} \log n\eta)$ measurements such that we will have $\s{sign}(\fl{Ax}) \neq \s{sign}(\fl{Ay})$ for any two  $k$-sparse vectors $\fl{x},\fl{y} \in \bb{R}^{n}$ satisfying $\left|\s{supp}(\fl{x}) \cap \s{supp}(\fl{y})\right| \le k(1-\epsilon)$ and $\kappa(\fl{x}),\kappa(\fl{y})\le \eta$. This completes the proof of the theorem.

\section{Detailed Proof of Correctness of the Recovery Algorithm in Theorem \ref{thm:superset:real}}
\label{outline:superset|>real}

\subsection{Some Intermediate Results}
\label{outline:superset|>real|>intermediate}

\begin{lemma}
\label{lemma:misc:algebraic-and-linear-independence:span}
Fix
\(  \Variable{r}, \Variable{s} \in \Z_{+}  \)
such that
\(  \Variable{r} \geq \Variable{s}  \).
Let
\(  \AlgIndSet = \{ \AlgIndConst_{i,j} \}_{(i,j) \in [\Variable{r}] \times [\Variable{s}]} \subset \R  \)
be a set of \(  \Variable{r} \Variable{s}  \)-many distinct constants with algebraic independence over \(  \Q  \).
Suppose
\(  \Mat{X} \in \R^{\Variable{r} \times \Variable{s}}  \)
is an \(  \Variable{r} \times \Variable{s}  \) matrix whose \(  (i,j)  \)-entries each take a value in
\(  \{ 0, \AlgIndConst_{i,j} \}  \), \(  (i,j) \in [\Variable{r}] \times [\Variable{s}]  \).
Then, any \(  j\Th  \) column, \(  j \in [\Variable{s}]  \) with
\(  \left\| \Vec{X}_{j} \right\|_{0} \geq \Variable{s}  \)
is linearly independent of the remaining \(  \Variable{s}-1  \) columns---formally,
\(  \Vec{X}_{j} \notin \Span( \{ \Vec{X}_{j'} \}_{j' \in [\Variable{s}] \setminus \{j\}} )  \).
\end{lemma}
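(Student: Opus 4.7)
The plan is a proof by contradiction: I will assume $\Vec{X}_j \in \Span(\{\Vec{X}_{j'}\}_{j' \in [\Variable{s}] \setminus \{j\}})$ and then exhibit an $(r'+1) \times (r'+1)$ submatrix of $\Mat{X}$ whose determinant is forced to be nonzero, where $r'$ denotes the rank over $\R$ of the submatrix of $\Mat{X}$ obtained by deleting column $j$. Since that submatrix has only $\Variable{s}-1$ columns, $r' \leq \Variable{s}-1$. Producing such an $(r'+1) \times (r'+1)$ nonsingular submatrix forces $\Rank(\Mat{X}) \geq r'+1 > r'$, which contradicts the assumption.

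To build the submatrix, I would first pick row and column index sets $I_1 \subseteq [\Variable{r}]$ and $J_1 \subseteq [\Variable{s}] \setminus \{j\}$, with $|I_1| = |J_1| = r'$, such that the $r' \times r'$ submatrix of $\Mat{X}$ on rows $I_1$ and columns $J_1$ is nonsingular. Setting $R_j = \{i \in [\Variable{r}] : \Mat{X}_{i,j} \neq 0\}$, the hypothesis $\|\Vec{X}_j\|_0 \geq \Variable{s}$ together with $|I_1| = r' < \Variable{s}$ gives $|R_j \setminus I_1| \geq \Variable{s} - r' > 0$, so I fix any $i^* \in R_j \setminus I_1$ and form the $(r'+1) \times (r'+1)$ submatrix $\Mat{N}$ of $\Mat{X}$ on rows $I_1 \cup \{i^*\}$ and columns $J_1 \cup \{j\}$.

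The heart of the proof is to treat $\det(\Mat{N})$ as a polynomial in the elements of $\AlgIndSet$ via the Leibniz expansion. The crucial observation is that the indeterminate $\AlgIndConst_{i^*, j}$ appears in exactly one entry of $\Mat{N}$, namely position $(i^*, j)$, because the elements of $\AlgIndSet$ are pairwise distinct. Hence the terms of $\det(\Mat{N})$ involving $\AlgIndConst_{i^*, j}$ are precisely those corresponding to permutations sending row $i^*$ to column $j$, and these sum to $\pm\, \AlgIndConst_{i^*, j} \cdot \det(\Mat{X}_{I_1, J_1})$. So the coefficient of $\AlgIndConst_{i^*, j}$ in the polynomial $\det(\Mat{N})$ equals $\pm \det(\Mat{X}_{I_1, J_1})$, a polynomial in variables from $\AlgIndSet \setminus \{\AlgIndConst_{i^*, j}\}$ that is nonzero as a real number by the choice of $I_1, J_1$.

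Algebraic independence of $\AlgIndSet$ over $\Q$ says exactly that every nonzero polynomial in $\Q[\AlgIndSet]$ has a nonzero real evaluation, and contrapositively, that every real value that vanishes under such an evaluation must come from the zero polynomial. Applying this twice: since $\det(\Mat{X}_{I_1, J_1}) \neq 0$ as a real number it is nonzero as a polynomial, so the coefficient of $\AlgIndConst_{i^*, j}$ in $\det(\Mat{N})$ is a nonzero polynomial, so $\det(\Mat{N})$ is itself a nonzero polynomial, and finally $\det(\Mat{N}) \neq 0$ as a real number. This yields a rank-$(r'+1)$ submatrix of $\Mat{X}$, contradicting $\Vec{X}_j \in \Span(\{\Vec{X}_{j'}\}_{j' \neq j})$. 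The delicate step is the clean extraction of the coefficient of $\AlgIndConst_{i^*, j}$ from the Leibniz expansion, which works only because each $\AlgIndConst_{i,j}$ occupies exactly one cell of $\Mat{X}$; once that is in place, algebraic independence does the remaining work.
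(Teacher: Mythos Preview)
Your proof is correct and takes a genuinely different, and arguably cleaner, route than the paper's. Both arguments ultimately show that a certain integer-coefficient polynomial in the $q_{i,j}$'s is nonzero as a polynomial and then invoke algebraic independence, but the paper does this by running Gaussian elimination on $\Mat{X}$ (after WLOG reductions to $r=s$, to the column of interest being the $s\Th$, and to linear independence of the first $s-1$ columns), tracking each entry as a polynomial $p_{i,j}^{(t)}$ and inductively verifying that the final $(s,s)$-pivot polynomial retains a monomial in the indeterminate $z_{s,s}$. Your cofactor argument bypasses all of this: by augmenting a nonsingular $r'\times r'$ minor on columns $J_1\subseteq[s]\setminus\{j\}$ with a new row $i^*$ where column $j$ is nonzero, the coefficient of $q_{i^*,j}$ in the Leibniz expansion of the enlarged minor is immediately $\pm\det(\Mat{X}_{I_1,J_1})$, which is nonzero by construction. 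This avoids the inductive bookkeeping entirely and handles the case where the remaining $s-1$ columns are linearly dependent without any WLOG reduction, since you work directly with whatever rank $r'$ they have. The paper's approach is more algorithmic in flavor but pays for it with the step-by-step verification that no cancellation kills the distinguished monomial.
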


\begin{corollary}
\label{cor:misc:algebraic-and-linear-independence:kernel}
Consider the matrix
\(  \Mat{X} \in \R^{\Variable{r} \times \Variable{s}}  \)
designed as in \LEMMA \ref{lemma:misc:algebraic-and-linear-independence:span}.
Suppose the subset of columns indexed by
\(  \Set{J} \subseteq [s]  \),
\(  |\Set{J}| > 0  \),
has the property that
\(  \left\| \Vec{X}_{j} \right\|_{0} \geq \Variable{s}  \)
for each \(  j \in \Set{J}  \).
%
Then, the kernel of \(  \Mat{X}  \) satisfies
\(  \Ker(\Mat{X}) \subseteq \{ \Vec{u} \in \R^{s} : \Vec*{u}_{j} = 0,\, \forall j \in \Set{J} \}  \).
\end{corollary}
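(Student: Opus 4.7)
The plan is to derive the corollary as an almost immediate consequence of \LEMMA \ref{lemma:misc:algebraic-and-linear-independence:span}, proceeding by contradiction. Specifically, I would fix an arbitrary vector $\Vec{u} \in \Ker(\Mat{X})$, so that by definition $\Mat{X}\Vec{u} = \sum_{j \in [s]} \Vec*{u}_j \Vec{X}_j = \Vec{0}$, and I want to show that $\Vec*{u}_{j^*} = 0$ for every $j^* \in \Set{J}$.

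Suppose, toward a contradiction, that there exists some $j^* \in \Set{J}$ for which $\Vec*{u}_{j^*} \neq 0$. Then I can rearrange the kernel equation to isolate the $j^*\Th$ column, writing
\begin{equation*}
  \Vec{X}_{j^*} = -\frac{1}{\Vec*{u}_{j^*}} \sum_{j \in [s] \setminus \{j^*\}} \Vec*{u}_j \Vec{X}_j,
\end{equation*}
which exhibits $\Vec{X}_{j^*}$ as an element of $\Span(\{\Vec{X}_j\}_{j \in [s] \setminus \{j^*\}})$. However, by hypothesis $j^* \in \Set{J}$ satisfies $\left\| \Vec{X}_{j^*} \right\|_{0} \geq s$, so \LEMMA \ref{lemma:misc:algebraic-and-linear-independence:span} applies directly to $\Vec{X}_{j^*}$ and guarantees that $\Vec{X}_{j^*} \notin \Span(\{\Vec{X}_{j}\}_{j \in [s] \setminus \{j^*\}})$. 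This contradicts the previous display, so no such $j^*$ can exist, and hence $\Vec*{u}_j = 0$ for all $j \in \Set{J}$.

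Since $\Vec{u} \in \Ker(\Mat{X})$ was arbitrary, this establishes the claimed inclusion $\Ker(\Mat{X}) \subseteq \{\Vec{u} \in \R^{s} : \Vec*{u}_{j} = 0,\, \forall j \in \Set{J}\}$. There is no substantive obstacle here: the whole content of the corollary is packaged in the lemma, and the only care needed is to note that the lemma's hypothesis ``$\left\| \Vec{X}_j \right\|_0 \geq s$'' is applied separately to each $j^* \in \Set{J}$, with the other columns (whether or not they themselves lie in $\Set{J}$) playing the role of the remaining $s-1$ columns in the lemma's statement.
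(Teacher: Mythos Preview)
Your proof is correct and essentially identical to the paper's own argument: both fix a vector in the kernel, suppose some coordinate in $\Set{J}$ is nonzero, rearrange the kernel identity to exhibit the corresponding column as a linear combination of the others, and invoke \LEMMA \ref{lemma:misc:algebraic-and-linear-independence:span} for the contradiction.
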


\begin{proof}[Proof of \LEMMA \ref{lemma:misc:algebraic-and-linear-independence:span}]
\label{pf:lemma:misc:algebraic-and-linear-independence:span}
%
Arbitrarily fix the parameters
\(  \Variable{r}, \Variable{s} \in \Z_{+}   \),
\(  \Variable{r} \geq \Variable{s}  \),
and the algebraically independent set of constants
\(  \AlgIndSet = \{ \AlgIndConst_{i,j} \}_{(i,j) \in [\Variable{r}] \times [\Variable{s}]}  \).
%
Let
\(  \Mat{X} \in \R^{\Variable{r} \times \Variable{s}}  \)
be a real-valued matrix with its support indicated by the map
\(  \Function{a} : [\Variable{r}] \times [\Variable{s}] \to \{0,1\}  \)
and with its \(  (i,j)  \)-entries given by
\(  \Mat*{X}_{i,j} = \AlgIndConst_{i,j} \Function{a}(i,j)  \),
\(  (i,j) \in [\Variable{r}] \times [\Variable{s}]  \),
where for some \(  \Ix{\ell} \in [s]  \), the \(  \Ix{\ell}\Th  \) column, of \(  \Mat{X}  \) has the property that
\(  \left\| \Vec{X}_{\Ix{\ell}} \right\|_{0} \geq \Variable{s}  \).
%
Without loss of generality, we will take \(  \Ix{\ell} = \Variable{s}  \) and make the following simplifications.
Assume that \(  \Variable{r} = \Variable{s}  \), %
and that the first \(  \Variable{s}-1  \) columns of \(  \Mat{X}  \) form a linearly independent set.
The latter assumption is permissible because we need only determine whether
\(  \Vec{X}_{\Variable{s}} \in \Span( \{ \Vec{X}_{j} \}_{j \in [\Variable{s} - 1]} )  \).
%
For this end, it suffices to show the independence of \(  \Vec{X}_{\Variable{s}}  \) from a basis for
\(  \Span( \{ \Vec{X}_{j} \}_{j \in [\Variable{s}-1]} )  \) by, e.g., ignoring any columns in the span of this basis (but not members of the basis themselves).
%
%
\par 
%
We will perform Gaussian elimination on \(  \Mat{X}  \) to show the linear independence of the column \(  \Vec{X}_{\Ix{\Variable{s}}}  \) from
\(  \{ \Vec{X}_{j} : j \in [\Variable{s}-1]\}  \).
%
Writing
\(  \Mat{X}^{(\Iter{0})}, \dots, \Mat{X}^{(\Iter{\Variable{s}-1})} \in \R^{\Variable{r} \times \Variable{s}}  \),
set \(  \Mat{X}^{(0)} = \Mat{X}  \), and for \(  \Iter{t} \in [\Variable{s}-1]  \), let \(  \Mat{X}^{(\Iter{t})}  \) be the matrix
produced by the \(  \Iter{t}\Th  \) iteration of the Gaussian elimination procedure with its \(  i\Th  \) rows denoted by
\(  \Mat{X}^{i(\Iter{t})}  \)
and its \(  (i,j)  \)-entries written as
\(  \Mat{X}_{i,j}^{(\Iter{t})}  \),
\(  i \in [r]  \),
\(  j \in [s]  \).
Because the subset of columns \(  \{ \Vec{X}_{j} \}_{j \in [\Variable{s} - 1]}  \) is a linearly independent set, we can assume without loss of generality that
after the first \(  \Iter{t}  \) iterations, \(  \Iter{t} \in [\Variable{s}-1]  \), of Gaussian elimination, the entry
\(  \Mat*{X}_{\Iter{t},\Iter{t}}^{(\Iter{t})}  \) is nonzero---for a matrix with such independence, this can always be obtained by row-swaps.
Each iteration then performs the following operation to obtain each \(  \Ix{i}\Th  \) row in the matrix \(  \Mat{X}^{(\Iter{t})}  \)
from the previous matrix \(  \Mat{X}^{(\Iter{t}-1)}  \), \(  \Iter{t} > 0  \),
\begin{gather}
\label{pf:lemma:misc:algebraic-and-linear-independence:span:eqn:gaussian-elim}
  \Mat{X}^{\Ix{i}(\Iter{t})}
  =
  \begin{cases}
    \Vec{X}^{\Ix{i}(\Iter{t}-1)} ,& \cIf \Ix{i} \leq \Iter{t} \\
    \Mat*{X}_{\Iter{t},\Iter{t}}^{(\Iter{t}-1)} \Vec{X}^{\Ix{i}(\Iter{t}-1)}
      - \Mat*{X}_{\Ix{i},\Iter{t}}^{(\Iter{t}-1)} \Vec{X}^{\Ix{t}(\Iter{t}-1)}
    ,& \cIf \Ix{i} > \Iter{t}.
  \end{cases}
\end{gather}
%
\par 
%
The end goal is to show that after the \(  (\Variable{s}-1)\Th  \) iteration, the entry
\(  \Mat*{X}_{\Variable{s},\Variable{s}}^{(\Variable{s}-1)}  \) is nonzero, which will imply, by standard linear algebraic principles, that
\(  \Vec{X}_{\Ix{\Variable{s}}}  \) is linearly independent of the set \(  \{ \Vec{X}_{j} \}_{j \in [\Variable{s} - 1]}  \).
This can be verified using the algebraic independence of the nonzero entries of the original matrix \(  \Mat{X}  \).
A natural framework is then to represent the matrix entries as polynomials with rational coefficients, for which the following notation is
introduced.
Let
\(  \Polynomial{p}_{i,j} : \R^{\Variable{r} \Variable{s}} \to \R  \),
\(  (i,j) \in [\Variable{r}] \times [\Variable{s}]  \),
be \(  \Variable{r} \Variable{s}  \)-many polynomials given by
\(  \Polynomial{p}_{i,j}( \Variable{z}_{1,1}, \dots, \Variable{z}_{\Variable{r},\Variable{s}} ) = \Function{a}(i,j) \Variable{z}_{i,j}  \)
and whose evaluations produce the \(  (i,j)  \)-entry in \(  \Mat{X}  \)%
---%
formally,
\(  \Mat*{X}_{i,j} = \Polynomial{p}_{i,j}( \AlgIndConst_{1,1}, \dots, \AlgIndConst_{\Variable{r},\Variable{s}} )
    = \Function{a}(i,j) \AlgIndConst_{i,j}  \).
%
Analogous collections of polynomials,
\(  \Polynomial{p}_{i,j}^{(\Iter{t})}  \),
\(  (i,j) \in [\Variable{r}] \times [\Variable{s}]  \),
can be defined for the entries of the matrices \(  \Mat{X}^{(\Iter{t})}  \),
for each step \(  \Iter{t} \in \{0,\dots,\Variable{s}-1\}  \) of the Gaussian elimination procedure.
Associate to the \(  (i,j)  \)-entries of the \(  0\Th  \) matrix, \(  \Mat{X}^{(0)}  \), the polynomials defined for the original matrix
\(  \Mat{X}  \), where
\(  \Polynomial{p}_{i,j}^{(0)} = \Polynomial{p}_{i,j}  \),
\(  (i,j) \in [\Variable{r}] \times [\Variable{s}]  \).
%
For each subsequent \(  \Iter{t}\Th  \) step of the Gaussian elimination procedure, \(  \Iter{t} \in [\Variable{s}-1]  \),
associate to the \(  (i,j)  \)-entries of \(  \Mat{X}^{(\Iter{t})}  \) the polynomials
\(  \Polynomial{p}_{i,j}^{(\Iter{t})}  \),
\(  (i,j) \in [\Variable{r}] \times [\Variable{s}]  \),
which by use of \EQUATION \eqref{pf:lemma:misc:algebraic-and-linear-independence:span:eqn:gaussian-elim}, can be expressed as
\begin{gather}
\label{pf:lemma:misc:algebraic-and-linear-independence:span:eqn:gaussian-elim:polynomials}
  \Polynomial{p}_{\Ix{i},\Ix{j}}^{(\Iter{t})}
  =
  \begin{cases}
    \Polynomial{p}_{\Ix{i},\Ix{j}}^{(\Iter{t}-1)} ,& \cIf \Ix{i} \leq \Iter{t} \\
    \Polynomial{p}_{\Iter{t},\Iter{t}}^{(\Iter{t}-1)} \Polynomial{p}_{\Ix{i},\Ix{j}}^{(\Iter{t}-1)}
      - \Polynomial{p}_{\Ix{i},\Iter{t}}^{(\Iter{t}-1)} \Polynomial{p}_{\Iter{t},\Ix{j}}^{(\Iter{t}-1)}
    ,& \cIf \Ix{i} > \Iter{t}. \\
  \end{cases}
\end{gather}
%
The \(  0\Th  \) polynomials,
\(  \Polynomial{p}_{i,j}^{(0)}  \),
\(  (i,j) \in [\Variable{r}] \times [\Variable{s}]  \),
have integer coefficients, and all subsequent \(  \Iter{t}\Th  \) polynomials, \(  \Iter{t} > 0  \), are obtained by the addition and
multiplication of the \(  (\Iter{t}-1)\Th  \) polynomials as seen in
\EQUATION \eqref{pf:lemma:misc:algebraic-and-linear-independence:span:eqn:gaussian-elim:polynomials}.
By a simple inductive argument, together with the fact that \(  \Z[X_{1},\dots,X_{n}]  \) is a ring
(and hence has closure under the binary operations \(  +  \) and \(  \cdot  \)),  it follows that every \(  \Iter{t}\Th  \) collections of
polynomials, \(  \Iter{t} \geq 0  \), has integer (and thus rational) coefficients.
Additionally, recall that after the \(  (\Variable{s}-1)\Th  \) step of Gaussian elimination, the matrix \(  \Mat{X}^{(\Iter{s}-1)}  \) will be
in row-echelon form.
With both these observations in mind, let us examine the polynomial \(  \Polynomial{p}_{\Variable{s},\Variable{s}}^{(\Variable{s}-1)}  \),
whose evaluation at \(  (\AlgIndConst_{1,1}, \dots, \AlgIndConst_{\Variable{r},\Variable{s}})  \)
corresponds with the entry \(  \Mat*{X}_{\Variable{s},\Variable{s}}^{(\Variable{s}-1)}  \).
The lemma's result will follow from showing that necessarily
\(  \Polynomial{p}_{\Variable{s},\Variable{s}}^{(\Variable{s}-1)}(\AlgIndConst_{1,1}, \dots, \AlgIndConst_{\Variable{r},\Variable{s}}) \neq 0  \), which is argued next using contradiction.
%
\par 
%
Suppose this property does not hold, i.e., that
\(  \Polynomial{p}_{\Variable{s},\Variable{s}}^{(\Variable{s}-1)}(\AlgIndConst_{1,1}, \dots, \AlgIndConst_{\Variable{r},\Variable{s}}) = 0  \).
Then, either \TextEnum{i} the polynomial \(  \Polynomial{p}_{\Variable{s},\Variable{s}}  \) is trivial, i.e.,
\(  \Polynomial{p}_{\Variable{s},\Variable{s}}^{(\Variable{s}-1)} = 0  \),
or \TextEnum{ii} \(  (\AlgIndConst_{1,1}, \dots, \AlgIndConst_{\Variable{r},\Variable{s}})  \) is a root of
\(  \Polynomial{p}_{\Variable{s},\Variable{s}}^{(\Variable{s}-1)}  \).
We will first show by induction that \TextEnum{i} is not possible, inducting on the \(  \Iter{t}\Th  \) iterations of the Gaussian elimination
procedure for \(  \Iter{t} = 0, 1, 2, \dots, \Variable{s}-1  \) and verifying that for each step, there exists a monomial in
\(  \Polynomial{p}_{\Variable{s},\Variable{s}}^{(\Iter{t})}  \) which contains (a nontrivial power of) the indeterminate
\(  \Variable{z}_{\Variable{s},\Variable{s}}  \).
Note that this simultaneously ensures that
\(  \Polynomial{p}_{\Variable{s},\Variable{s}}^{(\Iter{t})} \neq 0  \).
%
For the base case, take \(  \Iter{t} = 0  \).
Recall that by choice, the \(  \Variable{s}\Th  \) column satisfies
\(  \left\| \Vec{X}_{\Variable{s}} \right\|_{0} = \Variable{r}  \),
and hence
\(  \Function{a}(\Variable{s},\Variable{s}) = 1  \)
and
\(  \Mat*{X}_{\Variable{s},\Variable{s}}
    = \Function{a}(\Variable{s},\Variable{s}) \AlgIndConst_{\Variable{s},\Variable{s}}
    = \AlgIndConst_{\Variable{s},\Variable{s}}
    \neq 0  \).
%
It follows that
\(  \Polynomial{p}_{\Variable{s},\Variable{s}}^{(0)}(\Variable{z}_{1,1},\dots,\Variable{s}_{\Variable{r},\Variable{s}})
     = \Function{a}(\Variable{s},\Variable{s}) \Variable{z}_{\Variable{s},\Variable{s}}
     = 1 \cdot \Variable{z}_{\Variable{s},\Variable{s}}
     = \Variable{z}_{\Variable{s},\Variable{s}} \),
as desired.
Arbitrarily fixing \(  \Iter{t} \in [\Variable{s}-1]  \), suppose that for each \(  \Iter{t'} < \Iter{t}  \), the
\(  (\Variable{s},\Variable{s})\Th  \) polynomial contains a monomial involving (a nontrivial power of) the indeterminate
\(  \Variable{z}_{\Variable{s},\Variable{s}}  \) for which the coefficient is nonzero, and thus also satisfies
\(  \Polynomial{p}_{\Variable{s},\Variable{s}}^{(\Iter{t'})} \neq 0  \).
%
Then, we need show that \(  \Polynomial{p}_{\Variable{s},\Variable{s}}^{(\Iter{t})}  \) contains a monomial with a nonzero coefficient
and (a nontrivial power of) the indeterminate \(  \Variable{z}_{\Variable{s},\Variable{s}}  \).
This will immediately imply
\(  \Polynomial{p}_{\Variable{s},\Variable{s}}^{(\Iter{t})} \neq 0  \).
%
From \EQUATION \eqref{pf:lemma:misc:algebraic-and-linear-independence:span:eqn:gaussian-elim:polynomials},
the polynomial is given by
\(  \Polynomial{p}_{\Variable{s},\Variable{s}}^{(\Iter{t})}
     = \Polynomial{p}_{\Iter{t},\Iter{t}}^{(\Iter{t}-1)} \Polynomial{p}_{\Variable{s},\Variable{s}}^{(\Iter{t}-1)}
       - \Polynomial{p}_{\Variable{s},\Iter{t}}^{(\Iter{t}-1)} \Polynomial{p}_{\Iter{t},\Variable{s}}^{(\Iter{t}-1)}  \).
%
The polynomial
\(  \Polynomial{p}_{\Iter{t},\Iter{t}}^{(\Iter{t}-1)} \neq 0  \)
by assumption.
Likewise,
\(  \Polynomial{p}_{\Variable{s},\Variable{s}}^{(\Iter{t}-1)} \neq 0  \)
by the inductive hypothesis.
Hence, the first term is nontrivial, i.e.,
\(  \Polynomial{p}_{\Iter{t},\Iter{t}}^{(\Iter{t}-1)} \Polynomial{p}_{\Variable{s},\Variable{s}}^{(\Iter{t}-1)} \neq 0  \),
from which we can infer that \(  \Polynomial{p}_{\Variable{s},\Variable{s}}^{(\Iter{t})} = 0  \) only if
\(  \Polynomial{p}_{\Variable{s},\Iter{t}}^{(\Iter{t}-1)} \Polynomial{p}_{\Iter{t},\Variable{s}}^{(\Iter{t}-1)}
    = \Polynomial{p}_{\Iter{t},\Iter{t}}^{(\Iter{t}-1)} \Polynomial{p}_{\Variable{s},\Variable{s}}^{(\Iter{t}-1)}  \).
%
Using \EQUATION \eqref{pf:lemma:misc:algebraic-and-linear-independence:span:eqn:gaussian-elim:polynomials}, it is straightforward to inductively argue
(omitted here) that every term in
\(  \Polynomial{p}_{\Variable{s},\Iter{t}}^{(\Iter{t}-1)} \Polynomial{p}_{\Iter{t},\Variable{s}}^{(\Iter{t}-1)}  \) involving the indeterminate
\(  \Variable{z}_{\Variable{s},\Variable{s}}  \) necessarily has a zero-valued coefficient.
%
%
%
On the other hand, the inductive hypothesis ensures that
\(  \Polynomial{p}_{\Variable{s},\Variable{s}}^{(\Iter{t}-1)}  \),
and thus also
\(  \Polynomial{p}_{\Iter{t},\Iter{t}}^{(\Iter{t}-1)} \Polynomial{p}_{\Variable{s},\Variable{s}}^{(\Iter{t}-1)}  \),
contains at least one monomial involving \(  \Variable{z}_{\Variable{s},\Variable{s}}  \) with a nonzero coefficient.
Therefore,
\(  \Polynomial{p}_{\Variable{s},\Iter{t}}^{(\Iter{t}-1)} \Polynomial{p}_{\Iter{t},\Variable{s}}^{(\Iter{t}-1)}
    \neq \Polynomial{p}_{\Iter{t},\Iter{t}}^{(\Iter{t}-1)} \Polynomial{p}_{\Variable{s},\Variable{s}}^{(\Iter{t}-1)}  \),
and now by our earlier observation, we can conclude that
\(  \Polynomial{p}_{\Variable{s},\Variable{s}}^{(\Iter{t})} \neq 0  \).
%
Moreover, from the above argument, it is clear that \(  \Polynomial{p}_{\Variable{s},\Variable{s}}^{(\Iter{t})}  \)
must contain a monomial with a nonzero coefficient and a nontrivial power of \(  \Variable{z}_{\Variable{s},\Variable{s}}  \).
By induction, it follows that
\(  \Polynomial{p}_{\Variable{s},\Variable{s}}^{(\Variable{s}-1)} \neq 0  \).
%
\par 
%
We are now ready to complete the proof of \LEMMA \ref{lemma:misc:algebraic-and-linear-independence:span} by using contradiction to argue that
\(  \Mat*{X}_{\Variable{s},\Variable{s}}^{(\Variable{s}-1)}
     = \Polynomial{p}_{\Variable{s},\Variable{s}}^{(\Variable{s}-1)}(\AlgIndConst_{1,1}, \dots, \AlgIndConst_{\Variable{r},\Variable{s}}) \neq 0  \).
%
By the above argument, the associated polynomial \(  \Polynomial{p}_{\Variable{s},\Variable{s}}^{(\Variable{s}-1)}  \) is nontrivial, i.e.,
\(  \Polynomial{p}_{\Variable{s},\Variable{s}}^{(\Variable{s}-1)} \neq 0  \).
%
Then, due to our earlier assumption that
\(  \Polynomial{p}_{\Variable{s},\Variable{s}}^{(\Variable{s}-1)}(\AlgIndConst_{1,1}, \dots, \AlgIndConst_{\Variable{r},\Variable{s}}) = 0  \),
the tuple of constants \(  (\AlgIndConst_{1,1}, \dots, \AlgIndConst_{\Variable{r},\Variable{s}})  \) is necessarily a root of the polynomial
\(  \Polynomial{p}_{\Variable{s},\Variable{s}}^{(\Variable{s}-1)}  \).
However, because \(  \Polynomial{p}_{\Variable{s},\Variable{s}}^{(\Variable{s}-1)}  \) has all integer-valued
(and hence also all rational-valued) coefficients, as noted earlier, this requires that the constants
\(  \AlgIndConst_{1,1}, \dots, \AlgIndConst_{\Variable{r},\Variable{s}}  \) be algebraically dependent over \(  \Q  \)---a contradiction.
Therefore, it must be that
\(  \Mat*{X}_{\Variable{s},\Variable{s}}^{(\Variable{s}-1)}
     = \Polynomial{p}_{\Variable{s},\Variable{s}}^{(\Variable{s}-1)}(\AlgIndConst_{1,1}, \dots, \AlgIndConst_{\Variable{r},\Variable{s}}) \neq 0  \).
%
To wrap up the proof, note that \(  \Mat{X}^{(\Variable{s}-1)}  \) is in echelon form and has a row (the \(  \Variable{s}\Th  \) row)
where the leading nonzero value is in the \(  \Variable{s}\Th  \) column, implying (by basic linear algebraic principles, e.g., row equivalence)
that \(  \Vec{X}_{\Variable{s}}  \) is linearly independent of \(  \{ \Vec{X}_{j} \}_{j \in [\Variable{s}-1]}  \), or equivalently that
\(  \Vec{X}_{\Variable{s}} \notin \Span( \{ \Vec{X}_{j} \}_{j \in [\Variable{s}-1]} )  \),
as desired.
By extension, the lemma holds.
\end{proof}

\begin{proof}[Proof of \COROLLARY \ref{cor:misc:algebraic-and-linear-independence:kernel}]
\label{pf:cor:misc:algebraic-and-linear-independence:kernel}
The corollary will largely follow from the proof of \LEMMA \ref{lemma:misc:algebraic-and-linear-independence:span}.
Suppose some \(  j\Th  \) column, \(  j \in [\Variable{s}]  \), satisfies
\(  \Vec{X}_{j} \notin \Span( \{ \Vec{X}_{j'} \}_{j' \in [\Variable{s}] \setminus \{j\}} )  \),
and assume for the sake of contradiction that there exists a vector
\(  \Vec{u} \in \Ker(\Mat{X})  \)
with
\(  \Vec*{u}_{j} \neq 0  \).
%
Then,
\begin{align*}
  &
  \Mat{X} \Vec{u} = \Vec{0}
  \\
  &\dStep
  \sum_{j' \in [\Variable{s}]} \Vec{X}_{j'} \Vec*{u}_{j'} = \Vec{0}
  \\
  &\dStep
  \Vec{X}_{j} \Vec*{u}_{j} + \sum_{j' \in [\Variable{s}] \setminus \{j\}} \Vec{X}_{j'} \Vec*{u}_{j'} = \Vec{0}
  \\
  &\dStep
  \Vec{X}_{j} \Vec*{u}_{j} = -\sum_{j' \in [\Variable{s}] \setminus \{j\}} \Vec{X}_{j'} \Vec*{u}_{j'}
  \\
  &\dStep
  \Vec{X}_{j} = \sum_{j' \in [\Variable{s}] \setminus \{j\}} \Vec{X}_{j'} \cdot \left( -\frac{\Vec*{u}_{j'}}{\Vec*{u}_{j}} \right)
  \\
  &\Tab\dCmt
  \Text{For each}
  j' \in [s] \setminus \{j\},
  -\frac{\Vec*{u}_{j'}}{\Vec*{u}_{j}} \in \R
  \Text{is finite and well-defined since} \Vec*{u}_{j} \neq 0 \Text{by assumption.}
  \\
  &\dStep
  \Vec{X}_{j} \in \Span \left( \{ \Vec{X}_{j'} \}_{j' \in [\Variable{s}] \setminus \{j\}} \right)
\end{align*}
which is a contradiction.
Therefore, each \(  \Vec{u} \in \Ker(\Mat{X})  \) must have \(  \Vec*{u}_{j} = 0  \).
By extension, the corollary follows.
\end{proof}


\subsection{Proof of Main Result}
\label{outline:superset:real|>pf-main-thm}

\begin{proof}[Proof of \THEOREM \ref{thm:superset:real}]
\label{pf:thm:superset:real}
For convenience, we will begin by restating verbatim the design of the measurement matrix as provided in \SECTION \ref{sec:superset}.
Let
\(  \AlgIndSet = \{ \AlgIndConst_{i,j} \}_{(i,j) \in [\Variable{r}] \times [\Variable{s}]} \subset \R  \)
be any set of real-valued constants with algebraic independence over \(  \Q  \).
Fix
\(  \epsilon  \in (0, \sqrt{k^{-1} \log(n/k)}]  \)
arbitrarily.
Fix any vector $\fl{x}\in \bb{R}^n$ satisfying $\lr{\fl{x}}_0 \le k$. Let $\fl{A}$ be a $(n,m,d,k,\epsilon k/2,0.5)$-strongly list union-free matrix which is also a $(n,m,d,\lr{\fl{x}}_0,\epsilon \lr{\fl{x}}_0/2,0.5)$-list union-free matrix
 constructed from a $(n,m,d,\lr{\fl{x}}_0,\epsilon \lr{\fl{x}}_0/2,0.5)$-list union-free family $\ca{F}= \{\ca{B}_1, \ca{B}_2,\dots, \ca{B}_n\}$. From Corollary \ref{coro:stronglist-union},(by substituting $\delta=\epsilon/2, \alpha = 0.5$), we know that such a matrix $\fl{A}$ exists with $d=O(\epsilon^{-1} \log (n/k))$ and $m=O(k\epsilon^{-1} \log (n/k))$ rows. 
The sensing matrix is designed such that each \(  (i,j)  \)-entry is set as
\(  \MeasMat*_{ij} = \MeasLUMat*_{ij} \AlgIndConst_{i,j}  \)
for each \(  (i,j) \in [m] \times [n]  \).
The remaining argument will focus on proving the correctness of the recovery algorithm (\ALGORITHM \ref{algo:superset-reals}). Recall that for a matrix $\fl{A}$, we denote the $i^{\s{th}}$ row and $j^{\s{th}}$ column of $\fl{A}$ by $\fl{A}^i$ and the $\fl{A}_j$ respectively. 
%
\paragraph{Correctness of recovery algorithm.} 
%
The correctness of the recovery algorithm will follow from the two main claims stated in \SECTION \ref{sec:superset}, which again are repeated below for convenience.
%
\begin{EnumerateInline}
\item \label{enum:pf:thm:superset:real:overview:1}
When the first \ForLoop (\LINES \ref{algo-line:superset-reals:for:1}-\ref{algo-line:superset-reals:end-for:1}) terminates,
the set \(  \SolutionSet  \) satisfies
\(  |\Supp(\Signal) \setminus \SolutionSet|\leq \epsilon \left\| \Signal \right\|_{0} - 1  \)
and likewise
\(  |\SolutionSet \setminus \Supp(\Signal)|\leq \epsilon \left\| \Signal \right\|_{0} - 1  \).
%
%
\item \label{enum:pf:thm:superset:real:overview:2}
Subsequently, when the second \ForLoop (\LINES \ref{algo-line:superset-reals:for:2}-\ref{algo-line:superset-reals:end-for:2})
terminates, the set \(  \SolutionSet  \) satisfies
\(  \Supp(\Signal) \subseteq \SolutionSet \)
and
\(  |\SolutionSet \setminus \Supp(\Signal)|\leq \epsilon \left\| \Signal \right\|_{0} - 1
    < \epsilon \left\| \Signal \right\|_{0}  \).
\end{EnumerateInline}
%
\par 
%
Let us begin with the first claim, \ref{enum:pf:thm:superset:real:overview:1}.
Suppose a column \(  j \in [n] \setminus \Supp(\Signal)  \) has the property that
\(  | \Supp(\MeasCol_{j}) \setminus \bigcup_{j' \in \Supp(\Signal)} \Supp(\MeasCol_{j'}) | \geq \frac{1}{2} \left\| \MeasCol_{j} \right\|_{0}  \).
%
In each corresponding row,
\(  i \in \Supp(\MeasCol_{j}) \setminus \bigcup_{j' \in \Supp(\Signal)} \Supp(\MeasCol_{j'})  \),
the response must be zero-valued since
\begin{align*}
  \Response*_{i}
  = \Sign \left( \langle \MeasRow^{i}, \Signal \rangle \right)
  = \Sign \left( \sum_{j' \in \Supp(\Signal) \cap \Supp(\MeasRow^{i})} \MeasMat*_{ij'} \Signal*_{j'} \right)
  = \Sign \left( \sum_{j' \in \emptyset} \MeasMat*_{ij'} \Signal*_{j'} \right)
  = \Sign(0)
  = 0
.\end{align*}
%
It follows that 
\(
  | \Supp(\MeasCol_{j}) \setminus \Supp(\Response) |
  =    | \Supp(\MeasCol_{j}) \cap ( [n] \setminus \Supp(\Response) ) |
  \geq | \Supp(\MeasCol_{j}) \setminus \bigcup_{j' \in \Supp(\Signal)} \Supp(\MeasCol_{j'}) |
  \geq \frac{1}{2} \left\| \MeasCol_{j} \right\|_{0}
\),
and hence the algorithm will not insert this \(  j\Th  \) coordinate into \(  \SolutionSet  \).
By the definition of the \(  \MeasLUParams  \)-\LUMName \(  \MeasLUMat  \), there are at most
\(  \epsilon \left\| \Signal \right\|_{0} - 1  \)
columns \(  j \in [n] \setminus \Supp(\Signal)  \) outside the support of \(  \Signal  \) for which
\(  | \Supp(\MeasCol_{j}) \setminus \bigcup_{j' \in \Supp(\Signal)} \Supp(\MeasCol_{j'}) | < \frac{1}{2} \left\| \MeasCol_{j} \right\|_{0}  \),
and hence the algorithm will insert no more than
\(  \epsilon \left\| \Signal \right\|_{0} - 1 < \epsilon \left\| \Signal \right\|_{0}  \)
``false positives'' into the set \(  \SolutionSet  \) while executing the first \ForLoop.
On the other hand, suppose a column \(  j \in \Supp(\Signal)  \) coinciding with the unknown signal vector's support has the property that
\(  | \Supp(\MeasCol_{j}) \setminus \bigcup_{j' \in \Supp(\Signal) \setminus \{j\}} \Supp(\MeasCol_{j'}) |
    \geq \frac{1}{2} \left\| \MeasCol_{j} \right\|_{0}  \).
%
Then, analogously, in each associated row,
\(  i \in \Supp(\MeasCol_{j}) \setminus \bigcup_{j' \in \Supp(\Signal)} \Supp(\MeasCol_{j'})  \),
the response must be nonzero-valued since
\begin{align*}
  \Response*_{i}
  &= \Sign \left( \langle \MeasRow^{i}, \Signal \rangle \right)
  = \Sign \left( \sum_{j' \in \Supp(\Signal) \cap \Supp(\MeasRow^{i})} \MeasMat*_{ij'} \Signal*_{j'} \right)
  = \Sign \left( \sum_{j' \in \{j\}} \MeasMat*_{ij'} \Signal*_{j'} \right)
  \\
  &= \Sign \left( \MeasMat*_{ij} \Signal*_{j} \right)
  = \Sign \left( \MeasLUMat*_{ij} \AlgIndConst_{i,j} \Signal*_{j} \right)
  = \Sign \left( \AlgIndConst_{i,j} \Signal*_{j} \right)
  \neq 0
.\end{align*}
%
Then, this \(  j\Th  \) column must have
\(  | \Supp(\MeasCol_{j}) \setminus \Supp(\Response) | < \frac{1}{2} \left\| \MeasCol_{j} \right\|_{0}  \).
%
Therefore, the algorithm inserts every such \(  j \in \Supp(\Signal)  \) with this property into the set \(  \SolutionSet  \).
By the definition of the \(  \MeasLUParams  \)-\LUMName \(  \MeasLUMat  \), there are at most
\(  \epsilon \left\| \Signal \right\|_{0} - 1 < \epsilon \left\| \Signal \right\|_{0}  \)
coordinates \(  j \in \Supp(\Signal)  \) which do not satisfy
\(  | \Supp(\MeasCol_{j}) \setminus \bigcup_{j' \in \Supp(\Signal) \setminus \{j\}} \Supp(\MeasCol_{j'}) |
    \geq \frac{1}{2} \left\| \MeasCol_{j} \right\|_{0}  \),
and thus the number of ``false negatives'' remaining outside of \(  \SolutionSet  \) by the end of the first \ForLoop cannot exceed
\(  \epsilon \left\| \Signal \right\|_{0} - 1 < \epsilon \left\| \Signal \right\|_{0}  \)%
---that is, the set \(  \SolutionSet  \) satisfies
\(
  |\Supp(\Signal) \setminus \SolutionSet|
  \leq \epsilon \left\| \Signal \right\|_{0} - 1
  < \epsilon \left\| \Signal \right\|_{0}
\),
as desired.
This completes the argument for \ref{enum:pf:thm:superset:real:overview:1}.
%
\par 
%
Proceeding to the second claim, \ref{enum:pf:thm:superset:real:overview:2},
recall that the second \ForLoop, (\LINES \ref{algo-line:superset-reals:for:2}-\ref{algo-line:superset-reals:end-for:2}), inserts into
\(  \SolutionSet  \) every \(  j \in [n] \setminus \SolutionSet  \) for which
\(  | \Supp(\MeasCol_{j}) \setminus (\Supp(\Response) \cup \bigcup_{j' \in \SolutionSet} \Supp(\MeasCol_{j'})) |
    < \frac{1}{2} \left\| \MeasCol_{j} \right\|_{0}  \).
%
Next, we will inductively argue that throughout the execution of the \ForLoop the number of ``false positives'' in the set
\(  \SolutionSet  \) never exceeds
\(  \epsilon \left\| \Signal \right\|_{0} - 1  \).
%
Without loss of generality, assume that after the first \ForLoop, the set \(  \SolutionSet  \) satisfies
\(  \Supp(\Signal) \subseteq \SolutionSet  \)
(this is the ``worst case'' scenario when considering a column \(  j \in [n] \setminus (\Supp(\Signal) \cup \SolutionSet)  \) since its insertion
into \(  \SolutionSet  \) would be a ``false positive'' caused by a large intersection
\(  \Supp(\MeasCol_{j}) \cap ( \bigcup_{j' \in \SolutionSet} \Supp(\MeasCol_{j'}) )  \)),
and additionally assume that
\(  [n] \setminus \SolutionSet = [n-|\SolutionSet|]  \)
so that the algorithm iterates over \(  j = 1, 2, 3, \dots, n-|\SolutionSet|  \).
Note that
\(  [n-|\SolutionSet|] \cap \Supp(\Signal) = \emptyset  \)
due to the stated assumptions.
Denote by \(  \SolutionSet^{0}  \) the set \(  \Set{C}  \) obtained from the first \ForLoop, and for
\(  j = 1, 2, 3, \dots, n-|\SolutionSet^{0}|  \), write \(  \SolutionSet^{j}  \) for the set obtained upon completing the
\(  j\Th  \) iteration of the second \ForLoop
(which either inserts \(  j  \) into the previous set, i.e.,
\(  \SolutionSet^{j} = \SolutionSet^{j-1} \cup \{j\}  \),
or simply duplicates the previous set, i.e.,
\(  \SolutionSet^{j} = \SolutionSet^{j-1}  \).)
%
We will induct on \(  j \in \{0,\dots,n-|\SolutionSet|\}  \).
For the base case, \(  j = 0  \), the set \(  \Set{C}^{0}  \) satisfies
\(  |\SolutionSet^{0} \setminus \Supp(\Signal)|\leq \epsilon \left\| \Signal \right\|_{0} - 1
    < \epsilon \left\| \Signal \right\|_{0}  \)
by the argument laid out for \ref{enum:pf:thm:superset:real:overview:1}.
Subsequently, suppose that for some \(  j \in [n-|\SolutionSet^{0}|]  \), each \(  j\Th'  \) iteration, \(  j' < j  \), satisfies
\(  |\SolutionSet^{j'} \setminus \Supp(\Signal)| \leq \epsilon \left\| \Signal \right\|_{0} - 1
    < \epsilon \left\| \Signal \right\|_{0}
\).
%
Now consider the \(  j\Th  \) iteration which constructs the set \(  \SolutionSet^{j}  \).
Note that the previous set \(  \SolutionSet^{j-1}  \) satisfies, by the inductive hypothesis:
\(  |\SolutionSet^{j-1} \setminus \Supp(\Signal)| \leq \epsilon \left\| \Signal \right\|_{0} - 1
    < \epsilon \left\| \Signal \right\|_{0}
\).
%
There are three scenarios to consider:
\begin{EnumerateInline}[label=(\alph*)]
\item \label{enum:pf:thm:superset:real:for-loop-2:j-notin-supp(x):1}
when the index \(  j  \) is not inserted into the solution set, i.e., \(  j \notin \SolutionSet^{j}  \),
or otherwise when \(  j  \) is inserted into the solution, i.e., \(  j \in \SolutionSet^{j}  \), and either
\item \label{enum:pf:thm:superset:real:for-loop-2:j-notin-supp(x):2}
\(  |\SolutionSet^{j-1} \setminus \Supp(\Signal)| \leq \epsilon \left\| \Signal \right\|_{0} - 2  \),
or
\item \label{enum:pf:thm:superset:real:for-loop-2:j-notin-supp(x):3}
\(  |\SolutionSet^{j-1} \setminus \Supp(\Signal)| = \epsilon \left\| \Signal \right\|_{0} - 1  \).
\end{EnumerateInline}
%
The first scenario \ref{enum:pf:thm:superset:real:for-loop-2:j-notin-supp(x):1} is trivial and follows directly from the inductive hypothesis
since \(  \SolutionSet^{j} = \SolutionSet^{j-1}  \).
Otherwise, we are left to handle \(  \SolutionSet^{j} = \SolutionSet^{j-1} \cup \{j\}  \).
In  the case of \ref{enum:pf:thm:superset:real:for-loop-2:j-notin-supp(x):2}, note that
\(  |\SolutionSet^{j} \setminus \Supp(\Signal)|
    \leq |( \SolutionSet^{j-1} \setminus \Supp(\Signal) ) \cup \{j\}|
    \leq \epsilon \left\| \Signal \right\|_{0} - 2 + 1
    = \epsilon \left\| \Signal \right\|_{0} - 1  \),
as desired.
The final scenario \ref{enum:pf:thm:superset:real:for-loop-2:j-notin-supp(x):3} will follow from contradiction.
%
\par 
%
Suppose scenario \ref{enum:pf:thm:superset:real:for-loop-2:j-notin-supp(x):3} occurs such that
\(  \SolutionSet^{j} = \SolutionSet^{j-1} \cup \{j\}  \)
and
\(  |\SolutionSet^{j-1} \setminus \Supp(\Signal)| = \epsilon \left\| \Signal \right\|_{0} - 1  \),
and consider the disjoint subsets
\(  \Set{S}, \Set{T} \subseteq [n]  \),
where
\(  \Set{S} = \SolutionSet^{j-1} \setminus \Supp(\Signal)  \)
and
\(  \Set{T} = \SolutionSet^{j-1} \cap \Supp(\Signal) = \Supp(\Signal)  \),
and let
\(  \Set{S'} = \Set{S} \cup \{j\}  \).
%
Note that
\(  \Set{S} \cap \Set{T} = \emptyset  \),
\(  |\Set{T}| = \left\| \Signal \right\|_{0}  \),
\(  |\Set{S}| = \epsilon \left\| \Signal \right\|_{0} - 1  \),
and
\(  |\Set{S'}| = |\Set{S} \cup \{j\}| = \epsilon \left\| \Signal \right\|_{0}  \).
%
Moreover, for all \(  j' \in \Set{S} \cap \SolutionSet^{0}  \),
\(  | \Supp(\MeasCol_{j'}) \setminus \bigcup_{j'' \in (\Set{S'} \cup \Set{T}) \setminus \{j'\}} \Supp(\MeasCol_{j''}) |
    \leq | \Supp(\MeasCol_{j'}) \setminus \bigcup_{j'' \in (\Set{S} \cup \Set{T}) \setminus \{j'\}} \Supp(\MeasCol_{j''}) |
    \leq | \Supp(\MeasCol_{j'}) \setminus \bigcup_{j'' \in \Supp(\Signal)} \Supp(\MeasCol_{j''}) |
    < \frac{1}{2} \left\| \MeasCol_{j'} \right\|_{0}  \),
where the last two inequalities follow from the conditions under which indices outside the support of \(  \Signal  \) can be inserted into the
solution by the first \ForLoop.
Likewise, for each \(  j' \in \Set{S} \setminus \SolutionSet^{0}  \),
\(  | \Supp(\MeasCol_{j'}) \setminus \bigcup_{j'' \in (\Set{S'} \cup \Set{T}) \setminus \{j'\}} \Supp(\MeasCol_{j''}) |
    \leq | \Supp(\MeasCol_{j'}) \setminus \bigcup_{j'' \in (\Set{S} \cup \Set{T}) \setminus \{j'\}} \Supp(\MeasCol_{j''}) |
    < \frac{1}{2} \left\| \MeasCol_{j'} \right\|_{0}  \)
by the condition in \LINE \ref{algo-line:superset-reals:cond:2} of the second \ForLoop.
Suppose that \(  j  \) is inserted into the solution by the second \ForLoop, i.e., \(  \SolutionSet^{j} = \SolutionSet^{j-1} \cup \{j\}  \).
Then, necessarily,
\(  | \Supp(\MeasCol_{j}) \setminus \bigcup_{j'' \in (\Set{S'} \cup \Set{T}) \setminus \{j\}} \Supp(\MeasCol_{j''}) |
    < \frac{1}{2} \left\| \MeasCol_{j'} \right\|_{0}  \),
implying that for all \(  j' \in \Set{S'}  \),
\(  | \Supp(\MeasCol_{j'}) \setminus \bigcup_{j'' \in (\Set{S'} \cup \Set{T}) \setminus \{j'\}} \Supp(\MeasCol_{j''}) |
    < \frac{1}{2} \left\| \MeasCol_{j'} \right\|_{0}  \)
%
However, now we have that
\(  |\Set{S'}| = \epsilon \left\| \Signal \right\|_{0}  \),
\(  |\Set{T}| = \left\| \Signal \right\|_{0}  \),
and
\(  \Set{S'} \cap \Set{T} = \emptyset  \),
yet there does not exist \(  j' \in \Set{S'}  \) such that
\(  | \Supp(\MeasCol_{j'}) \setminus \bigcup_{j'' \in (\Set{S'} \cup \Set{T}) \setminus \{j'\}} \Supp(\MeasCol_{j''}) |
    \geq \frac{1}{2} \left\| \MeasCol_{j'} \right\|_{0}  \),
contradicting the definition of the \(  \MeasLUParams  \)-\LUMName \(  \MeasLUMat  \).
By this contradiction, the two properties,
\(  \SolutionSet^{j} = \SolutionSet^{j-1} \cup \{j\}  \)
and
\(  |\SolutionSet^{j-1} \setminus \Supp(\Signal)| = \epsilon \left\| \Signal \right\|_{0} - 1  \),
are mutually exclusive, and therefore, scenario \ref{enum:pf:thm:superset:real:for-loop-2:j-notin-supp(x):3} is not possible.
%
Thus, for this \(  j\Th  \) iteration, the set \(  \SolutionSet^{j}  \) satisfies
\(  |\SolutionSet^{j} \setminus \Supp(\Signal)| \leq \epsilon \left\| \Signal \right\|_{0} - 1  \),
as desired.
By induction every \(  j\Th  \) iteration of the second \ForLoop, \(  j \in \{0,\dots,n-|\SolutionSet^{0}|\}  \), ensures
\(  |\SolutionSet^{j} \setminus \Supp(\Signal)| \leq \epsilon \left\| \Signal \right\|_{0} - 1  \),
as claimed.
%
\par 
%
Let us turn our attention to the coordinates in the support of the unknown signal vector \(  \Signal  \) which are excluded from the set
\(  \SolutionSet  \) after the first \ForLoop.
For clarity, again write \(  \SolutionSet^{0}  \) for the set \(  \SolutionSet  \) obtained by the first \ForLoop, and let
\(  \SolutionSet'  \) denote the final solution returned by the recovery algorithm.
Consider any coordinate \(  j \in \Supp(\Signal) \setminus \SolutionSet^{0}  \).
It can be shown by contradiction that the second \ForLoop necessarily inserts \(  j  \) into the solution set.
%
Suppose indirectly that this claim fails to hold for some \(  j \in \Supp(\Signal) \setminus \SolutionSet^{0}  \).
Let \(  \SolutionSet  \) be the set obtained from the previous iteration of the second \ForLoop, which remains unchanged after this iteration.
It follows that
\(  | \Supp(\MeasCol_{j}) \setminus (\Supp(\Response) \cup \bigcup_{j' \in \SolutionSet} \Supp(\MeasCol_{j'})) |
    \geq \frac{1}{2} \left\| \MeasCol_{j} \right\|_{0}  \)
by the conditional statement in the second \ForLoop (\LINE \ref{algo-line:superset-reals:cond:2}).
Additionally, notice that
\(  \frac{1}{2} \left\| \MeasCol_{j} \right\|_{0} = \frac{\Variable{d}}{2}
    \geq \epsilon^{-1} \log(n/k)
    \geq \sqrt{k \log(n/k)}
    \geq \epsilon k
    \geq \epsilon \| \Signal \|_{0}  \).
%
Let
\(  \SubMeasMat \in \R^{\Variable{r} \times \Variable{s}}  \)
and
\(  \SubSignal \in \R^{\Variable{s}}  \)
be the submatrix of \(  \MeasMat  \) and subvector of \(  \Signal  \) obtained, respectively, by restricting \(  \MeasMat  \) to the
\(  (i,j')  \)-entries indexed by
\(  (i,j')
    \in
    ( \Supp(\MeasCol_{j}) \setminus (\Supp(\Response) \cup \bigcup_{j'' \in \SolutionSet} \Supp(\MeasCol_{j''})) )
    \times (\Supp(\Signal) \setminus \SolutionSet)  \)
and by restricting \(  \Signal  \) to the \(  j\Th'  \) entries indexed by
\(  \Supp(\Signal) \setminus \SolutionSet  \),
where
\(  \Variable{r} = | \Supp(\MeasCol_{j}) \setminus (\Supp(\Response) \cup \bigcup_{j'' \in \SolutionSet} \Supp(\MeasCol_{j''})) |
    \geq \frac{1}{2} \left\| \MeasCol_{j} \right\|_{0}
    \geq \epsilon \left\| \Signal \right\|_{0}  \)
and
\(  \Variable{s} = | \Supp(\Signal) \setminus \SolutionSet | \leq \epsilon \left\| \Signal \right\|_{0}  \).
For simplicity, we will index the \(  (i,j')  \)-entries of \(  \SubMeasMat  \) and \(  j\Th'  \) entries of \(  \SubSignal  \) by
their respective counterparts for \(  \MeasMat  \) and \(  \Signal  \).
This yields the system
\(  \SubMeasMat \SubSignal = \Vec{0}  \),
implying that \(  \SubSignal \in \Ker(\SubMeasMat)  \).
By \COROLLARY \ref{cor:misc:algebraic-and-linear-independence:kernel}, the vector \(  \SubSignal  \) must satisfy
\(  \SubSignal*_{j} = 0  \),
yet by our initial choice, \(  j \in \Supp(\Signal)  \) with \(  \SubSignal*_{j} \neq 0  \)---a contradiction.
Therefore, the second \ForLoop must have inserted this coordinate \(  j  \) into the solution set, and hence \(  j \in \SolutionSet'  \) necessarily.
By extension, the recovery algorithm necessarily inserts every such coordinate \(  j \in \Supp(\Signal) \setminus \SolutionSet^{0}  \)
into the solution set during the second \ForLoop, such that the final solution set satisfies
\(  \SolutionSet' \supseteq \SolutionSet^{0} \cup ( \Supp(\Signal) \setminus \SolutionSet^{0} ) \supseteq \Supp(\Signal)  \).
In summary, from the above arguments, it follows that the final set \(  \SolutionSet  \) returned by the recovery algorithm satisfies
\(  \Supp(\Signal) \subseteq \SolutionSet  \)
and
\(  |\SolutionSet \setminus \Supp(\Signal)|\leq \epsilon \left\| \Signal \right\|_{0} - 1
    < \epsilon \left\| \Signal \right\|_{0}  \)
(which also implies
\(  |\SolutionSet| = |\Supp(\Signal)| + |\SolutionSet \setminus \Supp(\Signal)|
    < \left\| \Signal \right\|_{0} + \epsilon \left\| \Signal \right\|_{0} = (1+\epsilon) \left\| \Signal \right\|_{0}  \))
and hence is a valid solution for the \(  \epsilon  \)-superset recovery of \(  \Signal  \).
\end{proof}

\end{document}